\begin{document}

\notoc

\title{Page curves and typical entanglement in linear optics}
\date{May~17,~2023}

\author{Joseph~T.~Iosue}\orcid{0000-0003-3383-1946}
\email{jtiosue@umd.edu}
\affiliation{\JQI}
\affiliation{\QUICS}

\author{Adam~Ehrenberg}\orcid{0000-0002-3167-6519}
\affiliation{\JQI}
\affiliation{\QUICS}

\author{Dominik~Hangleiter}\orcid{0000-0002-4766-7967}
\affiliation{\QUICS}
\affiliation{\JQI}

\author{Abhinav~Deshpande}\orcid{0000-0002-6114-1830}
\affiliation{Institute~for~Quantum~Information~and~Matter,~California~Institute~of~Technology,~Pasadena,~CA~91125,~USA}

\author{Alexey~V.~Gorshkov}\orcid{0000-0003-0509-3421}
\affiliation{\JQI}
\affiliation{\QUICS}

\begin{abstract}
Bosonic Gaussian states are a special class of quantum states in an infinite dimensional Hilbert space that are relevant to universal continuous-variable quantum computation as well as to near-term quantum sampling tasks such as Gaussian Boson Sampling. 
In this work, we study entanglement within a set of squeezed modes that have been evolved by a random linear optical unitary. 
We first derive formulas that are asymptotically exact in the number of modes for the R\'enyi-2 Page curve (the average R\'enyi-2 entropy of a subsystem of a pure bosonic Gaussian state) and the corresponding Page correction (the average information of the subsystem) in certain squeezing regimes. 
We then prove various results on the typicality of entanglement as measured by the R\'enyi-2 entropy by studying its variance. 
Using the aforementioned results for the R\'enyi-2 entropy, we upper and lower bound the von Neumann entropy Page curve and prove certain regimes of entanglement typicality as measured by the von Neumann entropy. 
Our main proofs make use of a symmetry property obeyed by the average and the variance of the entropy that dramatically simplifies the averaging over unitaries. 
In this light, we propose future research directions where this symmetry might also be exploited.
We conclude by discussing potential applications of our results and their generalizations to Gaussian Boson Sampling and to illuminating the relationship between entanglement and computational complexity.
\end{abstract}

\maketitle

\section{Introduction}
\label{sec:intro}

In his pioneering paper \cite{page_average_1993}, Page considered a Haar-random pure state in an $N$-dimensional Hilbert space. He conjectured an exact formula for the average entanglement entropy of the reduced density matrix on an $rN$-dimensional subspace, where $r \in [0,1]$. His conjecture was proven soon after \cite{foong_proof_1994,sanchez-ruiz_simple_1995,sen_average_1996}. This average, which is itself a function of $r$ and $N$, is now known as the \textit{Page curve}. Page also defined the \textit{average information of the subsystem} as the difference between the maximum of the entropy and the average entropy. This quantity has since come to be known as the \textit{Page correction}. Since then, the Page curve and correction have found applications in a variety of areas, such as black holes \cite{page_information_1993,hayden_black_2007,bianchi_entanglement_2015}, quantum information theory \cite{hayden_aspects_2006,hosur_chaos_2016}, and statistical mechanics \cite{fujita_page_2018,lu_renyi_2019,nakagawa_universality_2018,vidmar_entanglement_2017,vidmar_entanglement_2017-1,hackl_average_2019,goldstein_canonical_2006,dalessio_quantum_2016}, among others.
As a next step, many works considered the typical deviation of the entanglement entropy from its average. In particular, a system is said to exhibit \textit{typical entanglement} if there is a vanishingly small probability that a random state has entanglement bounded away from the average. This phenomenon was introduced and studied in a variety of systems \cite{hayden_randomizing_2004,collins_matrix_2013,hastings_random_2015,garnerone_typicality_2010,popescu_entanglement_2006,gross_most_2009,bremnerAreRandomPure2009,bianchi_typical_2019,bianchi_page_2021,dahlsten_entanglement_2014}.

Entanglement is a key feature of quantum physics and can be used as a resource to complete various tasks, such as teleportation, key distribution, dense coding, and many others \cite{nielsen_quantum_2010,bengtssonGeometryQuantumStates2008,amicoEntanglementManybodySystems2008,horodeckiQuantumEntanglement2009,wilde_quantum_2017}.
Furthermore, entanglement is a necessary ingredient for quantum advantage, since quantum computations with little entanglement can be simulated efficiently classically \cite{jozsaRoleEntanglementQuantumcomputational2003,vidal_efficient_2003,verstraete_matrix_2004}.
One can define quantum advantage using the language of complexity as using quantum resources to perform a task that is classically hard but quantumly easy. One such task is sampling from the output probability distribution of a Gaussian Boson Sampling experiment \cite{lundBosonSamplingGaussian2014,hamilton_gaussian_2017,kruseDetailedStudyGaussian2019,deshpande_quantum_2022,grierComplexityBipartiteGaussian2022}.  
The general relationship between entanglement and complexity is largely unknown, but at least some entanglement is necessary for classical hardness. 
Similarly, in the setting of Boson Sampling, at least some amount of non-Gaussianity is necessary for sampling hardness \cite{chabaudResourcesBosonicQuantum2022}.
On the other hand, too much entanglement can cause a state to be useless for computation. In particular, the typical (over the Haar measure) finite-dimensional quantum state is too entangled to be useful for computation \cite{gross_most_2009,bremnerAreRandomPure2009}. Thus, studying average and typical entanglement is necessary for learning about the useful part of entanglement and what utility random states have.

Entanglement in infinite-dimensional quantum states generated by bosonic Gaussian inputs and Gaussian operations has found direct application in areas such as quantum sensing \cite{zhuangDistributedQuantumSensing2018,polinoPhotonicQuantumMetrology2020,ohOptimalDistributedQuantum2020,malitestaDistributedQuantumSensing2021,barbieriOpticalQuantumMetrology2022} and quantum communication \cite[Ch.~12]{adessoEntanglementGaussianStates2007}. Furthermore, the original motivation for studying the finite-dimensional Page curve was to study the black hole information paradox \cite{page_average_1993,page_information_1993}. Since the degrees of freedom around black holes are inherently infinite-dimensional bosonic (e.g.~photonic) modes, an infinite-dimensional bosonic Page curve may help to understand black hole information dynamics \cite[Ch.~14]{adessoEntanglementGaussianStates2007}.

\textit{Our contributions.}
In this work, we study Page curves and the typicality of entanglement in continuous-variable bosonic Gaussian states. Specifically, we compute average and typical entanglement quantities averaged over passive (energy-conserving) Gaussian unitaries, also called linear optical unitaries, with a fixed initial product state of squeezed vacuum states on $n$ modes.
Indeed, this setup is exactly that of a Gaussian Boson Sampling experiment, which is of great recent interest due to experimental claims of quantum advantage via Gaussian Boson Sampling \cite{zhong_quantum_2020,zhongPhaseProgrammableGaussianBoson2021,madsen_quantum_2022}.

We describe this setup in more detail in \cref{sec:setup}. Then in \cref{sec:expval}, we begin by studying the regime when all the initial squeezing strengths are equal, and denote this squeezing strength by $s$. We derive an analytic expression for the average R\'enyi-2 entropy of a subsystem of $k$ modes as a function of $s$, $n$, and $k$ that is exact asymptotically in $n$ for arbitrary values of $s$ and $k$. Using this expression, we exactly compute the corresponding Page correction. These results are summarized in \cref{fig:page-curve}. 
In \cref{sec:variance}, we then study the presence of typical entanglement for various scalings of $k$ with $n$. 
When the distance between the entanglement of a random state and the average entanglement value vanishes additively (resp.~multiplicatively), we say that entanglement is strongly (resp.~weakly) typical.
We prove that entanglement as measured by the R\'enyi-2 entropy is weakly typical for any $k$ and strongly typical whenever $k\in\littleo{n}$. We further show that entanglement as measured by the von Neumann entropy is weakly typical whenever $k \in \littleo{n}$. Finally, in \cref{sec:unequal-squeezing}, we generalize our discussion to the regime when the initial squeezing strengths are not necessarily equal. We show that, if a certain conjecture is true, then the R\'enyi-2 and von Neumann entanglement entropies are both weakly typical whenever $k\in \littleo{n}$.

\textit{Prior work.}
Refs.~\cite{bianchi_page_2021,bhattacharjeeEigenstateCapacityPage2021} studied Page curves and entanglement in fermionic Gaussian states.
Serafini et al.\ have considered typical entanglement in bosonic Gaussian states \cite{serafini_canonical_2007,serafini_teleportation_2007}, where they defined two measures, namely the microcanonical and canonical measures, on the set of all $n$-mode bosonic Gaussian states and averaged with respect to these measures. Roughly, averaging over the microcanoncial measure corresponds to integrating over all bosonic Gaussian states up to a fixed bounded total energy, and averaging over the canonical measure corresponds to integrating over all bosonic Gaussian states with a Boltzmann weight factor decaying with the energy of the state.
Fukuda and Koenig generalized these results by studying entanglement averaged over passive Gaussian unitaries with a fixed initial product state of squeezed vacuum states on $n$ modes \cite{fukuda_typical_2019}. Indeed, their setup is exactly the one we consider in this work. As noted in Ref.~\cite{fukuda_typical_2019}, the measure defined by fixing squeezing strengths and then applying a random passive Gaussian unitary generalizes both the microcanonical and canonical measures, as the latter two measures can be expressed as convolutions of the former with certain distributions on the set of all squeezing configurations.

Serafini et al.~and Fukuda and Koenig consider entanglement in a subsystem of $k$ bosonic modes when $k \in \littleo{n}$; roughly, Serafini et al.~allow $k \in \bigO{1}$ and Fukuda and Koenig allow $k \in \littleo{n^{1/3}}$. To the best of our knowledge, there are currently no results on average entanglement or typical entanglement in the regime when $k \in \bigOmega{n^{1/3}}$. This is what we address in our work.
Concerning typical entanglement, we emphasize that our results do not supplant the results of Ref.~\cite{fukuda_typical_2019} in general, but rather only in certain regimes. In particular, we primarily consider the situation of equal squeezing strengths, whereas their results pertain to the situation of arbitrary squeezing. Similarly, we primarily consider the R\'enyi-2 entropy, whereas their results apply to both the R\'enyi-2 and von Neumann entropies. We summarize our work and that of Ref.~\cite{fukuda_typical_2019} on typical entanglement in bosonic Gaussian states in \cref{tab:typicality}.

For general quantum states, the von Neumann entropy has certain properties, such as \textit{strong subadditivity}, that the R\'enyi-$\alpha$ entropies do not. For other such properties, we refer to Ref.~\cite{wilde_quantum_2017}. Because of this, the von Neumann entropy is generally considered a better measure of entanglement than the R\'enyi-$\alpha$ entropies. 
Notably, however, it has been shown that the R\'enyi-2 entropy is special when restricting to bosonic Gaussian states. For example, it was recently proven that for bosonic Gaussian states, the R\'enyi-2 entropy also obeys strong subadditivity \cite{adesso_measuring_2012,camilo_strong_2019}. 
The R\'enyi-2 entropy is also equal, up to a constant, to the phase-space Shannon sampling entropy $-\int W(\bm x,\bm p)\log W(\bm x,\bm p)\Dd{n}{\bm x}\Dd{n}{\bm p}$ of the Wigner distribution $W(x,p)$ of the state for Gaussian states \cite{adesso_measuring_2012}. Here the position vector $\bm x$ and momentum vector $\bm p$ parameterize the phase space of $n$ oscillator modes. The phase-space Shannon sampling entropy has an operational meaning in terms of sampling via homodyne detections \cite{adesso_measuring_2012,buzek_sampling_1995}.
Furthermore, it has been shown that for pure tripartite Gaussian states, the R\'enyi-2 entropy obeys a strong subadditivity inequality that is stronger than that for the von Neumann entropy \cite{adesso_strong_2016}. 
Finally, correlation measures for Gaussian states based on the R\'enyi-2 entropy have been found that have no counterpart when using the von Neumann entropy \cite{lami_schur_2016}. 
As noted in Ref.~\cite{adesso_measuring_2012}, the aforementioned results are ``planting the seeds for a full Gaussian quantum information theory based on the R\'enyi-2 entropy.'' Our work is focused primarily on entanglement in bosonic Gaussian states as measured by the R\'enyi-2 entropy and can thus be viewed as planting a few more seeds.

\section{Setup}
\label{sec:setup}

In this work, we consider a linear optical system of $n$ modes, where each mode is an independent quantum harmonic oscillator. We will restrict our attention to bosonic Gaussian states. We refer to Refs.~\cite{serafini_quantum_2017,fukuda_typical_2019} for background on the theory of Gaussian states, and we provide the necessary details to understand this paper in \cref{ap:gaussian-states}.

Consider an $n$-mode mixed state $\rho$ in the Hilbert space of square integrable wavefunctions $\calH = L^2(\bbR)^{\otimes n}$. For each $i \in \set{1,\dots, n}$, let $\hat x_i$ and $\hat p_i$ be the position and momentum quadrature operators on the $i^{\rm th}$ mode. For each $i$, define $\hat r_{i} \coloneqq \hat x_i$ and $\hat r_{n+i} \coloneqq \hat p_i$. $\rho$ is a Gaussian state if there is a $\beta > 0$ and a Hamiltonian $\hat H$ that is at most quadratic in the quadrature operators such that $\rho$ is the thermal state $\rho \propto \e^{-\beta \hat H}$. Since $\hat H$ contains only linear and quadratic quadrature terms in $\hat r_i$, $\rho$ is fully characterized by its first and second moments, $\Tr(\rho \hat r_i)$ and $\sigma_{ij} = \frac{1}{2}\Tr[\rho (\hat r_i \hat r_j + \hat r_j \hat r_i)] - \Tr(\rho \hat r_i)\Tr(\rho \hat r_j)$. $\sigma$ is called the \textit{covariance matrix} of the state $\rho$.

A unitary $U$ is Gaussian if there exists a Hamiltonian $\hat H$ that is at most quadratic in the quadrature operators such that $U = \e^{-\i \hat H}$. One can show that a Gaussian unitary maps Gaussian states to Gaussian states. Any Gaussian state can be generated by acting on the vacuum state with a Gaussian unitary.  The set of all Gaussian unitaries is isomorphic to the real symplectic group of $2n\times 2n$ matrices $\Sp(2n)$. By the Euler decomposition theorem of a symplectic matrix, it follows that any pure Gaussian state can be generated by acting on an initial product state of squeezed vacuum modes with a passive (energy conserving) Gaussian unitary. The set of all passive Gaussian unitaries is isomorphic to $\Sp(2n) \cap \O(2n) \cong \U(n)$, where $\O(2n)$ is the orthogonal group of $2n\times 2n$ matrices and $\U(n)$ is the unitary group of $n\times n$ matrices. $\Sp(2n)$ is not compact and thus does not have a finite Haar measure to average over. Physically, this is due to the fact that the Gaussian operation of squeezing can take on unbounded values in $\bbR$. However, $\U(n)$ is compact, and it is hence well-defined to consider uniformly sampling from $\U(n)$ according to the finite Haar measure. 

We are therefore motivated to consider the following notion of a random pure Gaussian state. Namely, we initialize the $i^{\rm th}$ mode to be in a squeezed vacuum state with fixed squeezing parameter $s_i \in \bbR$ for each $i \in \set{1,\dots, n}$. We then randomly sample a passive Gaussian unitary from $\U(n)$ and apply it to the $n$ modes. This random state is thus characterized by a fixed choice of the squeezing parameters $s_i$ and a Haar-random choice of a passive Gaussian unitary $U$. Understanding the properties of such random states is of great interest, particularly because Gaussian Boson Sampling experiments rely on precisely those state preparations. 
For squeezing parameters $s_i$ for $i\in\set{1,\dots,n}$, the total expected number of bosons of the state on the $n$ modes is $\sum_{i=1}^n \sinh^2(s_i)$. For simplicity, we will begin by considering the case when all the squeezing parameters are equal; $s_i = s$ for each $i$ for some fixed $s$. In the case of equal squeezings, the average total boson number per mode is $\sinh^2 s$. The general case of unequal squeezings will be discussed in \cref{sec:unequal-squeezing}.

The $n$ modes are then partitioned into two groups --- one group of $k = r n$ modes for some $0\leq r \leq 1$, and one group of $n-k = (1-r)n$ modes. We then compute the entropy of the reduced state of the $k$ modes, or equivalently, since we are considering pure states, the entropy of the reduced state of the $n-k$ modes \cite{nielsen_quantum_2010}. Let the density matrix of the reduced state be $\rho$. For the entropy function, we will be primarily focused on the R\'enyi-2 entropy $S_2 = -\log \Tr \rho^2$, although we will also prove various statements on the von Neumann entropy $S_1 = -\Tr \rho\log\rho$ as well. The R\'enyi-2 entropy takes the elegant form $S_2 = \frac{1}{2}\log\det \sigma = \frac{1}{2}\Tr \log \sigma$, where $\sigma$ is the covariance matrix of $\rho$ \cite{serafini_quantum_2017}. For the Gaussian state generated by the unitary $U \in \U(n)$ acting on the squeezed product state with squeezing strength $s$ on each mode, the R\'enyi-2 entropy of the subsystem of $k$ modes is denoted by $S_2(U)$, and the von Neumann entropy by $S_1(U)$, where the dependence of $S_1(U)$ and $S_2(U)$ on $r$, $n$, and $s$ is implicit. We will derive statistical properties of $S_2(U)$ and $S_1(U)$ in the asymptotic limit $n\to\infty$.

Our main results involve the R\'enyi-2 entropy, and the following proposition allows us to use many of the results on the R\'enyi-2 entropy to bound the von Neumann entropy. Furthermore, we will also make use of the maximum R\'enyi-2 entropy to prove our later results on the Page correction.

\medskip

\begin{proposition}
\label{prop:s1-bound}
Let $s \in \bbR$ and $r\in[0,1]$. Then for each $j\in\set{1,2}$,
\begin{equation}
    \label{eq:entropy-maximum}
    \max_{U\in\U(n)}S_j(U) = n\min(r,1-r) h_j(\cosh(2s)),
\end{equation}
where $h_1(x) = \frac{x+1}{2}\log\frac{x+1}{2}-\frac{x-1}{2}\log\frac{x-1}{2}$ and $h_2(x) = \log x$. Furthermore, for any fixed $U \in \U(n)$, $S_1(U) \geq S_2(U)$ and
\begin{equation}
    \label{eq:vonNeumann-upper-bound}
    S_1(U) < S_2(U) + n\min(r,1-r)(1-\log 2).
\end{equation}
\end{proposition}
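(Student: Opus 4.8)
The plan is to reduce everything to a single-mode calculation by exploiting the symplectic normal form. For a pure Gaussian state, the reduced covariance matrix $\sigma$ of the $k$-mode subsystem can be brought by a local symplectic transformation to Williamson form, i.e.\ $\sigma$ has symplectic eigenvalues $\nu_1,\dots,\nu_k \ge 1$ (with $\min(k,n-k)$ of them possibly equal to $1$, corresponding to pure local modes that carry no entropy). Both entropies are additive over these symplectic modes: $S_j(U) = \sum_{\ell} g_j(\nu_\ell)$, where $g_2(\nu) = \log\nu$ and $g_1(\nu) = \frac{\nu+1}{2}\log\frac{\nu+1}{2} - \frac{\nu-1}{2}\log\frac{\nu-1}{2}$. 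So the first step is to recall/cite this additive decomposition and the single-mode entropy formulas from the Gaussian-states background (\cref{ap:gaussian-states}).

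For \eqref{eq:entropy-maximum}, I would argue that each symplectic eigenvalue $\nu_\ell$ is bounded above by the largest symplectic eigenvalue achievable, which is controlled by the input squeezing: since the global state is obtained from squeezed vacua with parameter $s$ via a passive unitary, the global covariance matrix has all its symplectic structure governed by the numbers $e^{\pm 2s}$, and a standard interlacing/majorization argument shows $\nu_\ell \le \cosh(2s)$ for every $\ell$. There are at most $\min(k,n-k)$ nontrivial symplectic eigenvalues (the rank of the off-diagonal block), so $S_j(U) \le \min(k,n-k)\, g_j(\cosh 2s) = n\min(r,1-r)\, h_j(\cosh 2s)$ since $g_j = h_j$. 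Then I would exhibit an explicit $U$ (a beamsplitter network pairing the first $\min(k,n-k)$ inside-modes with $\min(k,n-k)$ outside-modes into maximally-entangled two-mode squeezed states) that saturates this bound, giving equality. The monotonicity $x\mapsto h_j(x)$ increasing on $[1,\infty)$ is an easy single-variable check.

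For the two inequalities in the furthermore part, both are pointwise-in-$U$ and, by additivity, reduce to single-mode statements about $g_1$ versus $g_2$ on $[1,\infty)$. The bound $S_1(U)\ge S_2(U)$ is the standard fact that the von Neumann entropy dominates any Rényi-$\alpha$ entropy with $\alpha>1$; at the single-mode level it is $g_1(\nu)\ge g_2(\nu) = \log\nu$, which I would verify by checking equality (in the limit) at $\nu=1$ and comparing derivatives, or simply by noting $g_1$ is the von Neumann entropy of a thermal state with the same purity. For \eqref{eq:vonNeumann-upper-bound} the claim reduces to showing $g_1(\nu) - g_2(\nu) < 1 - \log 2$ for all $\nu \ge 1$, with the bound being approached only as $\nu\to\infty$; indeed one computes $g_1(\nu) - \log\nu \to 1-\log 2$ as $\nu\to\infty$ and shows the difference is monotonically increasing (again a one-variable derivative computation), so it is strictly below $1-\log 2$ for finite $\nu$. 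Summing the strict single-mode bound over the at most $\min(k,n-k) = n\min(r,1-r)$ nontrivial modes — and noting at least one such mode is nontrivial unless the state is trivially separable, in which case the inequality is still strict since $0 < n\min(r,1-r)(1-\log 2)$ whenever $r\notin\{0,1\}$, and trivial when it is — yields the stated strict bound.

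The main obstacle is the uniform single-mode estimate $g_1(\nu) - g_2(\nu) < 1-\log 2$ together with the fact that the constant is sharp only asymptotically: one must show the difference is monotone in $\nu$ so that the supremum is the $\nu\to\infty$ limit and is never attained, which requires care with the derivative of $\frac{\nu+1}{2}\log\frac{\nu+1}{2} - \frac{\nu-1}{2}\log\frac{\nu-1}{2}$ near $\nu=1$ (where the second term's derivative diverges but the combination stays finite). Establishing the $\nu_\ell \le \cosh(2s)$ upper bound on symplectic eigenvalues cleanly — via the fact that conjugation by a passive unitary does not change the symplectic spectrum of the full state and a principal-submatrix interlacing argument in the symplectic setting — is the other place where I would need to be careful to invoke the right lemma rather than hand-wave.
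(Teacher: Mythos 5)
Your proposal is essentially correct and reaches all three claims, but your route to \cref{eq:entropy-maximum} is genuinely different from the paper's. The paper proves the $j=2$ case by expanding $S_2(U) = nr\log\cosh(2s) - \sum_\ell \frac{1}{2\ell}\tanh^{2\ell}(2s)\Tr W^\ell$ with $W = F^\dag F$ positive semidefinite, so that $\Tr W^\ell\ge 0$ gives the upper bound, and then constructs a unitary with $W=0$ (equivalently, $k$ orthonormal vectors with $\braket{\bar\psi_i\vert\psi_j}=0$); the $j=1$ case is then obtained by citing that the R\'enyi-$\alpha$ entropies of a Gaussian state increase with $\det\sigma$, so the same unitary maximizes $S_1$. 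You instead bound each symplectic eigenvalue by $\cosh(2s)$ and count the nontrivial modes, which treats $j=1$ and $j=2$ uniformly and is arguably more self-contained for the von Neumann case. Your mode-pairing witness is the paper's witness in disguise: $W=0$ is exactly the statement that the reduced covariance matrix equals $\cosh(2s)\bbI_{2k\times 2k}$. The ``furthermore'' part coincides with the paper's proof (monotonicity of $h_1-h_2$ with limit $1-\log 2$).

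The one soft spot --- which you flag yourself --- is the justification of $\nu_\ell\le\cosh(2s)$. A generic interlacing or spectral-norm argument does not deliver it: the reduced covariance matrix is a principal submatrix of a matrix with spectrum $\{\e^{\pm 2s}\}$, and interlacing together with $\norm{\i\Omega\sigma}\le\norm{\sigma}$ only gives $\nu_\ell\le \e^{2\abs{s}}$, which is strictly weaker than $\cosh(2s)$. The bound is nevertheless true in the equal-squeezing setting of this proposition, and follows from the specific structure $\sigma(U)=\cosh(2s)\bbI_{2k\times 2k}+\sinh(2s)M$ with $M$ as in \cref{eq:M-matrix}: $M$ is symmetric with $\norm{M}\le 1$ and anticommutes with the symplectic form, so $(\i\Omega\sigma)^2=\cosh^2(2s)\,\bbI-\sinh^2(2s)\,M^2$, and every symplectic eigenvalue satisfies $\nu_\ell^2=\cosh^2(2s)-\sinh^2(2s)\mu_\ell$ with $\mu_\ell\in[0,1]$. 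With that lemma supplied, your argument closes. (As a minor aside, \cref{eq:vonNeumann-upper-bound} degenerates to $0<0$ at $r\in\{0,1\}$; this edge case is present in the statement itself and affects the paper's proof equally.)
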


\medskip

The full proof of \cref{prop:s1-bound} is given in \cref{ap:entropy-bounds}. A tighter version of \cref{eq:vonNeumann-upper-bound} was originally derived in Ref.~\cite[Eq.~15]{adesso_strong_2016}, but we will only need this weaker version. For completeness, we provide a different proof of \cref{eq:vonNeumann-upper-bound}. \cref{eq:entropy-maximum} is perhaps implicit in various results in Refs.~\cite{serafini_canonical_2007,serafini_teleportation_2007,khannaMaximumEntanglementSqueezed2007,serafini_quantum_2017,stanisicGeneratingEntanglementLinear2017}, but we have not found it directly stated anywhere. 
The lower bound $S_1(U) \geq S_2(U)$ is a general property of the R\'enyi entropies \cite{nielsen_quantum_2010,wilde_quantum_2017}, whereas the upper bound holds only for Gaussian states. While trivial upper bounds also exist in the general case, \cref{eq:vonNeumann-upper-bound} is tighter. Intuitively one may view this as an extension of the fact that for Gaussian states, the R\'enyi-2 and von Neumann entropies share many useful properties, such as strong subadditivity and others mentioned in \cref{sec:intro}.

\section{Expectation value}
\label{sec:expval}

Our first results concern the R\'enyi-2 Page curve, which is the expectation value of the R\'enyi-2 entanglement entropy as a function of the partition size ratio $r = k/n$ and the squeezing strength $s$. Recall that the dependence of $S_2(U)$ on $r$, $n$, and $s$ is implicit. We find an exact formula as an infinite series for the Page curve in the limit $n\to\infty$.

\medskip

\begin{theorem}[R\'enyi-2 Page curve]
\label{thm:page-curve}
Fix $s \in \bbR$ and $r \in [0,1]$. Let $C_\ell\coloneqq \frac{1}{\ell+1}\binom{2\ell}{\ell}$ be the $\ell^{\rm th}$ Catalan number, and let ${}_{2}F_1$ be the hypergeometric function \cite{petkovsek1996,bailey_generalized_1964,zudilin_hypergeometric_2019,slater_generalized_1966,wolfram_hypergeometric2f1}. Then
\begin{widetext}
\begin{align}
    \lim_{n\to\infty} \Expval_{U \in \U(n)} \frac{1}{n} S_2(U)
    &= r \log \cosh(2s) - \sum _{\ell=1}^\infty r^{\ell+1} \tanh^{2 \ell}(2 s) \frac{C_\ell}{2 \ell} \, _2F_1(1-\ell,\ell;\ell+2;r)\label{eq:S2series1}\\
    &= r \log \cosh(2s) + \sum_{d=2}^\infty r^d \sum_{\ell=\ceil{d/2}}^{d-1} (-1)^{d-\ell} \tanh^{2\ell}(2s) \frac{(2 \ell-1)!}{d (d-1) (d-\ell-1)! (2\ell-d)! \ell!}. \label{eq:S2series2}
\end{align}
\end{widetext}
This function is symmetric under $r \mapsto 1-r$, and hence the formula holds when $r$ is replaced with $\min(r,1-r)$. Furthermore, asymptotically in $n$, $\Expval_{U\in\U(n)} S_2(U) = n \alpha(s,r)-\lambda(s,r) + \littleo{1}$, where $\alpha(s,r)$ is precisely $\lim_{n\to\infty}\Expval_{U\in\U(n)} \frac{1}{n}S_2(U)$ given above, $\lambda(s,r)$ is independent of $n$, and $\littleo{1}$ denotes terms that go to zero as $n\to\infty$.
\end{theorem}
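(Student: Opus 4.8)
\emph{Plan and reduction to traces.}
The strategy is to reduce $\tfrac1n\Expval S_2(U)$ to a convergent series of Weingarten integrals, evaluate each one asymptotically in $n$, and exchange the limit with the sum. With all squeezings equal to $s$, the initial covariance matrix is $\cosh(2s)\,I_{2n}+\sinh(2s)\,Z$ with $Z=\operatorname{diag}(I_n,-I_n)$. Conjugating by the symplectic-orthogonal ($\cong\U(n)$) representation $S$ of $U$ and restricting to the $k$ retained modes gives $\sigma_A=\cosh(2s)\,I_{2k}+\sinh(2s)\,M_A$, where $M_A$ is the $2k\times 2k$ principal submatrix of $M:=SZS^{T}$. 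Since $\|M_A\|\le 1$ and $|\tanh(2s)|<1$, one may expand
\[
  S_2(U)=\tfrac12\Tr\log\sigma_A=k\log\cosh(2s)+\tfrac12\sum_{m\ge1}\frac{(-1)^{m+1}}{m}\tanh^m(2s)\,\Tr(M_A^m).
\]
The key structural fact is that, under the standard identification $\mathbb{R}^{2n}\cong\mathbb{C}^n$, the symmetric involution $M$ acts as the \emph{antilinear} map $z\mapsto (UU^{T})\bar z$, and compressing to the retained modes preserves antilinearity. Since an antilinear operator has vanishing real trace, $\Tr(M_A^m)=0$ for odd $m$, while for $m=2j$ one finds $\Tr(M_A^{2j})=2\,\Tr\big[(\Pi\,UU^{T}\,\Pi\,\bar U\bar U^{T})^j\big]$, with $\Pi$ the rank-$k$ coordinate projector onto the retained modes. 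A cyclic rearrangement rewrites this as $2\,\Tr[(Q\bar Q)^j]$ with $Q:=U^\dagger\Pi U$ a Haar-random rank-$k$ orthogonal projector, so that $S_2(U)=k\log\cosh(2s)-\sum_{j\ge1}\frac{\tanh^{2j}(2s)}{2j}\Tr[(Q\bar Q)^j]$.

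\emph{Averaging and the series.}
Set $T_j(n,k):=\Expval_{U\in\U(n)}\Tr[(Q\bar Q)^j]$. Because $Q\bar QQ$ is positive semidefinite with rank at most $k$ and eigenvalues in $[0,1]$, and $\Tr[(Q\bar Q)^j]=\Tr[(Q\bar QQ)^j]$, one has the uniform bound $0\le\tfrac1n T_j(n,rn)\le r$. I would compute $T_j$ via the Weingarten calculus: expanding the $2j$ factors of $U$ and of $\bar U$ turns it into a sum over pairs of permutations in $S_{2j}$, with free index sums running over the retained modes (the row indices, of range $k$) or over all modes (the column indices, of range $n$). Organizing the $n\to\infty$ asymptotics by the cycle structure of $\pi\tau^{-1}$, the planar (genus-zero) contributions dominate and are counted by the Catalan number $C_j$, while tracking how many contracted index orbits are confined to the retained block contributes a binomial-type sum that resums to the terminating hypergeometric series; the result is $\tfrac1n T_j(n,rn)\to r^{j+1}C_j\,{}_2F_1(1-j,j;j+2;r)$. (The first case checks out directly: $T_1(n,k)=\tfrac{k(k+1)}{n+1}$, so $\tfrac1n T_1(n,rn)\to r^2=r^2C_1\,{}_2F_1(0,1;3;r)$.) Since $\tfrac1n T_j\le r$ and $\sum_j\frac{\tanh^{2j}(2s)}{2j}\,r<\infty$, dominated convergence justifies interchanging $\lim_{n\to\infty}$ with the sum over $j$, which gives \cref{eq:S2series1}.

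\emph{The remaining assertions.}
For \cref{eq:S2series2}, expand the terminating ${}_2F_1(1-\ell,\ell;\ell+2;r)$ as a polynomial of degree $\ell-1$ in $r$, substitute into the $\ell$-sum, and reindex by the total power $d$ of $r$; the summation range $\lceil d/2\rceil\le\ell\le d-1$ and the factorial coefficient follow from elementary Pochhammer manipulation. The symmetry $r\mapsto 1-r$ is immediate: for a pure state, $S_2$ of the $k$-mode reduced state equals $S_2$ of the complementary $(n-k)$-mode state, so $\Expval S_2(U)$ — and hence its normalization — is invariant under $k\leftrightarrow n-k$, and the derived formula, being equal to this symmetric quantity, may have $r$ replaced by $\min(r,1-r)$. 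For the $\bigO{1}$ correction, I would retain the next order of the Weingarten expansion, writing $T_j(n,rn)=n\,\alpha_j(r)+\mu_j(r)+\bigO{1/n}$ with $\alpha_j$, $\mu_j$ polynomials in $r$ of bounded degree; then $\Expval S_2(U)=n\,\alpha(s,r)-\lambda(s,r)+\littleo{1}$ with $\alpha(s,r)$ the series above and $\lambda(s,r)=\sum_{j\ge1}\frac{\tanh^{2j}(2s)}{2j}\mu_j(r)$, provided one bounds $|\mu_j(r)|$ and the per-$j$ remainders uniformly in $j$ so that the $\lambda$-series converges and the tails are $\littleo{1}$.

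\emph{Main obstacle.}
The crux is the asymptotic evaluation of the Weingarten sum for $T_j$: showing that the leading $n$-coefficient is \emph{exactly} $r^{j+1}C_j\,{}_2F_1(1-j,j;j+2;r)$, rather than merely of order $n$, requires a careful genus count together with precise bookkeeping of which index orbits are forced into the retained block; and the same analysis carried one order further, with uniform-in-$j$ estimates, is what legitimizes the infinite-series manipulations used throughout.
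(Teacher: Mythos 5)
Your setup and reduction are sound and match the paper's: the expansion of $\frac12\Tr\log\sigma_A$ in powers of $\tanh(2s)$, the vanishing of odd traces, the identity $S_2(U)=k\log\cosh(2s)-\sum_{j\ge1}\frac{\tanh^{2j}(2s)}{2j}\Tr W^j$ with $W$ (your $Q\bar Q$ up to cyclic rearrangement), the uniform bound $0\le\Tr W^j\le k$ justifying the interchange of limit and sum, and the $j=1$ check $T_1=k(k+1)/(n+1)$ are all correct. The gap is exactly where you locate it yourself: the claim that the leading Weingarten coefficient of $\Expval_U\Tr W^j$ is \emph{exactly} $r^{j+1}C_j\,{}_2F_1(1-j,j;j+2;r)$ is asserted via an unperformed ``careful genus count together with precise bookkeeping,'' and that computation is the entire content of the theorem. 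A direct evaluation would require resolving, for each pair $(\sigma,\tau)\in S_{2j}^2$ in \cref{eq:weingarten-trw}, how many of the $i$-index orbits and $j$-index orbits survive, and then resumming over permutations weighted by products of Catalan numbers; nothing in your outline shows why this resums to the stated hypergeometric polynomial, and the paper explicitly describes this direct route as intractable.

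The paper avoids the computation entirely, and the two ingredients it uses are both already in your hands. First, a crude counting argument (no evaluation of any permutation sum) shows that $\lim_n\frac1n\Expval_U\Tr W^\ell$ is a polynomial $f_\ell(r)$ containing \emph{only} the powers $r^{\ell+1},\dots,r^{2\ell}$ (\cref{lem:form-of-w}): the upper degree comes from the unique $\sigma$ minimizing the number of distinct constraints on the $i$-indices, and the lower degree from the fact that the $j$-index sums contribute at most $n^\ell$ against the Weingarten factor $n^{-2\ell}$. Second, the purity symmetry forces $r-f_\ell(r)=(1-r)-f_\ell(1-r)$ \emph{separately for each $\ell$} (established by an inductive scaling argument taking $\tanh(2s)\in\bigO{n^{-1/2}},\bigO{n^{-1/4}},\dots$), and this linear system, given the degree window, has a \emph{unique} solution — uniqueness resting on $\det\bargs{\binom{\ell+i}{j}}_{i,j=1}^{\ell}=\binom{2\ell}{\ell}\neq0$. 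One then merely \emph{verifies} that the hypergeometric formula satisfies the system. You instead relegate the $r\mapsto1-r$ symmetry to the trivial closing remark about $\min(r,1-r)$, not realizing it is the tool that determines the coefficients and lets you skip the genus count altogether. Your treatment of the $\lambda(s,r)$ term has the analogous issue (the constant-order polynomial $g_\ell$ has degrees $\ell$ through $2\ell$, so the symmetry determines it only up to one constant per $\ell$, which is why the paper needs the separate combinatorial evaluation of \cref{ap:const}), but for the statement as given — existence of an $n$-independent $\lambda$ — your next-order Weingarten expansion plus the degree bounds suffices in principle.
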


\medskip

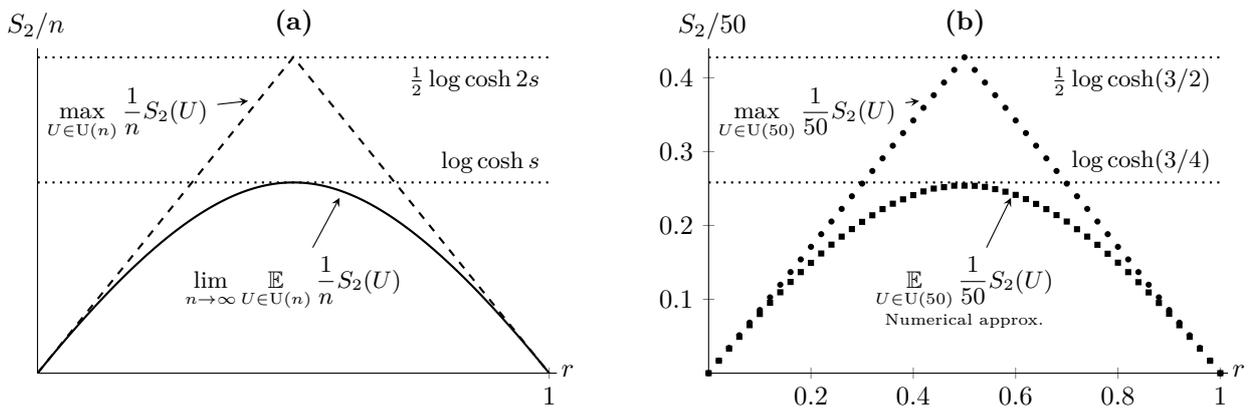
\begin{figure*}
    \centering
    \begin{minipage}{.495\textwidth}
    \begin{tikzpicture}
    \begin{axis}[
            width=\textwidth,
            height=.7\textwidth,
            xlabel={$r$},
            ylabel={$S_2/n$},
            title={\textbf{(a)}},
            samples=3,
            domain=0:1,
            xtick={1},
            ytick={0},
            ymin=0, ymax=.44,
            xmin=0, xmax=1.015,
            restrict y to domain =0:1,
            axis y line=middle,
            axis x line=middle,
            x axis line style=-,
            y axis line style=-,
            x label style={xshift=10,yshift=-5},
            y label style={xshift=-15,yshift=17},
            title style={xshift=-1.2,yshift=-5}
        ]
    
        \addplot[black, thick, dotted] plot (\x, {ln(cosh(3/4))+(1/2)*ln(1+pow(tanh(3/4),2))});
        \node at (axis cs:1,.4275) [anchor=north east] {\small $\frac{1}{2}\log \cosh 2s$};
        
        \addplot[black, thick, dotted] plot (\x, {ln(cosh(3/4))});
        \node at (axis cs:1,.258) [anchor=south east] {\small $\log \cosh s$};
        
        \addplot[black, thick, dashed] plot (\x, {min(\x,1-\x)*ln(cosh(3/2))});
        \node (max) at (axis cs: 0,.35) [anchor=west] {\small $\displaystyle\max_{U \in \U(n)} \frac{1}{n}S_2(U)$};
        \node (maxcurve) at (axis cs: .43,.37) {};
        \draw[->,>=stealth](max)--(maxcurve);

        \addplot[black, thick] table [x=r, y=Squal]{\relativepath qualitative.dat};
        \node (page) at (axis cs:.5,.17) [anchor=north] {\small $\displaystyle\lim_{n\to\infty} \displaystyle\Expval_{U\in\U(n)}\frac{1}{n}S_2(U)$};
        \node (curve) at (axis cs:.6,.25){};
        \draw[->,>=stealth](page)--(curve);
    
    \end{axis}
    \end{tikzpicture}
    \end{minipage}
    \hfill
    \begin{minipage}{.495\textwidth}
    \begin{tikzpicture}
    \begin{axis}[
            width=\textwidth,
            height=.7\textwidth,
            xlabel={$r$},
            ylabel={$S_2/50$},
            title={\textbf{(b)}},
            samples=3,
            domain=0:1,
            xtick={.2,.4,.6,.8,1},
            ytick={0,.1,.2,.3,.4},
            ymin=0, ymax=.44,
            xmin=0, xmax=1.015,
            restrict y to domain =0:1,
            axis y line=middle,
            axis x line=middle,
            x axis line style=-,
            y axis line style=-,
            x label style={xshift=10,yshift=-5},
            y label style={xshift=-15,yshift=17},
            title style={xshift=-1.2,yshift=-5}
        ]
    
        \addplot[black, thick, dotted] plot (\x, {ln(cosh(3/4))+(1/2)*ln(1+pow(tanh(3/4),2))});
        \node at (axis cs:1,.4275) [anchor=north east] {\small $\frac{1}{2}\log \cosh(3/2)$};
        
        \addplot[black, thick, dotted] plot (\x, {ln(cosh(3/4))});
        \node at (axis cs:1,.258) [anchor=south east] {\small $\log \cosh (3/4)$};

        \addplot[black, thick, only marks, mark size=.75pt] table [x=r, y=Smax]{\relativepath quantitative.dat};
        \node (max) at (axis cs: 0,.35) [anchor=west] {\small $\displaystyle\max_{U \in \U(50)} \frac{1}{50}S_2(U)$};
        \node (maxcurve) at (axis cs: .43,.37) {};
        \draw[->,>=stealth](max)--(maxcurve);

        \addplot[black, thick, only marks, mark size=.75pt, mark=square*] table [x=r, y=Squan]{\relativepath quantitative.dat};
        \node (page) at (axis cs:.5,.17) [anchor=north] {\small $\substack{\displaystyle\Expval_{U\in\U(50)}\frac{1}{50}S_2(U)\\\text{Numerical approx.}}$};
        \node (curve) at (axis cs:.6,.25){};
        \draw[->,>=stealth](page)--(curve);
    
    \end{axis}
    \end{tikzpicture}
    \end{minipage}

    \caption{(a) Exact results for the R\'enyi-2 Page curve from \cref{prop:s1-bound,thm:page-curve,cor:special-page-curve}. (b) Numerical simulations of the R\'enyi-2 Page curve for $n=50$ modes and squeezing $s = 3/4$. We plot the values for $r=k/50$ for each $k\in\set{0,1,\dots,50}$. We perform the simulation by generating random unitary matrices and doing the matrix multiplication described in \cref{ap:gaussian-states}. We provide the code for this simulation on GitHub \cite{joseph_t_iosue_glo_2022}.}
    \label{fig:page-curve}
\end{figure*}

We plot the analytic Page curve for $s=3/4$ in \cref{fig:page-curve}(a), and we confirm our results numerically in \cref{fig:page-curve}(b). The proof of \cref{thm:page-curve}, given in \cref{ap:page}, primarily uses two ingredients. The first ingredient is the asymptotic form of the Weingarten calculus for integrating over the unitary group with respect to the Haar measure \cite{weingarten_asymptotic_1978,collins_moments_2002}. From this we get an equation for the Page curve that is initially daunting. The second main ingredient is the fact that the Page curve must be symmetric under $r \mapsto 1-r$ since the global state on the $n$ modes is pure \cite{nielsen_quantum_2010}. Quite miraculously, this fact is enough to simplify the equation for the Page curve and arrive at \cref{thm:page-curve}.

The Page curve derived in \cref{thm:page-curve} can be written as 
\begin{equation}
    \lim_{n\to\infty} \Expval_{U\in\U(n)} \frac{1}{n}S_2(U) = \sum_{\ell=1}^\infty \frac{\tanh^{2\ell}(2s)}{2\ell} G_\ell(r),
\end{equation}
where $G_\ell(r)\coloneqq r-f_\ell(r)$ and $f_\ell(r)$ is a polynomial of degrees $\ell+1$ through $2\ell$ in $r$ (given in \cref{eq:analytic-W} in \cref{ap:page}). Polynomials $G_\ell(r)$ of this form are uniquely determined by the requirement that $G_\ell(r) = G_\ell(1-r)$, which ensures that the R\'enyi-2 entropy of a subsystem is equal to that of its complement since we are considering pure states. It is from this requirement that we ultimately derive the Page curve. We show that the resulting $G_\ell(r)$ can be understood as a good approximation to $m(r)\coloneqq \min(r,1-r)$ from below, which we will call the $\ell^{\rm th}$ approximation.
Indeed, the approximation is especially good near the endpoints $r=0$ and $r=1$, where the first $\ell$ derivatives of $G_\ell(r)$ match those of $m(r)$. As $\ell\to\infty$, the approximation becomes better and better such that $\lim_{\ell\to\infty}G_\ell(r) = m(r)$. 

This provides an interpretation of the derived form of the Page curve. The strength of the squeezing $s$ determines the weight that the Page curve has on the $\ell^{\rm th}$ approximation to $m(r)$. For small squeezing, only low order approximations contribute, with the most dominant contribution being the parabolic shape $G_1(r) = r(1-r)$. When the squeezing is increased, there is more contribution from higher order approximations, giving the Page curve more of the triangle shape of $m(r)$. We see a manifestation of this interpretation as
\begin{align}
    \lim_{s \to 0} &\lim_{n \to \infty} \Expval_{U \in \U(n)} \frac{1}{s^2 n} S_2(U) = 2r(1-r), \\
    \lim_{s\to\infty} &\lim_{n\to\infty} \Expval_{U \in \U(n)} \frac{1}{s n} S_2(U) = 2 \min(r,1-r).
\end{align}
Meanwhile, from \cref{eq:entropy-maximum}, the maximal R\'enyi-2 entropy is $\max_U \frac{1}{n}S_2(U) = m(r)\log\cosh(2s)$. As stated, near the endpoints $r=0$ and $r=1$, $G_\ell(r)$ is a very good approximation to $m(r)$. Thus, regardless of the squeezing strength, when the subsystem size $k = rn$ is small (or when its complement is small), the average entanglement is very close to maximal.

Unfortunately, we are unable to simplify the infinite sum for general $r$ in \cref{thm:page-curve} further. However, the Page curve can be fully simplified at $r=1/2$ --- which is where the maximum for a fixed $s$ occurs --- to $\log\cosh s$. Indeed, we find that 
\begin{equation}
    G_\ell(1/2) = \frac{1}{2}\parentheses{1-4^{-\ell}\binom{2\ell}{\ell}}.
\end{equation}
From this and the maximum R\'enyi-2 entropy from \cref{eq:entropy-maximum}, we also find the exact expression for the Page correction at $r = 1/2$ to be $\frac{1}{2}\log\pargs{1+\tanh^2 s}$. Let us formally state these results as follows.

\medskip

\begin{corollary}
\label{cor:special-page-curve}
For a fixed $s \in \bbR$ when $r=1/2$,
\begin{equation}
    \lim_{n\to\infty} \Expval_{U \in \U(n)} \frac{1}{n}S_2(U) = \log\cosh s,
\end{equation}
and the Page correction is
\begin{equation}
\begin{aligned}
    \lim_{n\to\infty} \frac{1}{n}&\parentheses{\max_{U \in \U(n)} S_2(U) - \Expval_{U \in \U(n)} S_2(U)}\\
    &= \frac{1}{2}\log\pargs{1+\tanh^2 s}.
\end{aligned}
\end{equation}
\end{corollary}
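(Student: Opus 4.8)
The plan is to specialize the series of \cref{thm:page-curve} to $r=1/2$, where the closed form $G_\ell(1/2)=\tfrac12\bigl(1-4^{-\ell}\binom{2\ell}{\ell}\bigr)$ recorded above is available, and then to resum the resulting series in $x\coloneqq\tanh^2(2s)$ in closed form.

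First I would record two elementary power-series identities. Since $1-\tanh^2(2s)=1/\cosh^2(2s)$, the $r$-linear piece of the Page curve contributes $\sum_{\ell\ge1}\frac{x^\ell}{2\ell}=-\tfrac12\log(1-x)=\log\cosh(2s)$. For the Catalan piece, integrating $\sum_{\ell\ge0}\binom{2\ell}{\ell}y^\ell=(1-4y)^{-1/2}$ once gives $\sum_{\ell\ge1}\frac{\binom{2\ell}{\ell}}{\ell}y^\ell=-2\log\frac{1+\sqrt{1-4y}}{2}$; setting $y=x/4$, halving, and using $\sqrt{1-x}=1/\cosh(2s)$ yields $\sum_{\ell\ge1}\frac{4^{-\ell}\binom{2\ell}{\ell}}{2\ell}\,x^\ell=-\log\frac{\cosh(2s)+1}{2\cosh(2s)}$.

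Combining these with \cref{thm:page-curve} and $G_\ell(1/2)=\tfrac12(1-4^{-\ell}\binom{2\ell}{\ell})$ gives
\begin{align*}
\lim_{n\to\infty}\Expval_{U}\tfrac1n S_2(U)\Big|_{r=1/2}
&=\sum_{\ell\ge1}\frac{x^\ell}{2\ell}\,G_\ell(\tfrac12)\\
&=\tfrac12\sum_{\ell\ge1}\frac{x^\ell}{2\ell}-\tfrac12\sum_{\ell\ge1}\frac{4^{-\ell}\binom{2\ell}{\ell}}{2\ell}\,x^\ell\\
&=\tfrac12\log\cosh(2s)+\tfrac12\log\tfrac{\cosh(2s)+1}{2\cosh(2s)}\\
&=\tfrac12\log\tfrac{\cosh(2s)+1}{2}=\log\cosh s,
\end{align*}
where the last step uses $\cosh(2s)+1=2\cosh^2 s$. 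For the Page correction, \cref{prop:s1-bound} at $r=1/2$ gives $\max_{U\in\U(n)}\tfrac1n S_2(U)=\tfrac12\log\cosh(2s)$, so the correction equals $\tfrac12\log\cosh(2s)-\log\cosh s=\tfrac12\log\frac{\cosh(2s)}{\cosh^2 s}=\tfrac12\log(1+\tanh^2 s)$, the last equality being $\cosh(2s)=\cosh^2 s+\sinh^2 s$ divided by $\cosh^2 s$.

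Once $G_\ell(1/2)$ is in hand there is essentially no obstacle, so the only nontrivial input is that closed form. I expect it to follow from Bailey's evaluation of $\,{}_2F_1(a,1-a;c;\tfrac12)$ (the second Gauss summation theorem) applied to $f_\ell(1/2)=2^{-\ell-1}C_\ell\,{}_2F_1(1-\ell,\ell;\ell+2;\tfrac12)$, together with the Legendre duplication formula, which collapse the resulting Gamma functions to $f_\ell(1/2)=2^{-2\ell-1}\binom{2\ell}{\ell}$ and hence $G_\ell(1/2)=\tfrac12-f_\ell(1/2)=\tfrac12\bigl(1-4^{-\ell}\binom{2\ell}{\ell}\bigr)$. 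For the present corollary this step can simply be cited from the proof of \cref{thm:page-curve} in \cref{ap:page}; the two resummations above are then the entirety of the argument.
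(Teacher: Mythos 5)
Your proposal is correct and follows essentially the same route as the paper's proof in \cref{ap:special-page}: Bailey's theorem plus the Legendre duplication formula to evaluate $f_\ell(1/2)=4^{-\ell}\binom{2\ell}{\ell}/2$, and then a resummation of the central-binomial series via the generating function $\sum_\ell\binom{2\ell}{\ell}y^\ell=(1-4y)^{-1/2}$ (the paper phrases this as evaluating $\frac14\int_0^{t/4}f(x)\,\mathrm{d}x/x$, you write the antiderivative directly, but these are the same computation). The final step for the Page correction, using \cref{eq:entropy-maximum} at $r=1/2$, also matches the paper.
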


\medskip

The proof of \cref{cor:special-page-curve}, given in \cref{ap:special-page}, is a straightforward consequence of a simplification of the hypergeometric function $\, _2F_1(a,1-a;c;1/2)$ in terms of gamma functions due to Bailey's theorem \cite{petkovsek1996,bailey_generalized_1964,zudilin_hypergeometric_2019,slater_generalized_1966,wolfram_hypergeometric2f1}. Altogether, we have derived the exact formula [\cref{eq:S2series1,eq:S2series2}] for the R\'enyi-2 page curve in the regime of equal squeezers as a series in $\tanh^2(2s)$ and $r$. In the special case when $r=1/2$, we simplified the series to obtain an exact value of $\log \cosh s$. From this, we derived the Page correction, or information of the subsystem, at $r=1/2$ to be exactly $\frac{1}{2}\log\pargs{1+\tanh^2 s}$. Furthermore, since the Page curve is concave in $r$ while the maximum entropy is linear, this correction is maximized at $r=1/2$. A summary of the Page curve results thus far is provided in \cref{fig:page-curve}.

\medskip

We now shift our attention to the constant term in the Page curve. \cref{thm:page-curve} states that asymptotically in $n$, $\Expval_{U\in\U(n)} S_2(U) = n\alpha(s,r) - \lambda(s,r) + \littleo{1}$, and it provides the exact expression for $\alpha(s,r)$. We further find the following result for $\lambda$, which is proven in \cref{ap:const}.

\medskip

\begin{proposition}
\label{prop:const}
Fix $s \in \bbR$ and $r \in [0,1]$. Then,
\begin{equation}
    \lambda(s,r) = -\frac{1}{8} \log \pargs{ 1-4 r (1-r) \tanh ^2(2 s) }.
\end{equation}
\end{proposition}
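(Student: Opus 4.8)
The plan is to extract the $n^{0}$ term from the very Weingarten computation that produces \cref{thm:page-curve}. Recall from the proof of that theorem that the R\'enyi-2 entropy has the exact closed form $S_2(U) = k\log\cosh(2s) + \tfrac12\log\det\bigl(I_k - \tanh^2(2s)\,PP^\dagger\bigr)$, where $k = rn$ and $P$ is the top-left $k\times k$ block of $UU^T$ (equivalently, $\Tr[(PP^\dagger)^m] = \Tr[(\Pi\overline\Pi)^m]$ for a Haar-random rank-$k$ orthogonal projection $\Pi$ in dimension $n$). Since $\tanh^2(2s) < 1$ and $\|PP^\dagger\| \le 1$, we may expand the logarithm,
\begin{equation}
  S_2(U) = k\log\cosh(2s) - \frac{1}{2}\sum_{m=1}^{\infty}\frac{\tanh^{2m}(2s)}{m}\,\Tr\bigl[(PP^\dagger)^{m}\bigr],
\end{equation}
and take the expectation --- the interchange with the sum being justified by the uniform bound $\Tr[(PP^\dagger)^m]\le k$ together with $0 \le 4r(1-r)\tanh^2(2s) < 1$. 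Everything then reduces to the one-parameter family of moments $\mathcal{M}_m(n)\coloneqq \Expval_{U\in\U(n)}\Tr[(PP^\dagger)^m]$.

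For fixed $m$, $\mathcal{M}_m(n)$ is a rational function of $n$ and $k$; substituting $k = rn$, the asymptotic Weingarten calculus --- the first ingredient of \cref{thm:page-curve} --- gives the expansion $\mathcal{M}_m(n) = c_m(r)\,n + d_m(r) + \bigO{1/n}$, in which the leading coefficient $c_m(r) = r^{m+1}C_m\,{}_2F_1(1-m,m;m+2;r)$ is already fixed by \cref{thm:page-curve} and only $d_m(r)$ is new. Collecting orders gives $\Expval_{U\in\U(n)} S_2(U) = n\alpha(s,r) - \lambda(s,r) + \littleo{1}$ with $\lambda(s,r) = \tfrac12\sum_{m\ge1}\tfrac1m\tanh^{2m}(2s)\,d_m(r)$, so \cref{prop:const} is equivalent to the single identity $d_m(r) = 4^{m-1}\bigl(r(1-r)\bigr)^m$; granting it, the claimed expression is just the resummation $\tfrac18\sum_{m\ge1}\tfrac1m\bigl(4r(1-r)\tanh^2(2s)\bigr)^m = -\tfrac18\log\bigl(1 - 4r(1-r)\tanh^2(2s)\bigr)$.

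To pin down $d_m(r)$ one carries out the subleading Weingarten computation explicitly, in the spirit of (but one order deeper than) \cref{thm:page-curve}. I expect three structural facts to do most of the work. After substituting $k=rn$, each surviving Weingarten monomial contributes at most $2m$ powers of $k$ to the $n^{0}$ stratum, so $d_m$ is a polynomial in $r$ of degree $\le 2m$. Because $S_2(U)$ for the $k$-mode subsystem equals $S_2(U)$ for its $(n-k)$-mode complement (the global state is pure), $\Expval_{U}S_2(U)$ is symmetric under $k\mapsto n-k$; since $n\alpha(s,r)$ is already symmetric under $r\mapsto 1-r$, so is $d_m(r)$. And since the average entropy approaches its maximum $k\log\cosh(2s)$ for a small subsystem, one should find that $d_m$ vanishes to order $m$ at $r = 0$, hence also at $r = 1$. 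Degree $\le 2m$, the $r\leftrightarrow 1-r$ symmetry, and a zero of order $m$ at both endpoints force $d_m(r) = \kappa_m\,(r(1-r))^m$, and $\kappa_m = 4^{m-1}$ follows from the top-degree ($r^{2m}$) coefficient of $\mathcal{M}_m(n)$, which is governed by the simplest subleading Weingarten terms. Resumming as above then yields \cref{prop:const}.

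The hard part will be the vanishing-to-order-$m$ claim at $r = 0$ --- equivalently, showing that the M\"obius-weighted sum of all subleading Weingarten terms with fewer than $m$ free row-indices cancels. This is exactly the ``daunting'' combinatorics underlying \cref{thm:page-curve}, now one order deeper in $1/n$, and as there I would lean on the $r\leftrightarrow 1-r$ symmetry as the key simplifying device. An alternative that sidesteps the hand computation is to observe that $PP^\dagger$ is a $k\times k$ truncation of a matrix from the circular orthogonal ensemble (the law of $UU^T$) and to read off $d_m(r)$ from the known $1/n$-correction to the limiting spectral density of such truncations. Everything else --- the closed form for $S_2$, the interchanges of sum and limit, the symmetry argument, and the final resummation --- is routine.
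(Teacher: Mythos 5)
Your overall architecture is the same as the paper's: expand $S_2$ in powers of $\tanh^2(2s)$ to reduce everything to the moments $\Expval_U\Tr W^m$, use Weingarten power counting to show the $n^0$ coefficient is a polynomial in $r$ supported on degrees $m$ through $2m$, invoke the $r\mapsto 1-r$ symmetry (purity of the global state) to force the form $\kappa_m\,(r(1-r))^m$, and then fix the one remaining constant per order. Two remarks on where you have placed the difficulty. First, you call the order-$m$ vanishing at $r=0$ "the hard part," but it is not: it follows from exactly the same power counting used for the linear-in-$n$ term (the sums over column indices contribute at most $n^{m}$, the Weingarten function contributes $n^{-2m}$, so any monomial $k^a$ with $a<m$ produces a term decaying in $n$ and cannot survive at order $n^0$). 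No cancellation of M\"obius-weighted sums is needed for that step.

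The genuine gap is the step you dismiss as routine: determining $\kappa_m=4^{m-1}$ from "the top-degree coefficient \ldots governed by the simplest subleading Weingarten terms." Fixing the $r^{2m}$ coefficient does pin $\sigma$ to a single permutation, but it leaves a full sum over $\tau\in S_{2m}$ subject to a nontrivial constraint (that the number of free column indices equal $2m$ minus the Cayley distance of $\tau$), weighted by signed products of Catalan numbers from the asymptotic Weingarten function. This is \cref{eq:a-ell} in the paper, and it is the crux of the whole proposition: the symmetry argument determines everything \emph{except} this one constant, and the constant does not follow from any "simplest" subset of terms. The paper itself only verifies $a^{(\ell)}=(-1)^\ell 4^{\ell-1}$ for $\ell\le 5$ by explicit computation and outsources the general evaluation to a companion result that interprets the constrained sum via breakpoint graphs from bioinformatics. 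Your alternative suggestion (reading $d_m(r)$ off the $1/n$ correction to the spectral density of truncated $UU^T$ matrices) is plausible in spirit but is asserted, not carried out, and would itself require a precise known form of that correction. As written, your argument establishes $\lambda(s,r)=\tfrac{1}{2}\sum_m \tfrac{1}{m}\tanh^{2m}(2s)\,\kappa_m\,(r(1-r))^m$ for some undetermined rational constants $\kappa_m$, which is the content of the paper's \cref{eq:const-from-a} but not of \cref{prop:const}.
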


\medskip

Note that $\lambda(s,r) \geq 0$ for all $s$ and $r$, and therefore $\Expval_U S_2(U) \leq n \alpha(s,r)$ asymptotically. At $r=1/2$, this simplifies to $\lambda(s,1/2) = \frac{1}{4}\log\cosh(2s)$.

The proof of \cref{prop:const} is very similar to the proof of \cref{thm:page-curve}, again crucially using the symmetry $r\mapsto 1-r$ to simplify the expression coming from the Weingarten function. However, there is one substantial difficulty in the proof of \cref{prop:const} that does not occur in that of \cref{thm:page-curve}. In this case, the symmetry $r \mapsto 1-r$ is \textit{almost} enough to fully determine $\lambda$; however, there are constants that cannot be determined by the symmetry alone. We derive expressions for these constants that are complicated sums involving permutations and Catalan numbers. Using objects that arise in bioinformatics when studying gene orders, called \textit{breakpoint graphs} \cite{Alexeev2017hultman}, Ref.~\cite{alekseyev} evaluates these sums, hence completing the proof of \cref{prop:const}. These constants are discussed more in \cref{ap:const}.

In summary, we have derived an explicit form of the asymptotic R\'enyi-2 entropy Page curve for all $r$ and $s$ up to corrections that vanish as $n\to\infty$. 

\section{Variance and typicality}
\label{sec:variance}

Next, we shift our attention to the variance of the R\'enyi-2 entropy so as to make statements about typicality of entanglement; that is, how different a random state's entanglement is from the average. Using the results for the R\'enyi-2 entropy, we will also be able to prove some weaker results for the von Neumann entropy. The typicality results presented below are summarized in \cref{tab:typicality}.

Typicality is of interest because it characterizes the applicability of statistical averages. Indeed, statistical mechanics often relies on quantities being typical so that thermodynamic average quantities, such as average energy and average pressure, can accurately represent their true values.
In order to quantify the deviation from average, we consider two measures of deviation corresponding to multiplicative and additive distance. If the multiplicative distance between a quantity and its average vanishes in the thermodynamic limit, then that quantity is called \textit{weakly typical}. If the additive distance vanishes in this limit, then that quantity is called \textit{strongly typical}. With this intuition, we now formally define weak and strong typicality following Ref.~\cite{dahlsten_entanglement_2014}.

\medskip

\begin{definition}[Typicality]
\label{def:typicality}
Let $S$ be a nonnegative random variable on the unitary group $\U(n)$, and denote its value at $U\in\U(n)$ by $S(U)$. $S$ is called \emph{weakly typical} if for any constant $\epsilon > 0$,
\begin{equation}
    \lim_{n\to\infty} \Pr{U\in\U(n)}{\abs{\frac{S(U)}{\Expval_{V\in\U(n)} S(V)} - 1} < \epsilon} = 1.
\end{equation}
$S$ is called \emph{strongly typical} if for any constant $\epsilon > 0$,
\begin{equation}
    \lim_{n\to\infty} \Pr{U\in\U(n)}{\abs{S(U) - \Expval_{V\in\U(n)} S(V)} < \epsilon} = 1.
\end{equation}
\end{definition}

\medskip

\setlength{\tabcolsep}{12pt}

\begin{table*}[ht]
    \centering
    \begin{tabular}{ccccc}
         \toprule
         & & $k \in \bigTheta{n}$ & $k \in \littleo{n}$ & $k \in \littleo{n^{1/3}}$ \cite{fukuda_typical_2019} \\
        \midrule
         \textbf{Equal} & \textit{R\'enyi-2} & weak & strong & strong\\
         \textbf{squeezing} & \textit{von Neumann} & ? & weak & strong \\\\
         \textbf{Unequal} & \textit{R\'enyi-2} & ? & weak${}^\ast$ & strong \\
         \textbf{squeezing} & \textit{von Neumann} & ? & weak${}^\ast$ & strong \\
        \bottomrule
    \end{tabular}
    \caption{
    A summary of the current status of rigorous results on typical entanglement in Gaussian bosonic systems, where in this figure we assume single-mode squeezing parameters that are independent of $n$. Strong and weak typicality are defined in \cref{def:typicality}. Note that ``weak${}^\ast$'' indicates that the result is not fully proven, but depends on \cref{conj:entanglement-derivative}. Where we say ``weak'', we have not ruled out the possibility that the typicality is also strong. 
    The total number of modes is denoted by $n$, and $0\leq k \leq n$ is the number of modes in the subsystem. ``Equal squeezing'' refers to the case when each mode is initially squeezed with the same strength, whereas ``unequal squeezing'' refers to the general case when each mode can be squeezed independently. The two leftmost columns come from \cref{cor:renyi2-equal-squeeze-typical,cor:vonNeumann-equal-squeezing-typical,rmk:unequal-squeeze-typical}. The rightmost column all follows from the results of Ref.~\cite{fukuda_typical_2019}. Prior to Ref.~\cite{fukuda_typical_2019}, Refs.~\cite{serafini_canonical_2007,serafini_teleportation_2007} proved strong typicality in the regime $k \in \bigO{1}$.
    }
    \label{tab:typicality}
\end{table*}

If $\Expval_{U\in \U(n)} S(U)$ does not decay as $n\to\infty$, then strong typicality clearly implies weak typicality. 
We will be concerned with typicality of $S_1$ and $S_2$, the von Neumann and R\'enyi-2 entropy respectively. One can compute the variance of the entropy over the Haar measure and then apply Chebyshev's inequality to obtain typicality results. Therefore, we now focus on the variance. We first find the general form of the asymptotic R\'enyi-2 variance in the equal squeezing regime and show that it is independent of $n$. Recall that the dependence of $S_2(U)$ on $r$, $n$, and $s$ is implicit.

\medskip

\begin{theorem}[R\'enyi-2 variance]
\label{thm:renyi2-variance}
Let $s \in \bbR$ and $r \in [0,1]$. Then
\begin{equation}
    \lim_{n\to\infty}\variance_{U \in \U(n)} S_2(U) = \sum_{d=2}^\infty \omega^{(d)} \tanh^{2d}(2s)  \parentheses{r(1-r)}^d,
\end{equation}
where $\omega^{(d)} \in \bbQ$ is some number that depends only on $d$. In particular, $\omega^{(2)}=1/2$.
\end{theorem}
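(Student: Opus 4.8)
The plan is to reduce $S_2(U)$ to an explicit function of a single Haar-random matrix, expand its variance as a double series whose coefficients are Haar covariances of power-trace statistics, and evaluate those covariances via the large-$n$ Weingarten calculus, exploiting the $r\mapsto 1-r$ symmetry exactly as in the proofs of \cref{thm:page-curve,prop:const}. \emph{Step 1: reduction.} From the explicit reduced covariance matrix of the $k$-mode subsystem (\cref{ap:gaussian-states}), a Schur-complement computation yields
\begin{equation}
    S_2(U) = k\log\cosh(2s) + \frac12\log\det\!\left(\mathbb 1_k - \tanh^2(2s)\,W_kW_k^\dagger\right),
\end{equation}
where $W\coloneqq UU^T$ is a Haar-random symmetric unitary and $W_k$ is its top-left $k\times k$ block. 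Because $W$ is unitary and symmetric, $W_kW_k^\dagger$ is positive semidefinite with spectrum in $[0,1]$; since $\tanh^2(2s)<1$ the logarithm expands, and writing $T_\ell(U)\coloneqq \Tr[(W_kW_k^\dagger)^\ell]$ (so $0\le T_\ell\le k$ and $T_{\ell+1}\le T_\ell$),
\begin{equation}
    S_2(U) = k\log\cosh(2s) - \frac12\sum_{\ell=1}^\infty \frac{\tanh^{2\ell}(2s)}{\ell}\,T_\ell(U).
\end{equation}
The leading term is deterministic, so
\begin{equation}
    \variance_{U\in\U(n)} S_2(U) = \frac14\sum_{\ell,\ell'\geq1} \frac{\tanh^{2(\ell+\ell')}(2s)}{\ell\,\ell'}\,\mathrm{Cov}_{U\in\U(n)}[T_\ell,T_{\ell'}],
\end{equation}
and the coefficient of $\tanh^{2d}(2s)$ equals $\frac14\sum_{\ell+\ell'=d}(\ell\ell')^{-1}\mathrm{Cov}[T_\ell,T_{\ell'}]$; the problem becomes to evaluate $\lim_{n\to\infty}\mathrm{Cov}[T_\ell,T_{\ell'}]$. (The interchange of the limit with the sums is justified using $T_\ell\le k$, the monotonicity in $\ell$, and the uniform-in-$n$ covariance bounds coming out of Step~3.)

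\emph{Step 2: the $r\mapsto1-r$ symmetry.} Since the global state is pure, $S_2$ of the $k$-mode subsystem equals $S_2$ of the complementary $(n-k)$-mode subsystem, and relabeling modes preserves the Haar measure; hence $\variance_{U\in\U(n)}S_2(U)$ is invariant under $r\mapsto1-r$. (More sharply, the CS decomposition of $W$ shows $\Tr[(W_{n-k}W_{n-k}^\dagger)^\ell]=T_\ell(U)+(n-2k)$ for every $U$ and $\ell\ge1$, the two corner blocks sharing the same singular values up to a block of ones, so each covariance $\mathrm{Cov}[T_\ell,T_{\ell'}]$ is individually $k\mapsto n-k$ invariant.) As $\tanh^2(2s)$ is a free parameter, invariance of the total variance forces each $\tanh^{2d}(2s)$-coefficient to be $r\mapsto1-r$ invariant; once it is also shown to be polynomial in $r$, it must be a polynomial in $u\coloneqq r(1-r)$ with rational coefficients.

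\emph{Step 3: Weingarten evaluation.} Expand each $T_\ell$ as a multilinear polynomial in the entries of $U$ and $\bar U$ (each appearing $2\ell$ times via $W_{ij}=\sum_a U_{ia}U_{ja}$). Then $\mathbb E[T_\ell T_{\ell'}]$ is a sum over pairs of permutations in $S_{2(\ell+\ell')}$ weighted by the Weingarten function, with Kronecker deltas confining the row indices to $\{1,\dots,k\}$ and leaving the column indices free in $\{1,\dots,n\}$. Inserting the $n\to\infty$ asymptotics of the Weingarten function (products of signed Catalan numbers), the ``disconnected'' permutation configurations reproduce $\mathbb E[T_\ell]\,\mathbb E[T_{\ell'}]$ at order $n^2$ and cancel in the covariance, while the ``connected'' configurations contribute at order $n^0$ and sum to a polynomial in $r$, which by Step~2 is a polynomial in $u$ of degree at most $\ell+\ell'$. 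Collecting the contributions at fixed $d=\ell+\ell'$ — where, as in \cref{thm:page-curve}, the $r\mapsto1-r$ symmetry triggers a collapse — gives $\frac14\sum_{\ell+\ell'=d}(\ell\ell')^{-1}\lim_n\mathrm{Cov}[T_\ell,T_{\ell'}]=\omega^{(d)}u^d$ with $\omega^{(d)}\in\bbQ$. For $d=2$ only $\ell=\ell'=1$ contributes: a short computation gives $\mathbb E[T_1]=\frac{k(k+1)}{n+1}$, and evaluating the $S_4$ sum for $\mathbb E[T_1^2]$ gives $\lim_{n\to\infty}\variance[T_1]=2\,r^2(1-r)^2$, whence $\omega^{(2)}=\frac14\cdot2=\frac12$.

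\emph{Main obstacle.} The real work is Step~3: verifying that the order-$n^2$ (and subleading order-$n$) parts of $\mathbb E[T_\ell T_{\ell'}]$ cancel against $\mathbb E[T_\ell]\,\mathbb E[T_{\ell'}]$ so that the limit exists and is $O(1)$, and then that the surviving $r$-polynomial reduces — after summing over $\ell+\ell'=d$ — to the pure monomial $u^d$. As in \cref{thm:page-curve,prop:const}, the exact $r\mapsto1-r$ symmetry is what makes this feasible, but the combinatorics of the connected Weingarten diagrams (and, as in \cref{prop:const}, perhaps auxiliary identities for the resulting sums over permutations and Catalan numbers) is where the difficulty lies; I do not expect a route that sidesteps it.
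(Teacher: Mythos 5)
Your skeleton is the paper's: expand $S_2(U)=k\log\cosh(2s)-\tfrac12\sum_\ell \ell^{-1}\tanh^{2\ell}(2s)\,T_\ell(U)$ with $T_\ell=\Tr[(W_kW_k^\dagger)^\ell]$ (the paper's $\Tr W^\ell$ with $W=\Pi UU^T\Pi\bar U\bar U^T\Pi$), write the variance as a double series in the covariances of the $T_\ell$, and exploit the $r\mapsto1-r$ symmetry. Your CS-decomposition identity $\Tr[(W_{n-k}W_{n-k}^\dagger)^\ell]=T_\ell(U)+(n-2k)$ is in fact a nice exact strengthening of that symmetry, which the paper only establishes order by order in $\tanh^2(2s)$. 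The gap is the step you yourself flag as the ``main obstacle'': you never establish that the $O(n^2)$ and $O(n)$ parts of $\Expval_U[T_\ell T_{\ell'}]-\Expval_U[T_\ell]\Expval_U[T_{\ell'}]$ cancel, nor that the surviving finite part is the pure monomial $(r(1-r))^{\ell+\ell'}$, and the route you anticipate for closing it --- classifying connected versus disconnected Weingarten configurations --- is not how the paper proceeds and is harder than necessary.

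The paper sidesteps the diagrammatics entirely. From the Weingarten sum one extracts only \emph{degree bounds}: $\Expval_U[T_\ell T_{\ell'}]=n^2P(r)+nQ(r)+T(r)+\littleo{1}$, where the monomials of $P$ have degree in $[\ell+\ell'+2,\,2\ell+2\ell']$, those of $Q$ in $[\ell+\ell'+1,\,2\ell+2\ell']$, and those of $T$ in $[\ell+\ell',\,2\ell+2\ell']$ (\cref{lem:form-of-ww}; this is a count of Kronecker-delta constraints, the lower bounds coming from the fact that the column-index sums contribute at most $n^{\ell+\ell'}$ against the $n^{-2\ell-2\ell'}$ of the Weingarten function). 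The same degree ranges hold for the product of expectations. One then invokes the elementary fact that a polynomial invariant under $r\mapsto1-r$ whose monomials all have degree in $[d+1,2d]$ is identically zero, while one with degrees in $[d,2d]$ must equal $c\,(r(1-r))^d$ (\cref{lem:symmetric-polynomial}). This simultaneously kills the $n^2$ and $n$ pieces of each $\tanh^{2d}(2s)$ coefficient --- no cancellation between permutation configurations is ever verified --- and pins the finite piece to $\omega^{(d)}(r(1-r))^d$. Note in particular that your Step~3 only argues the survivor is a polynomial in $r(1-r)$ of degree \emph{at most} $\ell+\ell'$; to get the monomial you also need the lower degree bound, which you never state. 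Your $d=2$ evaluation ($\Expval_U T_1=k(k+1)/(n+1)$, $\lim_n\variance_U T_1=2r^2(1-r)^2$, hence $\omega^{(2)}=1/2$) is correct and agrees with the paper's computer-assisted fourth-moment calculation.
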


\medskip

The proof of this theorem, given in \cref{ap:variance}, is very similar to the proof of \cref{thm:page-curve}. To prove this theorem, we again crucially use that $\variance_U S_2(U)$ must be symmetric under $r \mapsto 1-r$ since the full state on the $n$ modes is pure. Interestingly, in contrast to \cref{thm:page-curve} where we found that the Page curve grows linearly with $n$, the asymptotic variance is independent of $n$. Indeed, \cref{thm:renyi2-variance} is the bosonic analogue of the result that the variance for fermionic Gaussian states is asymptotically constant \cite{bianchi_page_2021}.

From \cref{thm:renyi2-variance}, we find typicality in certain regimes as an immediate corollary. In particular, since the variance is independent of $n$ while the average grows with $n$, the entanglement is always weakly typical. Furthermore, the variance is asymptotically zero if $s \in \littleo{1}$ and/or $r \in \littleo{1}$, and typicality is therefore strong in those regimes. Altogether, \cref{thm:renyi2-variance} and Chebyshev's inequality lead to the following corollary.

\medskip

\begin{corollary}
\label{cor:renyi2-equal-squeeze-typical}
The R\'enyi-2 entropy is weakly typical for any $s \in \bbR$ and $r\in [0,1]$. Furthermore, if the subsystem size $k=rn$ scales as $k \in \littleo{n}$, or if the squeezing scales as $s \in \littleo{1}$, then the R\'enyi-2 entropy is strongly typical, and $\Pr{U \in \U(n)}{\abs{S_2(U) - \Expval_V S_2(V)} < \epsilon} \geq 1 - \frac{\bigOmega{1/n}}{\epsilon^2}$. On the other hand, if $r$ and $s$ are constant in $n$, then
$\lim_{n\to\infty}\Pr{U \in \U(n)}{\abs{S_2(U) - \Expval_V S_2(V)} < \epsilon} \geq 1 - \frac{c(r,s)}{\epsilon^2}$,
where $c(r,s)$ is a constant independent of $n$ (but depends on $r$ and $s$).
\end{corollary}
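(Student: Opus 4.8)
The plan is to obtain the corollary directly from \cref{thm:renyi2-variance} via Chebyshev's inequality, with \cref{thm:page-curve} supplying the growth of the mean that is needed only for the multiplicative (weak‑typicality) statement. Set
\[
c(r,s)\coloneqq\lim_{n\to\infty}\variance_{U\in\U(n)}S_2(U),
\]
which by \cref{thm:renyi2-variance} equals the finite constant $\sum_{d\ge2}\omega^{(d)}\tanh^{2d}(2s)\parentheses{r(1-r)}^d$. Chebyshev's inequality gives, for every constant $\epsilon>0$,
\[
\Pr{U\in\U(n)}{\abs{S_2(U)-\Expval_V S_2(V)}\ge\epsilon}\le\frac{\variance_{U\in\U(n)}S_2(U)}{\epsilon^2},
\]
and running the same argument with $S_2(U)$ replaced by $S_2(U)/\Expval_V S_2(V)$ yields the analogous inequality with an extra factor $\parentheses{\Expval_V S_2(V)}^{-2}$ on the right-hand side. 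Everything therefore reduces to estimating the variance and the mean in the three regimes.

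For weak typicality with $s\in\bbR$ and $r\in[0,1]$ fixed: when $s\neq0$ and $r\in(0,1)$, \cref{thm:page-curve} gives $\Expval_V S_2(V)=n\alpha(s,r)-\lambda(s,r)+\littleo{1}$ with $\alpha(s,r)=\sum_{\ell\ge1}\frac{\tanh^{2\ell}(2s)}{2\ell}G_\ell(r)\ge\frac{\tanh^{2}(2s)}{2}r(1-r)>0$ (using $G_1(r)=r(1-r)$ and $0\le G_\ell(r)\le m(r)$), so $\Expval_V S_2(V)\to\infty$; since $\variance_{U\in\U(n)}S_2(U)\to c(r,s)<\infty$, the multiplicative Chebyshev bound tends to $0$, which is weak typicality. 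The boundary cases $s=0$ or $r\in\{0,1\}$ are trivial because there $S_2(U)\equiv0$.

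For strong typicality when $k=rn\in\littleo{n}$ or $s\in\littleo{1}$, together with the quantitative bounds: in the first regime $r\to0$, and writing $c(r,s)$ as a power series in $u=r(1-r)$ that converges (by \cref{thm:renyi2-variance}) and vanishes at $u=0$, continuity at the origin gives $c(r,s)\to0$; in the second regime $\tanh^{2}(2s)\to0$, and viewing $c(r,s)$ as a convergent power series in $\tanh^{2}(2s)$ vanishing at $0$ again gives $c(r,s)\to0$. Inserting this into the additive Chebyshev bound, and using that the finite-$n$ variance equals $c(r,s)+\bigO{1/n}$ (as the proof of \cref{thm:renyi2-variance} shows), yields strong typicality; tracking the leading behavior of $c(r,s)$ together with this $\bigO{1/n}$ correction produces the stated $1-\bigOmega{1/n}/\epsilon^{2}$ bound, while for $r,s$ constant one simply takes the constant to be $c(r,s)$. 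Beyond invoking \cref{thm:renyi2-variance,thm:page-curve}, I expect the only real work to be this finite-$n$ bookkeeping --- promoting the limiting variance formula to an explicit $n$-dependent estimate --- and checking that the power-series manipulations (continuity at the origin, interchange of limit and sum) are licensed by the convergence already contained in \cref{thm:renyi2-variance}; a minor point is the $0/0$ form of the multiplicative deviation at $s=0$ and $r\in\{0,1\}$, which is handled by the triviality of $S_2$ there.
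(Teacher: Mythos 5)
Your proposal is correct and follows essentially the same route as the paper: the corollary is obtained directly from \cref{thm:renyi2-variance} (variance asymptotically constant, vanishing as $r\to 0$ or $s\to 0$) and \cref{thm:page-curve} (mean growing linearly in $n$) via Chebyshev's inequality applied additively for strong typicality and to the normalized variable for weak typicality. The extra details you supply --- the lower bound $\alpha(s,r)\geq \tfrac{1}{2}\tanh^2(2s)\,r(1-r)$, continuity of the variance series at the origin, and the degenerate cases $s=0$, $r\in\{0,1\}$ --- are consistent with, and slightly more explicit than, the paper's treatment.
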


\medskip

In summary, we have shown weak typicality in the R\'enyi-2 entropy for all $s$ and $k$ and strong typicality whenever $k \in \littleo{n}$. On the other hand, if $s$ and $r$ do not tend to zero with increasing $n$, the variance converges to a constant value independent of $n$. Hence, we cannot use Chebyshev's inequality to prove strong typicality in this case. Notably, an asymptotically constant variance does not necessarily imply an absence of strong typicality either --- that is, the probability that the entropy deviates from its average can scale as $1/n^2$, but the entropy can deviate by an amount proportional to $n$ thus resulting in a constant variance. We are therefore unable to make a definitive statement about strong typicality in the $r \in \bigTheta{1}$ and $s \in \bigOmega{1}$ case.

In the next corollary, we will use these results to address typicality as measured by the von Neumann entropy. Specifically, we will use \cref{prop:s1-bound} to show weak typicality of the von Neumann entropy as long as the subsystem size scales sublinearly with the system size.

\medskip

\begin{corollary}
\label{cor:vonNeumann-equal-squeezing-typical}
Let $s \in \bbR$ and $r \in [0,1]$. If the subsystem size $k=rn$ scales as $k \in \littleo{n}$, then the von Neumann entropy is weakly typical.
\end{corollary}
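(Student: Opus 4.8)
The plan is to bootstrap from the near-maximality and strong typicality of the R\'enyi-2 entropy established above. We may assume $s\neq 0$ (if $s=0$ then $S_1(U)=S_2(U)=0$ for every $U$ and the claim is trivial) and, since $k\in\littleo{n}$, that $k\le n/2$, so $\min(r,1-r)=r$ and $n\min(r,1-r)=k$. Putting the reduced state of the $k$-mode subsystem into Williamson form, $S_j(U)=\sum_{i=1}^{k}h_j(\nu_i)$ where $\nu_1\ge\dots\ge\nu_k\ge 1$ are its symplectic eigenvalues and $h_1,h_2$ are as in \cref{prop:s1-bound} (in particular $h_2(\nu)=\log\nu$). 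Write $M_2\coloneqq\max_{V\in\U(n)}S_2(V)$, which by \cref{prop:s1-bound} equals $k\log\cosh(2s)$; the same analysis also gives $\nu_i\le\cosh(2s)$ for every $i$ and every $U$. Finally put $\phi\coloneqq h_1-h_2$ and $\delta_s\coloneqq\phi(\cosh 2s)$; on $[1,\cosh 2s]$ the function $\phi$ is continuous, increasing and concave with $\phi(1)=0$, so $0\le\phi(\nu)\le\delta_s$ there.

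The crux is a pointwise inequality comparing $S_1(U)-S_2(U)$ with the R\'enyi-2 deficit $M_2-S_2(U)$. Since $\cosh 2s>1$, the map $\nu\mapsto\bigl(\delta_s-\phi(\nu)\bigr)/\bigl(\log\cosh 2s-\log\nu\bigr)$ is continuous on $[1,\cosh 2s)$ and extends continuously to $\nu=\cosh 2s$, where by l'H\^opital it equals $\cosh(2s)\,\phi'(\cosh 2s)<\infty$; hence it is bounded on $[1,\cosh 2s]$ by some finite constant $K_s$. Thus $0\le\delta_s-\phi(\nu)\le K_s\bigl(\log\cosh 2s-\log\nu\bigr)$ for all $\nu\in[1,\cosh 2s]$, and summing over the symplectic eigenvalues gives, for every $U\in\U(n)$,
\begin{equation*}
0\;\le\;k\delta_s-\bigl(S_1(U)-S_2(U)\bigr)\;\le\;K_s\bigl(M_2-S_2(U)\bigr).
\end{equation*}

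To finish I would bound the right-hand side by $\littleo{k}$ in probability. By \cref{thm:page-curve} --- using the small-$r$ behavior $\alpha(s,r)=r\log\cosh(2s)-\bigO{r^2}$ and $\lambda(s,r)=\bigO{r}$ of its ingredients --- the R\'enyi-2 Page correction $M_2-\Expval_{V}S_2(V)$ is $\bigO{k^2/n}+\littleo{1}$, hence $\littleo{k}$ when $k\in\littleo{n}$; in particular $\Expval_{V}S_2(V)=k\log\cosh(2s)-\littleo{k}=\bigTheta{k}$, and taking expectations in the displayed inequality gives $\Expval_{V}S_1(V)=\Expval_{V}S_2(V)+k\delta_s-\littleo{k}=\bigTheta{k}$ as well. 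Moreover \cref{cor:renyi2-equal-squeeze-typical} gives strong typicality of $S_2$ when $k\in\littleo{n}$, i.e.\ $S_2(U)=\Expval_{V}S_2(V)+\littleo{1}$ in probability, so $M_2-S_2(U)=\littleo{k}$ in probability. Feeding this into the displayed inequality, $S_1(U)=S_2(U)+k\delta_s-\littleo{k}=\Expval_{V}S_2(V)+k\delta_s+\littleo{k}$ in probability, and since $\Expval_{V}S_2(V)+k\delta_s=\bigTheta{k}$ this gives $S_1(U)/\Expval_{V}S_1(V)\to 1$ in probability as $n\to\infty$, which is weak typicality of $S_1$.

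I expect the first step to be the main obstacle. Used by itself, the bound $S_1(U)<S_2(U)+k(1-\log 2)$ from \cref{prop:s1-bound} localizes $S_1(U)$ only to within $\bigTheta{k}$, the same order as $\Expval_{V}S_1(V)$, and so gives no vanishing multiplicative deviation. What rescues the argument is that for $k\in\littleo{n}$ the R\'enyi-2 entropy concentrates near its \emph{maximum}, not merely its mean, which forces the symplectic eigenvalues $\nu_i$ to lie typically close to $\cosh(2s)$, where $h_1$ and $\log$ agree to leading order up to the additive constant $\delta_s$; hence $S_1(U)-S_2(U)$ is typically $k\delta_s+\littleo{k}$. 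Making this quantitative requires the eigenvalue bound $\nu_i\le\cosh(2s)$ together with the smooth comparison of $h_1$ with $\log$ near $\cosh(2s)$ carried out above, and that is where the real content lies.
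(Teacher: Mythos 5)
Your proof is correct, but it takes a genuinely different --- and in fact more careful --- route than the paper's. The paper's proof is a short variance computation: it sandwiches $S_1(U)$ between $S_2(U)$ and $S_2(U)+n\min(r,1-r)\log(\e/2)$ using \cref{prop:s1-bound}, deduces $\variance_U S_1(U)\in\bigO{n^2r^2}$, and appeals to Chebyshev. Your concluding paragraph correctly identifies why that sandwich alone is delicate: it localizes $S_1(U)$ only to within $\bigTheta{k}$, which is the same order as $\Expval_V S_1(V)$ itself, so the resulting variance bound $\bigO{k^2}$ is of the same order as $(\Expval_V S_1(V))^2$ and Chebyshev then yields only an $n$-independent constant bound on the deviation probability rather than one that vanishes. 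Your replacement --- exploiting that for $k\in\littleo{n}$ the R\'enyi-2 entropy concentrates within $\littleo{k}$ of its \emph{maximum} $k\log\cosh(2s)$, which forces $\sum_i\bigl(\log\cosh(2s)-\log\nu_i\bigr)\in\littleo{k}$ and hence, via the comparison constant $K_s$, pins $S_1(U)$ to $S_2(U)+k\delta_s+\littleo{k}$ --- closes exactly this gap and buys a genuinely vanishing multiplicative deviation. The one step you assert without justification is $\nu_i\le\cosh(2s)$ for every symplectic eigenvalue and every $U$; this is true but does not literally follow from the statement of \cref{prop:s1-bound}. It does follow from the structure derived in \cref{ap:equal-squeeze}: writing $\sigma(U)=\cosh(2s)\bbI+\sinh(2s)M$ and using $\Omega M=-M\Omega$, one finds $-(\Omega\sigma(U))^2=\cosh^2(2s)\bbI-\sinh^2(2s)M^2$, and since the spectrum of $M^2$ is that of $W=F^\dag F$ (doubled) with $0\le W\le\bbI$, each $\nu_i^2=\cosh^2(2s)-\sinh^2(2s)w_i\in[1,\cosh^2(2s)]$ for eigenvalues $w_i$ of $W$. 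With that lemma supplied, and granting (as the paper itself does in \cref{cor:renyi2-equal-squeeze-typical}) that the asymptotics of \cref{thm:page-curve,thm:renyi2-variance} may be applied with $r=k/n\to0$, your argument is complete.
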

\begin{proof}
Using \cref{prop:s1-bound} to upper bound $\Expval_U \frac{1}{n^2}S_1(U)^2$ and to lower bound $\parentheses{\Expval_U \frac{1}{n}S_1(U)}^2$, we find that
\begin{equation}
    \label{eq:s1-var-bound}
    \begin{aligned}
        \lim_{n\to\infty}&\frac{1}{n^2} \variance_{U\in\U(n)} S_1(U) < \lim_{n\to\infty}\frac{1}{n^2} \variance_{U\in\U(n)} S_2(U) \\
        &+ 2r\log(\e/2) \lim_{n\to\infty} \frac{1}{n}\Expval_{U\in\U(n)} S_2(U)\\
        &+ r^2 \log^2(\e/2).
    \end{aligned}
\end{equation}
From \cref{thm:page-curve}, the first term in the right-hand side is always zero. Furthermore, from \cref{thm:renyi2-variance}, the second term is $\bigO{r^2}$. Hence, $\variance_U S_1(U)/n^2 \in \bigO{r^2}$, which is zero as $n\to\infty$ when $r \in \littleo{1}$.
\end{proof}

\medskip

We again emphasize that our typicality results thus far, which are summarized in \cref{tab:typicality}, only apply to the case when the initial squeezing strength on each mode is the same. On the other hand, Ref.~\cite{fukuda_typical_2019} proves strong typicality when $k\in \littleo{n^{1/3}}$ for both the von Neumann and R\'enyi-2 entropies in the general case when squeezing strengths can be different on different modes. It is for this reason that our results do not supplant those of Ref.~\cite{fukuda_typical_2019} in general, but rather only in certain regimes. To the best of our knowledge, our results are the first to address the $k\in\bigOmega{n^{1/3}}$ regime.

Ultimately, we would like to determine in exactly which regimes strong and weak typicality occur and do not occur. Our results thus far almost complete the story for the regime of equal squeezing when typicality is measured with the R\'enyi-2 entropy, since we have proven strong typicality when $k\in \littleo{n}$ and weak when $k \in \bigTheta{n}$; the missing piece is whether or not typicality is strong when $k \in \bigTheta{n}$. However, the story is even more incomplete for the regime of equal squeezing when typicality is measured with the von Neumann entropy, though we made some progress by proving that typicality is at least weak whenever $k \in \littleo{n}$. In this regime, the best known result for strong typicality is when $k\in \littleo{n^{1/3}}$ as proven in Ref.~\cite{fukuda_typical_2019}. Indeed, this is also currently the best known result in the regime of unequal squeezing. In \cref{sec:unequal-squeezing}, we will use our results on equal squeezing typicality to add to the story for unequal squeezing.

\section{Generalizing to unequal squeezing}
\label{sec:unequal-squeezing}

So far, we have considered the restricted setting where each mode $i$ is initially squeezed with strength $s_i = s$ for some $s \in \bbR$. We now generalize by allowing the squeezing strengths to be different on each mode. As such, for the remainder of this section, the squeezings will be $(s_1,\dots, s_n)$, where each $s_i \in \bbR$, and $s_{\rm max}$ and $s_{\rm min}$ are defined as $s_{\rm max} \coloneqq \max_i \abs{s_i}$ and $s_{\rm min} \coloneqq \min_i \abs{s_i}$.

To begin, we focus on the R\'enyi-2 Page curve in the regime of unequal squeezing. When squeezing is small, we can utilize the equal squeezing Page curve in \cref{thm:page-curve} to compute the Page curve for unequal squeezing. Recall that the dependence of $S_2(U)$ on $r$, $n$, and $s$ is implicit.

\medskip

\begin{corollary}
Let each $s_i \in \bbR$ and $r \in [0,1]$. Then, as $n\to\infty$,
\begin{equation}
    \Expval_{U\in\U(n)}S_2(U) = 2r(1-r)\sum_{i=1}^n s_i^2 + \bigO{rn s_{\rm max}^4}.
\end{equation}
\end{corollary}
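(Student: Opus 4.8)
The plan is to work directly with covariance matrices in the small-squeezing regime, where the entropy is controlled by its leading quadratic-in-$s$ part --- which, as it turns out, depends on the squeezings only through $\sum_i s_i^2$. Following \cref{ap:gaussian-states}, let $V$ be the image of $U$ under $\U(n)\cong\Sp(2n)\cap\O(2n)$; in the quadrature ordering $(\hat x_1,\dots,\hat x_n,\hat p_1,\dots,\hat p_n)$ it has the block form $V=\left(\begin{smallmatrix}A&-B\\B&A\end{smallmatrix}\right)$ with $A=\mathrm{Re}\,U$ and $B=\mathrm{Im}\,U$, and the output covariance matrix is $\sigma=V\sigma_0V^\top$ with $\sigma_0=\left(\begin{smallmatrix}\Lambda&0\\0&\Lambda^{-1}\end{smallmatrix}\right)$, $\Lambda=\mathrm{diag}(\e^{2s_1},\dots,\e^{2s_n})$. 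Taking the subsystem to be the first $k$ modes (no loss by Haar invariance) and letting $\Pi$ project onto the corresponding $2k$ coordinates, $\sigma_k=\Pi V\sigma_0V^\top\Pi^\top$ and $S_2(U)=\tfrac12\Tr\log\sigma_k$. Put $M=\sigma_k-I$; since $\sigma_k$ is a principal submatrix of $V\sigma_0V^\top$, Cauchy interlacing gives $\lVert M\rVert\le\e^{2s_{\rm max}}-1$ uniformly in $U$. Taylor-expanding $\log(I+M)=M-\tfrac12M^2+\tfrac13M^3-(\text{remainder})$ (valid since $\sigma_k$ is positive definite), the remainder of $\Tr\log\sigma_k$ is $\bigO{k\,s_{\rm max}^4}$ uniformly in $U$, with implied constant depending only on $s_{\rm max}$.

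Next I would Taylor-expand $\sigma_0$: from $\e^{\pm2s}=1\pm2s+2s^2\pm\cdots$ one gets $\sigma_0-I=\mathcal D_1+\mathcal D_2+(\text{terms of order}\ge s^3)$ with $\mathcal D_1=2\left(\begin{smallmatrix}\Sigma&0\\0&-\Sigma\end{smallmatrix}\right)$, $\mathcal D_2=2\left(\begin{smallmatrix}\Sigma^2&0\\0&\Sigma^2\end{smallmatrix}\right)$, $\Sigma=\mathrm{diag}(s_i)$, the order-$j$ piece having the form $c_j\left(\begin{smallmatrix}\Sigma^j&0\\0&(-1)^j\Sigma^j\end{smallmatrix}\right)$ for a rational $c_j$. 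Writing $\mathcal E_j=\Pi V\mathcal D_jV^\top\Pi^\top$ (so $\mathcal E_1=\bigO{s_{\rm max}}$, $\mathcal E_2=\bigO{s_{\rm max}^2}$), substituting $M=\mathcal E_1+\mathcal E_2+\bigO{s_{\rm max}^3}$ into the truncated expansion and collecting by powers of $s$, the order-$s$ contribution is $\Tr\mathcal E_1$, the order-$s^2$ contribution is $\Tr\mathcal E_2-\tfrac12\Tr\mathcal E_1^2$, and the order-$s^3$ contribution is $\Tr[\Pi V\mathcal D_3V^\top\Pi^\top]-\Tr\mathcal E_1\mathcal E_2+\tfrac13\Tr\mathcal E_1^3$. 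The key claim is that every order-$s$ and order-$s^3$ term vanishes identically. For any block-diagonal matrix of the form $\left(\begin{smallmatrix}\Delta&0\\0&-\Delta\end{smallmatrix}\right)$, the block structure of $V$ makes the upper-left and lower-right $n\times n$ diagonal blocks of $V\left(\begin{smallmatrix}\Delta&0\\0&-\Delta\end{smallmatrix}\right)V^\top$ negatives of each other, so its compression by $\Pi$ is traceless; this kills $\Tr\mathcal E_1$ and the $\mathcal D_3$-term. For the other two, writing $\mathcal E_1=2\left(\begin{smallmatrix}P&Q\\Q&-P\end{smallmatrix}\right)$ and $\mathcal E_2=2\left(\begin{smallmatrix}R&W\\-W&R\end{smallmatrix}\right)$ with $P,Q,R$ symmetric and $W$ antisymmetric $k\times k$ matrices built from $A$, $B$, $\Sigma$, cyclicity of the trace forces $\Tr\mathcal E_1\mathcal E_2=0$ and $\Tr\mathcal E_1^3=0$. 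Hence, uniformly in $U$, $S_2(U)=\tfrac12\Tr\mathcal E_2-\tfrac14\Tr\mathcal E_1^2+\bigO{r n\,s_{\rm max}^4}$. (These vanishings reflect the $\hat x\leftrightarrow\hat p$ symmetry of squeezed vacua under $\pi/2$ phase rotations, which independently shows $\Expval_U S_2$ to be even in each $s_i$ and symmetric under permutations of $(s_1,\dots,s_n)$, hence of the form $a(r,n)\sum_is_i^2+(\text{quartic in }s)$; what follows merely pins down $a$.)

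It then remains to average the two quadratic terms with low-order Weingarten calculus. A short computation gives $\tfrac12\Tr\mathcal E_2=2\sum_{a=1}^k\sum_{i=1}^ns_i^2\lvert U_{ai}\rvert^2$, so $\Expval_U\tfrac12\Tr\mathcal E_2=2\tfrac kn\sum_is_i^2=2r\sum_is_i^2$ using $\Expval\lvert U_{ai}\rvert^2=1/n$. Likewise $P_{ab}^2+Q_{ab}^2=\lvert\sum_is_iU_{ai}U_{bi}\rvert^2$, so $\tfrac14\Tr\mathcal E_1^2=2\sum_{a,b=1}^k\lvert\sum_is_iU_{ai}U_{bi}\rvert^2$, and the degree-$2$ Weingarten formula gives $\Expval_U\lvert\sum_is_iU_{ai}U_{bi}\rvert^2=\tfrac{1+\delta_{ab}}{n(n+1)}\sum_is_i^2$ (both $U$-factors and both $\overline{U}$-factors carry the same column index, so only the $i=j$ diagonal survives), whence $\Expval_U\tfrac14\Tr\mathcal E_1^2=\tfrac{2k(k+1)}{n(n+1)}\sum_is_i^2$. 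Subtracting and using $k=rn$, $\Expval_{U\in\U(n)}S_2(U)=\bigl(2r-\tfrac{2k(k+1)}{n(n+1)}\bigr)\sum_is_i^2+\bigO{r n\,s_{\rm max}^4}=2r(1-r)\tfrac{n}{n+1}\sum_is_i^2+\bigO{r n\,s_{\rm max}^4}$; since $\tfrac{n}{n+1}=1-\bigO{1/n}$, the prefactor equals $2r(1-r)$ up to a correction of size $\bigO{s_{\rm max}^2}$, which is absorbed into $\bigO{r n\,s_{\rm max}^4}$ as $n\to\infty$. (As a sanity check, setting $s_i=s$ recovers $\Expval_U S_2=2r(1-r)ns^2+\bigO{ns^4}$, the small-$s$ expansion of \cref{thm:page-curve}.)

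I expect the heart of the argument to be the three identities $\Tr[\Pi V\left(\begin{smallmatrix}\Delta&0\\0&-\Delta\end{smallmatrix}\right)V^\top\Pi^\top]=0$, $\Tr\mathcal E_1\mathcal E_2=0$, and $\Tr\mathcal E_1^3=0$: these are what upgrade the naive $\bigO{r n\,s_{\rm max}^3}$ error to the claimed $\bigO{r n\,s_{\rm max}^4}$, and they rest on the special block structure of passive symplectic matrices together with the quadrature-symmetric form of squeezed-vacuum covariance matrices. The Weingarten averaging, and the $\bigO{1/n}$ bookkeeping that lets $\tfrac{n}{n+1}$ be replaced by $1$, are routine by comparison.
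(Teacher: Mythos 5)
Your proof is correct, and it reaches the result by a genuinely different route from the paper's. The paper argues by symmetry: conjugating the input by single-mode phase shifters sends $s_i\mapsto-s_i$ and by beamsplitters permutes the $s_i$, so Haar invariance forces the power series for $\Expval_U S_2(U)$ to be even in each $s_i$ and permutation-symmetric, hence of the form $g(r,n)\sum_i s_i^2+\bigO{s_{\rm max}^4}$; the coefficient $g$ is then pinned down by specializing to equal squeezing and matching \cref{thm:page-curve}, and the $k$-dependence of the quartic remainder is controlled by the Jensen bound $\Expval_U S_2(U)\leq\frac{1}{2}\Tr\log\Expval_U\sigma(U)=k\log\nu$. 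You instead compute everything directly: you Taylor-expand $\Tr\log\sigma_k$ to third order, verify per-unitary that every odd-order term vanishes (your three trace identities check out, and they are exactly the finite, deterministic shadow of the paper's phase-shifter symmetry), and average the two surviving quadratic terms with degree-two Weingarten calculus. This is longer but buys two things: it is self-contained (no appeal to \cref{thm:page-curve}), and it yields the exact finite-$n$ coefficient $2r(1-r)\tfrac{n}{n+1}$, which the paper's matching argument identifies only in the limit. Two points to tidy up. First, the logarithm's Taylor series converges only when $\norm{\sigma_k-\bbI}\leq\e^{2s_{\rm max}}-1<1$, i.e.\ $s_{\rm max}<\tfrac{1}{2}\log 2$; for larger $s_{\rm max}$ the claim is vacuous anyway (both $\Expval_U S_2(U)\leq nr\log\cosh(2s_{\rm max})$ and $2r(1-r)\sum_i s_i^2$ are then themselves $\bigO{rns_{\rm max}^4}$), or one can rescale as in \cref{ap:equal-squeeze}, but this deserves a sentence. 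Second, the discrepancy $\tfrac{2r(1-r)}{n+1}\sum_i s_i^2=\bigO{rs_{\rm max}^2}$ is not literally $\bigO{rns_{\rm max}^4}$ when $s_{\rm max}\in\littleo{n^{-1/2}}$; it is only a relative $\bigO{1/n}$ correction to the leading term. Since the corollary is stated ``as $n\to\infty$'' and the paper itself fixes the coefficient only asymptotically, this is a shared imprecision rather than an error; stating the conclusion with the exact prefactor $2r(1-r)\tfrac{n}{n+1}$ would be the cleanest fix.
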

\begin{proof}
The R\'enyi-2 entropy of a Gaussian state with covariance matrix $\sigma$ is proportional to $\log \det \sigma$. $\log$ is real analytic and $\det \sigma$ is analytic in the parameters of the initial covariance matrix $\sigma_0$. Hence, $S_2$ can be written as a power series in $s_i$. There exists a passive Gaussian unitary, specifically a product of one-mode phase shifters, that acts on $\sigma_0$ via the transformation $s_i \to -s_i$. Hence, by the translational invariance of the Haar measure --- that is, the invariance of the Haar measure under the application of any fixed unitary --- the power series must be an expansion in $s_i^2$. Furthermore, there exists a passive Gaussian unitary, specifically a product of beamsplitters, that acts on $\sigma_0$ via the transformation $s_i \to s_{\tau(i)}$ for some permutation $\tau$ of $n$ elements. Therefore, again by the translational invariance of the Haar measure, the power series must be symmetric under $s_i \to s_{\tau(i)}$. It follows that the $\bigO{s_i^2}$ term must be of the form $g(r,n)\sum_i s_i^2$ for some function $g(r, n)$. When all $s_i$ are equal, the power series must reduce to \cref{thm:page-curve}, which fixes $g(r, n)$ to be $2r(1-r)$.

The next term is $\bigO{s_{\rm max}^4}$, but what is the $k$ dependence? We will show that the $k$ dependence is at most linear, proving that the remaining terms in the power series are $\bigO{k s_{\rm max}^4}$.
Recall that $S_2(U) = \frac{1}{2}\Tr \log \sigma(U)$ where $\sigma(U)$ is the covariance matrix for the state generated by $U$ from the initial product squeezed state $\sigma_0$. Since the $\log$ function is concave, Jensen's inequality implies that $\Expval_U S_2(U) \leq \frac{1}{2}\Tr \log \Expval_U \sigma(U)$. $\Expval_U \sigma(U)$ is calculated in \cref{eq:sigma-U-trace} in \cref{ap:gaussian-states} to be $\nu \bbI_{2k\times 2k}$, where $\nu = \frac{1}{n}\sum_i \cosh(2s_i) \leq \cosh(2s_{\rm max})$. Therefore, $\Expval_U S_2(U) \in \bigO{k \log \nu}$.
\end{proof}

\medskip

One particularly interesting application of this corollary is when each $s_i \in \bigO{1/\sqrt{n}}$. In this case, the average total number of bosons in the $n$ modes is $N=\sum_i \sinh^2(s_i) \in \bigO{1}$. Thus, when one considers a constant number $N$ of bosons in the system as the number of modes is taken to infinity, one finds the Page curve to be $2r(1-r)N$.

\medskip

We now shift our focus to entanglement typicality in the regime of unequal squeezing. The results described in the remainder of this section are summarized in \cref{tab:typicality}. 
Ideally, we would like to make further statements about entanglement in the regime of unequal squeezing by using our previous results on equal squeezing. One way to potentially proceed is to use the equal squeezing results to bound the unequal squeezing quantities. Intuitively, we expect that $\Expval_U S_2(U)$ for arbitrary unequal squeezing is upper bounded by $\Expval_U S_2(U)\rvert_{\text{all }s_i=s_{\rm max}}$ and lower bounded by $\Expval_U S_2(U)\rvert_{\text{all }s_i=s_{\rm min}}$. In other words, by \textit{increasing} all of the squeezing strengths until they all are equal, the average entanglement will \textit{increase}. In this spirit, we make the following conjecture.

\medskip

\begin{conjecture}
\label{conj:entanglement-derivative}
Let $r\in[0,1]$, $s_i \in \bbR$, and $n \in \bbN$. Then, for any $i$,
\begin{align}
    &0 \leq \frac{\partial}{\partial (s_i^2)} \parentheses{\Expval_{U\in\U(n)} \frac{1}{n}S_2(U)}, \text{  and}\\
    &0 \leq \frac{\partial}{\partial (s_i^2)} \parentheses{\Expval_{U\in\U(n)} \frac{1}{n^2}S_2(U)^2}.
\end{align}
\end{conjecture}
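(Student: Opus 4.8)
\emph{A strategy for proving \cref{conj:entanglement-derivative}.} The two inequalities are single-parameter monotonicity statements, and my plan is to reduce each to a Haar-symmetry trick plus a pointwise estimate. First I would note that $\Expval_{U} S_2(U)$ and $\Expval_{U} S_2(U)^2$ are even functions of each $s_i$ --- they are invariant under the mode-$i$ phase shifter implementing $s_i \to -s_i$, exactly as in the proof of the preceding corollary --- so it is enough to show each is non-decreasing in $s_i$ on $[0,\infty)$, the value at $s_i = 0$ following by continuity. Fix the squeezings $s_j$ for $j \neq i$, write $x \coloneqq s_i \geq 0$, and let $O$ be the orthogonal symplectic matrix representing $U$. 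Then $S_2(U) = \frac{1}{2}\log\det\sigma_k(U)$, and the reduced covariance matrix takes the rank-two form $\sigma_k(U) = B + \e^{2x} v v^T + \e^{-2x} w w^T$, where $v$ and $w$ are the columns of $P_k O$ indexed by the position and momentum quadratures of mode $i$ ($P_k$ restricting to the subsystem quadratures), and $B$ collects the (frozen, $x$-independent) contribution of the other $n-1$ modes.

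For the first inequality, differentiating under the expectation gives $\partial_x \Expval_{U} S_2(U) = \Expval_{U}\bigl[\e^{2x} v^T \sigma_k(U)^{-1} v - \e^{-2x} w^T \sigma_k(U)^{-1} w\bigr]$. I would then symmetrize the two terms using the $\pi/2$ phase shifter $\Pi$ on mode $i$, whose symplectic form conjugates $\sigma_0$ to $\sigma_0|_{s_i \to -s_i}$, fixes $B$, and swaps $v$ and $w$ up to sign; by translation invariance of the Haar measure $\Expval_{U} f(U) = \Expval_{U} f(U\Pi)$, and applying this to the first term rewrites the derivative as $D(-x) - D(x)$, where $D(z) \coloneqq \Expval_{U}[\e^{-2z} w^T \sigma_k(U;z)^{-1} w]$ and $\sigma_k(U;z)$ denotes the reduced covariance matrix when mode $i$ carries squeezing $z$. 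It then suffices to show $D$ is non-increasing, and I would establish the stronger pointwise claim that $z \mapsto \e^{-2z} w^T \sigma_k(U;z)^{-1} w$ is non-increasing for every $U$: two applications of the Sherman--Morrison formula collapse this quantity to $1/(1 + y/\mu(y))$ with $y = \e^{2z}$ and $\mu(y) = a - y c^2/(1 + y b)$, where $a = w^T B^{-1} w$, $b = v^T B^{-1} v$, and $c = v^T B^{-1} w$; since $\mu$ is positive with $\mu'(y) = -c^2/(1+yb)^2 \leq 0$, the ratio $y/\mu(y)$ is non-decreasing and the claim follows (the degenerate cases $w = 0$ or $B$ singular being handled by continuity and a rank count, and $k = n$ being trivial). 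Averaging over $U$ then gives $D(-x) \geq D(x)$ for $x \geq 0$, which is the first inequality.

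For the second inequality the same manipulations reduce it to showing $\Expval_{U}\bigl[\Phi_U(-x)\,\psi_U(-x) - \Phi_U(x)\,\psi_U(x)\bigr] \geq 0$, where $\Phi_U(z) \coloneqq \frac{1}{2}\log\det\sigma_k(U;z)$ is the R\'enyi-2 entropy as a function of the mode-$i$ squeezing and $\psi_U(z) \coloneqq \e^{-2z} w^T \sigma_k(U;z)^{-1} w$ is the pointwise-monotone quantity above, so that $\psi_U(-x) \geq \psi_U(x) \geq 0$ pointwise. One would be done if $\Phi_U(-x) \geq \Phi_U(x)$ also held pointwise, but this is precisely where the approach breaks, and I expect it to be the main obstacle: $\det\sigma_k(U;z) = \det B \cdot [(1 + y b)(1 + y^{-1} a) - c^2]$ is strictly convex in $y = \e^{2z}$ with minimum at $y = \sqrt{a/b}$, which sits at $z = 0$ only when $a = b$, i.e. $v^T B^{-1} v = w^T B^{-1} w$ --- an equality that holds on average over $U$ (again by the phase-shift swap, which exchanges $a$ and $b$ while fixing $B$) but not configuration by configuration. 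Thus the per-configuration R\'enyi-2 entropy is not even in $s_i$, and the two pointwise inequalities cannot simply be multiplied. To close the gap I see two plausible routes: (i) a correlation inequality of Harris/FKG type asserting that $\Phi_U(\cdot)$ and $\psi_U(\cdot)$ are positively correlated over the Haar measure; or (ii) proving instead that $\variance_{U} S_2(U)$ is non-decreasing in $s_i^2$, which together with the first inequality and $\Expval_{U} S_2(U) \geq 0$ implies the second, perhaps by showing the Weingarten-series coefficients for the variance are non-negative.

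As consistency checks, the strategy agrees with the known special cases: the leading term $2r(1-r)\sum_j s_j^2$ of the unequal-squeezing Page curve is manifestly non-decreasing in each $s_i^2$, and in the equal-squeezing regime both quantities equal the explicit series of \cref{thm:page-curve} and \cref{thm:renyi2-variance}, whose monotonicity in $\tanh^2(2s)$ --- hence in $s^2$ --- reduces to the non-negativity of the polynomials $G_\ell(r)$ and the rationals $\omega^{(d)}$ appearing there.
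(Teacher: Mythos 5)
First, a point of framing: the statement you are proving is stated in the paper as a \emph{conjecture}. The authors explicitly say they were unable to prove it, identifying the inverse $\sigma(U)^{-1}$ appearing inside the Haar average as the obstruction, so there is no paper proof to compare against. Judged on its own terms, your proposal is --- as you candidly say --- not a complete proof: the second inequality, on $\Expval_{U}S_2(U)^2$, is left open. The gap is exactly where you locate it: $\frac{1}{2}\log\det\sigma_k(U;z)$ is not even in $z$ configuration by configuration (its minimum over $y=\e^{2z}$ sits at $y=\sqrt{a/b}$, which is $1$ only when $v^TB^{-1}v=w^TB^{-1}w$), so the two pointwise monotonicities cannot simply be multiplied, and neither proposed repair --- an FKG-type correlation inequality over the Haar measure, or monotonicity of $\variance_U S_2(U)$ in $s_i^2$ --- is carried out. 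Until one of them is, the conjecture remains a conjecture. Also, your closing consistency checks are heuristic rather than confirmations: \cref{thm:page-curve} and \cref{thm:renyi2-variance} are asymptotic in $n$ while the conjecture is at fixed $n$, and the non-negativity of the $\omega^{(d)}$ is not established in the paper (only $\omega^{(2)}=1/2$ is computed).

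That said, your treatment of the \emph{first} inequality appears to be a genuine and essentially complete proof of that half of the statement, which goes beyond what the paper achieves. The rank-two decomposition $\sigma_k(U;x)=B+\e^{2x}vv^T+\e^{-2x}ww^T$ is correct, since the initial covariance matrix is a sum of one rank-one pair per mode and $\hat P\eta(U)$ carries the mode-$i$ pair to $v$ and $w$; differentiation under the expectation is justified because the smallest eigenvalue of $\sigma_k(U;x)$ is at least $\e^{-2s_{\rm max}}$ uniformly in $U$; the Haar shift by the mode-$i$ $\pi/2$ phase shifter does swap $v\leftrightarrow\pm w$ while fixing $B$, converting the derivative into $D(-x)-D(x)$; and the double Sherman--Morrison reduction to $(1+y/\mu(y))^{-1}$ with $\mu(y)=a-yc^2/(1+yb)$ positive (by Cauchy--Schwarz, $c^2\leq ab$) and non-increasing yields the pointwise monotonicity. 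The only loose ends are the configurations where $B$ is singular; for $k\leq n-1$ these form a measure-zero subset of $\U(n)$ (and $k=n$ is trivial since the global state is pure), or one can regularize $B\to B+\epsilon\bbI$ and pass to the limit, so they do not threaten the argument. The key move --- never averaging $\sigma(U)^{-1}$ directly, but comparing it at $\pm x$ under the Haar symmetry and proving a pointwise estimate --- is precisely what circumvents the obstruction the authors name, and I would encourage you to write this half up carefully as a standalone proposition, since it already suffices (via \cref{eq:conjecture-lowerbound}) for the lower bound on the unequal-squeezing Page curve even though the typicality claims of \cref{rmk:unequal-squeeze-typical} still require the unproven second inequality.
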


\medskip

\cref{conj:entanglement-derivative} seems intuitive --- by increasing the magnitude of any individual squeezing strength, the number of bosons in the system increases, and therefore it would seem surprising for the average entanglement to decrease. We note that, somewhat counterintuitively, one can find explicit unitaries and squeezing configurations for which $\frac{\partial}{\partial (s_i^2)} S_2(U) < 0$ (we provide an example in our code repository \cite{joseph_t_iosue_glo_2022}), and hence the presence of the $\Expval_U$ is necessary for the conjecture. Despite its intuitiveness, we have been unable to rigorously prove \cref{conj:entanglement-derivative}. The derivative of $S_2(U) = \frac{1}{2} \log \det \sigma(U)$ is $\frac{1}{2}\Tr\pargs{\sigma(U)^{-1} \frac{\partial \sigma(U)}{\partial(s_i)^2}}$. The difficulty in computing the expectation value over $U\in\U(n)$ arises due to the presence of the inverse $\sigma(U)^{-1}$.

Nonetheless, under the assumption that \cref{conj:entanglement-derivative} is true, we can immediately upper and lower bound the R\'enyi-2 Page curve for an arbitrary squeezing configuration $(s_1,\dots,s_n)$ by using \cref{thm:page-curve} with $s=s_{\rm max}$ and $s=s_{\rm min}$ respectively,
\begin{align}
    &\lim_{n\to\infty}\Expval_{U\in\U(n)}\frac{1}{n} S_2(U) \geq \lim_{n\to\infty}\Expval_{U\in\U(n)}\frac{1}{n} S_2(U)\big\rvert_{s_i=s_{\rm min}}, \label{eq:conjecture-lowerbound}\\
    &\lim_{n\to\infty}\Expval_{U\in\U(n)}\frac{1}{n} S_2(U) \leq \lim_{n\to\infty}\Expval_{U\in\U(n)}\frac{1}{n} S_2(U)\big\rvert_{s_i=s_{\rm max}}.
\end{align}
Furthermore, the conjecture implies that
\begin{equation}
    \label{eq:conjecture-upperbound}
    \begin{aligned}
        \lim_{n\to\infty}&\Expval_{U\in\U(n)}\frac{1}{n^2} S_2(U)^2 \\
        &\leq \lim_{n\to\infty}\Expval_{U\in\U(n)}\frac{1}{n^2} S_2(U)^2\big\rvert_{s_i=s_{\rm max}}.
    \end{aligned}
\end{equation}
From this, we can also make statements on weak typicality for unequal squeezers by bounding the variance. The variance is $\Expval_U S_2(U)^2 - (\Expval_U S_2(U))^2$. We can therefore upper bound the variance by upper bounding the first term with \cref{eq:conjecture-upperbound} and lower bounding the second term with \cref{eq:conjecture-lowerbound}.

\medskip

\begin{remark}
\label{rmk:unequal-squeeze-typical}
Let $r\in [0,1]$ and $s_i \in \bbR$. If \cref{conj:entanglement-derivative} is true and if the subsystem size $k=rn \in \littleo{n}$ or $s_{\rm max} \in \littleo{1}$, then the R\'enyi-2 entropy is weakly typical. Similarly, if $k \in \littleo{n}$, then the von Neumann entropy is weakly typical. 
\end{remark}
\begin{proof}
We can bound the variance of the R\'enyi-2 entropy as
\begin{equation}
    \begin{aligned}
        \lim_{n\to\infty}&\frac{1}{n^2}\variance_U S_2(U) \leq \lim_{n\to\infty} \frac{1}{n^2}\\
         &\times \bigg[\Expval_U \parentheses{\frac{1}{n}S_2(U)\rvert_{\text{all } s_i = s_{\rm max}}}^2\\
         &\phantom{\times\bigg[} - \parentheses{\Expval_U \frac{1}{n}S_2(U)\rvert_{\text{all } s_i = s_{\rm min}}}^2 \bigg].
    \end{aligned}
\end{equation}
From \cref{thm:renyi2-variance}, the $\Expval_U$ can be brought inside the parentheses in the first term. Then, the right hand side can be computed using \cref{thm:page-curve}, which gives $\lim_{n\to\infty}\frac{1}{n^2}\variance_U S_2(U) \in \bigO{r^2 s_{\rm max}^4}$. Hence $S_2$ is weakly typical if $r \in \littleo{1}$ or $s_{\rm max} \in \littleo{1}$.

For the von Neumann entropy, we again make use of \cref{eq:s1-var-bound}, which gives $\lim_{n\to\infty}\frac{1}{n^2}\variance_U S_1(U) \in \bigO{r^2 s_{\rm max}^2}$, which goes to zero if $r \in \littleo{1}$.
\end{proof}

\medskip

In summary, we have used our results from \cref{sec:expval,sec:variance} on entanglement in the equal squeezing regime to prove various statements in the unequal squeezing regime. In particular, under the assumption that \cref{conj:entanglement-derivative} is true, we prove both the R\'enyi-2 and von Neumann entropies are weakly typical whenever the subsystem size $k$ scales as $k \in \littleo{n}$. The best known result for the presence of strong typicality in the unequal squeezing regime is when $k\in \littleo{n^{1/3}}$ as proven in Ref.~\cite{fukuda_typical_2019}. The current status of rigorous results on typical entanglement in Gaussian bosonic systems is summarized in \cref{tab:typicality}.

\section{Conclusion}
\label{sec:conclusion}

In this work, we studied the average and variance of the R\'enyi-2 and von Neumann entropies in random bosonic Gaussian systems. We computed the R\'enyi-2 Page curve and Page correction when all the initial squeezing strengths are equal, and we proved various results on the typicality of the R\'enyi-2 and von Neumann measures of entanglement. Given that the R\'enyi-2 entropy is a function of only the purity, it is often tractable to measure experimentally. It would be interesting to compare the analytic formula in \cref{thm:page-curve} to an experimental Gaussian Boson Sampling device to determine how well it is generating and maintaining bipartite entanglement.

We have identified several open problems that would generalize and expand our results. One such open problem is to prove \cref{conj:entanglement-derivative}, which would allow our results on the Page curve to apply more generally. Perhaps the most important remaining task is to complete \cref{tab:typicality} by proving typicality of entanglement in the remaining regimes, such as the regime of unequal squeezing and the von Neumann entropy. 

For the latter, we note two potentially fruitful avenues. The first comes from the formula for the von Neumann entropy of a Gaussian state with covariance matrix $\sigma$ given in Ref.~\cite{kim_renyi_2018}. Let $D_\sigma \coloneqq \sqrt{\det\sigma} = \e^{S_2}$, where $S_2$ is the R\'enyi-2 entropy of $\sigma$. Ref.~\cite{kim_renyi_2018} derives expressions for all the R\'enyi-$\alpha$ entropies, including the von Neumann entropy, as functions of $D_\sigma$. In our work, we found an expression for $\Expval_U S_2(U)$ in the regime of equal squeezers by expanding $S_2(U)$ in a power series and exactly computing asymptotic expectation values over the unitary Haar measure. To find the R\'enyi-$\alpha$ entropy $\Expval_U S_\alpha(U)$, one could similarly attempt to expand in powers of $D_\sigma$, and therefore compute $\Expval_U S_\alpha(U)$ by computing $\Expval_U \e^{j S_2(U)}$ for various values of $j$. A second potential way of computing $\Expval_U S_1(U)$ is similar, where one could use the formula
\begin{equation}
    S_1 = \frac{1}{2}\log \det\bargs{\frac{\sigma+\i \Omega}{2}} + \frac{1}{2}\Tr\bargs{\operatorname{arccoth}(\i \Omega \sigma)\i \Omega \sigma},
\end{equation}
where $\Omega$ is the $2n\times 2n$ symplectic form given in \cref{eq:symplectic-form} \cite{wilde-lecture-notes,serafini_quantum_2017,hackl_bosonic_2021}. The equations resulting from using these methods with the Weingarten calculus may potentially be too difficult to simplify at first glance, as was the case in this work. However, it would be interesting to see if the presence of the $r \mapsto 1-r$ symmetry is enough, as it was in this work, to reduce the complicated Weingarten expressions to something much more tractable and simple.

In this paper, we have only considered truly Haar-random unitaries. One important question concerns how these results translate to the case where one uses random local passive Gaussian gates to generate random unitary circuits of finite depth. Indeed, an interesting open problem is to determine the depth dependence of entanglement, sampling complexity, and gate complexity in linear optical circuits. Sampling complexity refers to the classical complexity of generating samples from the output probability distribution defined by a fixed depth linear optical circuit, and gate complexity refers to the minimum number of nearest-neighbor beamsplitters required to generate the probability distribution. Currently, the precise relationship between entanglement and complexity is largely unknown.
Numerical analyses of entanglement dynamics in linear optical circuits have been reported in Refs.~\cite{zhuangScramblingComplexityPhase2019,zhouNonunitaryEntanglementDynamics2021}. Partial analytical work was done in Ref.~\cite{zhang_entanglement_2021}, but only in the regime where one or a small number of modes are not initially vacuum. On the contrary, a typical Gaussian Boson Sampling experiment initially squeezes many or all of the modes. Information on such entanglement growth could yield insights on implementations of Gaussian Boson Sampling experiments as well as the complexity of computing output probabilities from such experiments.
On the complexity side, many recent works have studied classical simulation and classical sampling complexity of linear optical circuits in certain regimes of low depth and the phase transition at which the complexity passes from easy to hard \cite{deshpandeDynamicalPhaseTransitions2018,muraleedharanQuantumComputationalSupremacy2019,ohClassicalSimulationBosonic2021,maskaraComplexityPhaseDiagram2022,ohClassicalSimulationBoson2022}. 
Indeed, both entanglement and complexity are expected grow with depth, and further study may reveal that the relationship is even more intimate.

In this work, we have characterized the entanglement properties of Gaussian states such as they arise in Gaussian Boson Sampling. In this setting, we also know that sampling from Fock basis measurements of the Gaussian state is computationally intractable. It remains an exciting question to better understand the role that entanglement plays in this context. An important aspect of this direction is to understand how entanglement and measurement bases interact. After all, some form of non-Gaussianity is crucial to generate complexity in bosonic computations \cite{chabaudResourcesBosonicQuantum2022}.

\begin{acknowledgments}
We thank Marcel Hinsche, Marios Ioannou, Kunal Sharma, and Brayden Ware for discussions related to the topic of this paper. 
JTI, AE, and AVG acknowledge support from the DoE ASCR Accelerated Research in Quantum Computing program (award No.~DE-SC0020312), DARPA SAVaNT ADVENT, AFOSR, DoE QSA, NSF QLCI (award No.~OMA-2120757), DoE ASCR Quantum Testbed Pathfinder program (award No.~DE-SC0019040), NSF PFCQC program, ARO MURI, and AFOSR MURI.
JTI thanks the Joint Quantum Institute at the University of Maryland for support through a JQI fellowship.
DH acknowledges funding from the US Department of Defense through a QuICS Hartree fellowship.
AD acknowledges funding provided by the Institute for Quantum Information and Matter, an NSF Physics Frontiers Center (NSF Grant PHY-1733907), the National Science Foundation RAISE-TAQS 1839204, and Amazon Web Services, AWS Quantum Program.
Specific product citations are for the purpose of clarification only, and are not an endorsement by the authors or NIST.
\end{acknowledgments}

\toc

\clearpage
\onecolumngrid\appendix
\renewcommand{\contentsname}{Appendices}
\tableofcontents

\section{Preliminaries}
\label{ap:prelim}

In this preliminary appendix, we will establish some notation and equations that will be used throughout the rest of the appendices. In particular, in \cref{ap:gaussian-states}, we review bosonic Gaussian states and describe our setup. In \cref{ap:weingarten}, we describe integration over the unitary group with the Weingarten calculus. Finally, in \cref{ap:equal-squeeze}, we restrict our attention to the case when all initial squeezing strengths are equal and derive a series formula for the R\'enyi-2 entropy that is used in many of our proofs.

\subsection{Bosonic Gaussian states}
\label{ap:gaussian-states}

Here, we describe the setup and fix the notation required for the proofs of our main results. We consider a very similar setup as the one described in Ref.~\cite{fukuda_typical_2019} and use much of the same notation as them. For a review of bosonic Gaussian states, we recommend Ref.~\cite{serafini_quantum_2017}. Since we are only interested in entanglement properties, the first moments --- displacements --- of the Gaussian states will be irrelevant, and we will ignore them.

We consider a system of $n$ bosonic modes. Each mode $1 \leq i \leq n$ is initially in a squeezed state with squeezing strength $s_i \in \bbR$. Define the diagonal matrix $Z = \mathrm{diag}\pargs{\e^{2 s_1}, \dots, \e^{2s_n}}$. The initial state can be represented by the covariance matrix $\sigma_0 = Z \oplus Z^{-1}$. Define $A = \frac{1}{2}(Z - Z^{-1})$ and $B = \frac{1}{2}(Z + Z^{-1})$.

The set of all passive Gaussian unitaries --- that is, energy-conserving unitaries --- acting on $n$ modes is $\Sp(2n) \cap \O(2n)$ which acts on the covariance matrix by conjugation. Here $\O(2n)$ is the orthogonal group of $2n\times 2n$ matrices, and $\Sp(2n)$ is the real symplectic group of $2n\times 2n$ matrices defined with respect to the symplectic form
\begin{equation}
    \label{eq:symplectic-form}
    \Omega \coloneqq \begin{pmatrix}0_{n\times n} & \bbI_{n\times n}\\-\bbI_{n\times n} & 0_{n\times n} \end{pmatrix}.
\end{equation}
Let $\eta\colon \U(n)\to \Sp(2n)\cap \O(2n)$ be the isomorphism
\begin{equation}
    \eta(U) = \begin{pmatrix} \Re(U) & \Im(U)\\ -\Im(U) & \Re(U) \end{pmatrix}.
\end{equation}
We evolve the initial state with covariance matrix $\sigma_0$ by a passive Gaussian unitary, which corresponds to a $U \in \U(n)$. The resulting state is $\tilde\sigma(U) \coloneqq \eta(U) \sigma_0 \eta(U^\dag) = \eta(U) \sigma_0 \eta(U)^T$.

Define the $k\times n$ matrix $P$ and the $n\times n$ projector $\Pi$ as
\begin{equation}
    P \coloneqq \begin{pmatrix} \bbI_{k\times k} & 0_{k\times (n-k)} \end{pmatrix}, \qquad
    \Pi \coloneqq \begin{pmatrix}\bbI_{k\times k} & 0_{k\times (n-k)}\\0_{(n-k)\times k} & 0_{(n-k)\times (n-k)} \end{pmatrix} = \bbI_{k\times k} \oplus 0_{(n-k)\times (n-k)}.
\end{equation}
Then let $\hat P \coloneqq P \oplus P$ and $\hat \Pi = \Pi \oplus \Pi$. The covariance matrix corresponding to the reduced state on the first $k\leq n$ modes is $\sigma(U) \coloneqq \hat P \tilde\sigma(U) \hat P^T$. Denote the element-wise complex conjugate of the unitary $U$ by $\bar U$, and the conjugate transpose by $U^\dag$. By simply doing the matrix multiplication, one finds that
\begin{equation}
    \label{eq:sigma-U}
    \sigma(U) = \frac{1}{2}\begin{pmatrix} 
        P \brackets{U B U^\dag + \bar U B \bar U^\dag + U A \bar U^\dag + \bar U A U^\dag} P^T & -\i P \brackets{U B U^\dag - \bar U B \bar U^\dag - U A \bar U^\dag + \bar U A U^\dag} P^T \\
        \i P \brackets{U B U^\dag - \bar U B \bar U^\dag + U A \bar U^\dag - \bar U A U^\dag} P^T  &  P \brackets{UBU^\dag + \bar U B \bar U^\dag - U A \bar U^\dag - \bar U A U^\dag} P^T
    \end{pmatrix}.
\end{equation}
Note that $\sigma(U)$ is a covariance matrix on $k$ modes and is correspondingly a positive $2k\times 2k$ matrix. Throughout this paper, we define $r \coloneqq k / n$. One can derive from \cref{eq:weingarten-formula} that the average over the Haar measure is $\Expval_U U B U^\dag = \Expval_U \bar U B \bar U^\dag = \frac{\Tr B}{n}\bbI_{n \times n}$, whereas all the other terms have expectation value $0$ since they do not contain an even number of $U$'s and $\bar U$'s. Therefore,
\begin{equation}
    \label{eq:sigma-U-trace}
    \Expval_{U\in\U(n)} \sigma(U) = \frac{\Tr B}{n}\bbI_{2k\times 2k}.
\end{equation}

The \textit{symplectic eigenvalues} of $\sigma(U)$ are the positive eigenvalues of $\i \Omega \sigma(U)$. There are $k$ symplectic eigenvalues labeled as $\nu_i$ for $1\leq i \leq k$. Let the von Neumann entropy of the reduced state be $S_1(U)$, and the R\'enyi-2 entropy of the reduced state be $S_2(U)$. Then $S_j(U) = \sum_{i=1}^k h_j(\nu_i)$, where $h_1(x) = \frac{x+1}{2}\log \frac{x+1}{2}-\frac{x-1}{2}\log\frac{x-1}{2}$ and $h_2(x) = \log x$ \cite{serafini_quantum_2017}. The R\'enyi-2 entropy of the reduced state takes a particularly nice form in terms of the standard eigenvalues of $\sigma(U)$, namely $S_2(U) = \frac{1}{2}\log \det \sigma(U) = \frac{1}{2}\Tr \log \sigma(U)$. The dependence of $S_1(U)$ and $S_2(U)$ on $r$, $n$, and $s$ is implicit.

\subsection{Weingarten calculus}
\label{ap:weingarten}

Since we are interested in average entanglement, we will be averaging over the unitary group $\U(n)$ with respect to the unique unit normalized Haar measure. To do so, we will use the \textit{Weingarten calculus} \cite{weingarten_asymptotic_1978,collins_moments_2002}. For a matrix $U$, let $U_{ij}$ denote the entry in row $i$ and column $j$. Then,
\begin{equation}
    \label{eq:weingarten-formula}
    \Expval_{U\in\U(n)} U_{i_1j_1}\dots U_{i_qj_q} 
    \bar U_{i'_1j'_1}\dots \bar U_{i'_qj'_q} = \sum_{\sigma,\tau \in S_q} \delta_{i_1i'_{\sigma(1)}}\dots \delta_{i_qi'_{\sigma(q)}}\delta_{j_1j'_{\tau(1)}}\dots \delta_{j_qj'_{\tau(q)}} \mathrm{Wg}(\sigma \tau^{-1}, n),
\end{equation}
where $S_q$ denotes the permutation group on $q$ elements. $\mathrm{Wg}$ is called the \textit{Weingarten function}. In our proofs, we will need the asymptotic form of the Weingarten function, which is given by
\begin{equation}
    \label{eq:asymptotic-weingarten}
    \mathrm{Wg}(\sigma, n) = \frac{1}{n^{q+\abs{\sigma}}} \prod_i (-1)^{\abs{c_i}-1}C_{\abs{c_i} - 1} + \bigO{n^{-q-\abs{\sigma} - 2}},
\end{equation}
where $\abs{\sigma}$ denotes the minimum number of transpositions needed to generate the permutation $\sigma$, $C_m=(2m)! / m!(m+1)!$ is the $m^{\rm th}$ Catalan number, and $\sigma$ is a product of cycles $c_i$ of length $\abs{c_i}$.

\subsection{Series formula for the R\'enyi-2 entropy}
\label{ap:equal-squeeze}

In this subappendix, we Taylor expand $S_2(U) = \frac{1}{2}\Tr \log \sigma(U)$ and derive a series formula for the R\'enyi-2 entropy when all the initial squeezing values are equal. Hence, for each $1 \leq i \leq n$, we set $s_i = s$. Crucially, the resulting formula is a series in the squeezing strength $s$, and the effect of the unitary $U$ is separated from that of the squeezing strength $s$.

We would like to apply the Taylor series for the matrix logarithm, and hence must first consider its convergence. We find that
\begin{align}
    \norm{\sigma(U) - \bbI_{2k \times 2k}}
    &= \norm{\hat P \eta(U) \parentheses{\sigma_0 - \bbI_{2n\times 2n}} \eta(U)^T \hat P^T}\\
    &= \norm{\hat\Pi \eta(U) \parentheses{\sigma_0 - \bbI_{2n\times 2n}} \eta(U)^T \hat \Pi^T}\\
    &\leq \norm{\hat\Pi}^2 \norm{\eta(U) \parentheses{\sigma_0 - \bbI_{2n\times 2n}} \eta(U)^T}\\
    &= \norm{\eta(U) \parentheses{\sigma_0 - \bbI_{2n\times 2n}} \eta(U)^T}\\
    &= \norm{\sigma_0 - \bbI_{2n\times 2n}} \\
    &= \max\cbargs{\abs{\e^{2s} - 1}, \abs{\e^{-2s} - 1}}\\
    &= \e^{2 \abs{s}} - 1,
\end{align}
and therefore the Taylor series for $\log$,
\begin{equation}
    \log \sigma(U) = -\sum_{j=1}^{\infty} \frac{(-1)^j}{j} \parentheses{\sigma(U) - \bbI_{2k\times 2k}}^j,
\end{equation}
converges for all $\abs{s} < R \coloneqq \frac{1}{2}\log 2$. Since $\sigma(U)$ is a positive, real symmetric matrix, this expression is indeed real and nonnegative. To make this work for all $s \in \bbR$, we can let $N \geq 1$ and consider
\begin{align}
    S_2(U)
    &= \frac{1}{2}\log\det\sigma(U)\\
    &= \frac{1}{2}\log\pargs{N^{2k} \det\frac{\sigma(U)}{N}}\\
    &= k\log N + \frac{1}{2}\Tr \log\pargs{\frac{1}{N}\sigma(U)},
\end{align}
which follows from the fact that the determinant of products is equal to the product of determinants. For any given $s\in\bbR$, we can choose $N$ large enough such that $\norm{\frac{1}{N}\sigma(U)-\bbI} < 1$ and therefore the Taylor series for $\log$ can be used. We therefore find that for large enough $N$,
\begin{equation}
    \label{eq:log-power-series}
    S_2(U) = rn \log N-\frac{1}{2}\Tr \sum_{j=1}^{\infty} \frac{(-1)^j}{j} \parentheses{\frac{1}{N}\sigma(U) - \bbI_{2k\times 2k}}^j.
\end{equation}

When all $s_i = s$, $A$ simplifies to $A = \sinh(2s)\bbI_{n\times n}$ and $B$ to $B = \cosh(2s)\bbI_{n\times n}$, and therefore $\sigma(U)$ simplifies to $\sigma(U) = \cosh(2s)\bbI_{2k\times 2k} + \sinh(2s)M$, where
\begin{equation}
    \label{eq:M-matrix}
    M \coloneqq \begin{pmatrix}
        P \Re(\bar U U^\dag) P^T & P \Im(\bar U U^\dag) P^T\\
        P \Im(\bar U U^\dag) P^T & - P \Re(\bar U U^\dag) P^T
    \end{pmatrix}.
\end{equation}
With \cref{eq:log-power-series},
\begin{align}
    S_2(U)
    &= nr\log N -\frac{1}{2}\Tr \sum_{j=1}^{\infty} \frac{(-1)^j}{j} \parentheses{\brackets{\frac{1}{N}\cosh(2s)-1}\bbI_{2k\times 2k} + \frac{1}{N}\sinh(2s)M}^j\\
    &= nr\log N-\frac{1}{2}\Tr \sum_{j=1}^{\infty} \frac{(-1)^j}{j} \sum_{\ell=0}^j \binom{j}{\ell} \frac{1}{N^\ell}\sinh^{\ell}(2s)\brackets{\frac{1}{N}\cosh(2s)-1}^{j-\ell} M^\ell\\
    \begin{split}
        &= nr\log N-rn \sum_{j=1}^\infty \frac{(-1)^j}{j} \brackets{\frac{1}{N}\cosh(2s) -1}^j\\
        &\qquad\quad-\frac{1}{2}\Tr \sum_{j=1}^{\infty} \frac{(-1)^j}{j} \sum_{\ell=1}^j \binom{j}{\ell} \frac{1}{N^\ell}\sinh^{\ell}(2s)\brackets{\frac{1}{N}\cosh(2s)-1}^{j-\ell} \Tr M^\ell
    \end{split}\\
    &= nr\log N + nr\log(\cosh(2s)/N) -\frac{1}{2} \sum_{j=1}^{\infty} \frac{(-1)^j}{j} \sum_{\ell=1}^j \binom{j}{\ell} \frac{1}{N^\ell}\sinh^{\ell}(2s)\brackets{\frac{1}{N}\cosh(2s)-1}^{j-\ell} \Tr M^\ell\\
    &= nr\log\cosh(2s) -\frac{1}{2} \sum_{\ell=1}^{\infty} \frac{1}{N^\ell}\sinh^\ell(2s)\Tr M^\ell\sum_{j=\ell}^\infty \frac{(-1)^j}{j} \binom{j}{\ell}\brackets{\frac{1}{N}\cosh(2s)-1}^{j-\ell} \\
    &= nr\log\cosh(2s) -\frac{1}{2} \sum_{\ell=1}^{\infty} \frac{1}{N^\ell}\sinh^\ell(2s)\Tr M^\ell \frac{(-1)^\ell}{\ell} \parentheses{N \operatorname{sech}(2s)}^\ell \\
    &= nr\log\cosh(2s) -\frac{1}{2} \sum_{\ell=1}^{\infty}\frac{(-1)^\ell}{\ell} \tanh^\ell(2s)\Tr M^\ell.
\end{align}
Of course, $S_2(U)$ is independent of the choice of $N$, and hence the $N$ dependence has dropped out.

The only thing left to compute is $\Tr M^\ell$. Since we are now only dealing with traces, we can replace $P$ with $\Pi$ in $M$. This nicely simplifies some formulas, since $\Pi$ is a square matrix, and indeed a projector, whereas $P$ is a rectangular matrix. Henceforth, we will therefore let
\begin{equation}
    M = \begin{pmatrix}
        \Pi R \Pi & \Pi I \Pi\\
        \Pi I \Pi & - \Pi R \Pi
    \end{pmatrix},
\end{equation}
where $R = \Re (\bar U U^\dag)$ and $I = \Im (\bar U U^\dag)$. Then, doing the matrix multiplication,
\begin{align}
    M^2 
    &= \begin{pmatrix}
        \Pi R\Pi R \Pi + \Pi I \Pi I \Pi & \Pi R \Pi I \Pi - \Pi I \Pi R \Pi\\
        \Pi I \Pi R \Pi - \Pi R \Pi I \Pi & \Pi R\Pi R \Pi + \Pi I \Pi I \Pi
    \end{pmatrix}.
\end{align}
We then notice that $M^2 = \begin{pmatrix} 
D&E\\-E&D
\end{pmatrix}$, where $D = \Re (W)$, $E = \Im(W)$, and $W = \Pi U \bar U^\dag \Pi \bar U U^\dag \Pi$. It is then easy to verify that $M^{2j} = \begin{pmatrix} 
D_j&E_j\\-E_j&D_j
\end{pmatrix}$, where $D_{j+1} = D_j D - E_j E$, $E_{j+1} = D_j E + E_j D$, $D_1 = D$, and $E_1 = E$. It is then also easy to verify that this recurrence relation is solved by $D_j = \Re(W^j)$ and $E_j = \Im(W^j)$. Finally, $\Tr M^{2j} = 2 \Tr D_j$. 

Using this recurrence, we can do the matrix multiplication $M^{2j+1} = M^{2j}M$ to find that $\Tr M^{2j+1} = 0$. Hence, we only need to worry about even powers of $M$, giving
\begin{equation}
    S_2(U) = nr\log\cosh(2s) -\frac{1}{2} \sum_{\ell=1}^{\infty} \frac{1}{2\ell} \tanh^{2\ell}(2s)\Tr M^{2\ell}.
\end{equation}
We then use that $\Tr M^{2j} = 2 \Tr D_j$ to find that
\begin{equation}
    S_2(U) = nr\log\cosh(2s) - \sum_{\ell=1}^{\infty} \frac{1}{2\ell} \tanh^{2\ell}(2s) \Tr \Re W^\ell.
\end{equation}
Furthermore, since $W$ is Hermitian, $\Tr W^j = \Tr \bar W^j$. Therefore, we arrive at
\begin{align}
    S_2(U) &= nr\log\cosh(2s) - \sum_{\ell=1}^\infty \frac{1}{2\ell} \tanh^{2\ell}(2s)\Tr W^\ell\label{eq:renyi-2}\\
    &= n\sum_{\ell=1}^\infty \frac{\tanh^{2\ell}(2s)}{2\ell} \parentheses{r - \frac{1}{n} \Tr W^\ell},\label{eq:simplified-s2}
\end{align}
where in the last step we used the Taylor expansion of $\log\cosh(\operatorname{arctanh} t)$ in the variable $t = \tanh(2s)$.

\cref{eq:renyi-2,eq:simplified-s2} hint at why equal initial squeezings simplify the problem of studying averaged entanglement properties. Specifically, the contribution from the squeezing strength and the contribution from the unitary are separated. Thus, to determine averaged entanglement properties, we only need to deal with the matrix $W = \Pi UU^T \Pi \bar U \bar U^T \Pi$.

\section{R\'enyi-2 and von Neumann entropies --- Proof of \texorpdfstring{\cref{prop:s1-bound}}{\autoref*{prop:s1-bound}}}
\label{ap:entropy-bounds}

In this appendix, we prove \cref{prop:s1-bound}. We derive the maximum of the R\'enyi-2 and von Neumann entropies, and the former will be of use later when we derive the R\'enyi-2 Page correction. Furthermore, we prove that the von Neumann entropy can be bounded by the R\'enyi-2 entropy. In this way, our results on the R\'enyi-2 Page curve can be used to bound the von Neumann Page curve.

We begin by proving that the maximum of the R\'enyi-2 entropy is
\begin{equation}
    \label{eq:max-s2}
    \max_{U \in \U(n)} S_2(U) = n\min(r,1-r) \log\cosh(2s).
\end{equation}
\begin{proof}
Recall that $W = F^\dag F$, where $F = \Pi \bar U U^\dag \Pi$. Therefore, $W$ is a nonnegative operator, and $\Tr W^\ell \geq 0$. The proposition then immediately follows from \cref{eq:renyi-2} if we can show that for every $r \leq 1/2$, there exists a unitary such that $W = 0$. The case when $r > 1/2$ is taken care of by the fact that the R\'enyi-2 entropy is symmetric under $r \mapsto 1-r$ since the global state on the $n$ modes is pure. Therefore, we now assume that $r\leq 1/2$, and we show that there are unitaries $U \in \U(n)$ such that $W = 0$.

Since $F^\dag F$ is nonnegative, we need to prove that there exists a unitary such that $F = \Pi \bar U U^\dag \Pi = 0$. Hence we must prove that there exists a $U$ such that $(\bar U U^\dag)_{ji} = 0$ for all $1 \leq i ,j \leq k = rn$. Equivalently, we can conjugate the expression, giving $\overline{(\bar U U^\dag)}_{ji}= \sum_{a=1}^n U_{ia} U_{ja} = 0$. Therefore, we just need to find a set of $k = rn$ orthonormal vectors $\calS = \set{\ket{\psi_1},\dots \ket{\psi_k}}$ in $\bbC^n$ such that $\braket{\bar \psi_i \vert \psi_j} = 0$. Let $\ket{\psi_i} = \ket{R_i} + \i \ket{I_i}$ for real vectors $\ket{R_i}$ and $\ket{I_i}$. Since $\calS$ is orthonormal, we find that
\begin{align}
    \delta_{ij}
    &= \braket{\psi_i \vert \psi_j}\\
    &= (\bra{R_i} - \i \bra{I_i})(\ket{R_j} + \i \ket{I_j})\\
    &= \braket{R_i \vert R_j} + \braket{I_i \vert I_j}  + \i \parentheses{\braket{R_i \vert I_j} - \braket{I_i \vert R_j}}.
\end{align}
The condition that $\braket{\bar \psi_i \vert \psi_j} = 0$ implies that
\begin{align}
    0 &= \braket{\bar \psi_i \vert \psi_j}\\
    &= (\bra{R_i} + \i \bra{I_i})(\ket{R_j} + \i \ket{I_j})\\
    &= \braket{R_i \vert R_j} - \braket{I_i \vert I_j}  + \i \parentheses{\braket{R_i \vert I_j} + \braket{I_i \vert R_j}}.
\end{align}
Hence, we just need to find $k$ vectors $\ket{R_i} \in \bbR^n$ and $k$ vectors $\ket{I_i} \in \bbR^n$ satisfying
\begin{enumerate}
    \item $\braket{R_i \vert R_j}+\braket{I_i \vert I_j}=\delta_{ij}$,
    \item $\braket{R_i \vert R_j}-\braket{I_i \vert I_j}= 0$,
    \item $\braket{R_i \vert I_j} + \braket{I_i \vert R_j} = 0$, and
    \item $\braket{R_i \vert I_j} - \braket{I_i \vert R_j} = 0$,
\end{enumerate}
for all $i,j \in \set{1,\dots, k}$. This is trivial since $ k \leq n/2$, and we give one construction here. Let $\calR = \set{\ket{R_1}, \dots , \ket{R_k}}$ be an orthogonal set satisfying $\braket{R_i \vert R_j} = \delta_{ij}/2$. Since $\mathrm{dim}[\mathrm{span}(\calR)^\perp] = n-k \geq k$, we can choose another orthogonal set $\calI = \set{\ket{I_1}, \dots, \ket{I_k}}$ satisfying $\braket{I_i \vert I_j} = \delta_{ij}/2$ such that $\mathrm{span}(\calI) \cap \mathrm{span}(\calR) = \set{0}$. This immediately means that condition 3 and 4 are satisfied, because $\braket{R_i \vert I_j} = 0$ for all $i$ and $j$. Furthermore, condition 1 is satisfied because $\braket{R_i \vert R_j} + \braket{I_i \vert I_j} = \delta_{ij}/2 + \delta_{ij}/2 = \delta_{ij}$. Finally, condition 2 is satisfied because $\braket{R_i \vert R_j} - \braket{I_i \vert I_j} = \delta_{ij}/2 - \delta_{ij}/2 = 0$.
\end{proof}

\medskip

Next, we use \cref{eq:max-s2} to prove that the maximum of the von Neumann entropy is
\begin{equation}
    \max_{U\in\U(n)}S_1(U) = n\min(r,1-r) h_1(\cosh(2s)).
\end{equation}
\begin{proof}
From Ref.~\cite{kim_renyi_2018}, the R\'enyi-$\alpha$ entropies for $\alpha \geq 1$ are all increasing with increasing $\det\sigma$. Therefore, when $S_2$ increases, so does $S_\alpha$. It follows that the unitary that maximizes $S_2$ also maximizes $S_1$. Then, when $r\leq 1/2$, \cref{eq:max-s2,eq:M-matrix} together imply that $S_1$ is maximized when $\sigma = \cosh(2s)\bbI_{2k\times 2k}$ and thus when $\nu_i=\cosh(2s)$ for each $i$, yielding $\max_U S_1(U) = nr h_1(\cosh(2s))$. The case when $r>1/2$ is taken care of by the symmetry of the von Neumann entropy when $r\mapsto 1-r$ since we are dealing with a bipartite system \cite{nielsen_quantum_2010}.
\end{proof}

\medskip

Finally, we prove the bound
\begin{equation}
    S_1(U) < S_2(U) + n\min(r,1-r)(1-\log 2),
\end{equation}
which was first derived in Ref.~\cite[Eq.~15]{adesso_strong_2016}.
\begin{proof}
Recall that $S_j = \sum_{i=1}^k h_j(\nu_i)$, where $\nu_i\geq 1$ are the symplectic eigenvalues of the covariance matrix $\sigma$, $h_2(x) = \log x$, and $h_1(x) = \frac{x+1}{2}\log \frac{x+1}{2}-\frac{x-1}{2}\log \frac{x-1}{2}$ (see \cref{ap:gaussian-states}). We will prove that $h_1(x) < h_2(x) + 1-\log 2$, which proves the claim. We do this by noting that $h_1(x) - h_2(x)$ is monotonically increasing, so that for any $x$, $h_1(x) - h_2(x) < \lim_{y\to\infty} (h_1(y) - h_2(y))$. A simple calculation shows that this limit is equal to $1-\log 2$.
\end{proof}

\section{R\'enyi-2 Page curve}

In this appendix, we prove our main results on the R\'enyi-2 Page curve. In \cref{ap:page}, we prove that asymptotically in $n$, $\Expval_{U\in \U(n)} S_2(U) = n \alpha(s,r)-\lambda(s,r)+\littleo{1}$. We then derive the exact formula for the linear term $\alpha(s,r)$. Then, in \cref{ap:special-page}, we simplify $\alpha(s,1/2)$ and use the result to prove a simple expression for the R\'enyi-2 Page correction at $r=1/2$. Finally, in \cref{ap:const}, we derive a formula for the constant term $\lambda(s,r)$ up to constant factors $a^{(\ell)}$. We then derive the formula for $a^{(\ell)}$, simplify it for $\ell \in \set{1,2,3,4,5}$, and conjecture the exact value of $a^{(\ell)}$ for all $\ell$.

\subsection{Linear term --- Proof of \texorpdfstring{\cref{thm:page-curve}}{\autoref*{thm:page-curve}}}
\label{ap:page}

\cref{thm:page-curve} concerns the expectation value of $S_2(U)$ over $\U(n)$ when all the initial squeezing values are equal. From \cref{eq:renyi-2}, we see that it only remains to compute $\Expval_U\Tr W^\ell$. Writing the matrix multiplication of $W^\ell = (\Pi U U^T \Pi \bar U \bar U^T \Pi)^\ell$ in terms of the matrix entries of $U$ and $\Pi$ and simplifying, we find that
\begin{equation}
\begin{aligned}
    \Tr W^\ell
    &= \sum_{i_1,\dots,i_{2\ell}=1}^k\sum_{i_1',\dots,i_{2\ell}'=1}^k\sum_{j_1,\dots,j_{2\ell}=1}^n\sum_{j_1',\dots,j_{2\ell}'=1}^n \\ 
    &\qquad \delta_{i_{2\ell}',i_1}\delta_{i_1',i_2}\delta_{i_2',i_3}\delta_{i_3',i_4} \dots \delta_{i_{2\ell-1}',i_{2\ell}}\\
    &\qquad \times \delta_{j_1,j_2}\delta_{j_1',j_2'}\delta_{j_3,j_4}\delta_{j_3',j_4'} \dots \delta_{j_{2\ell-1},j_{2\ell}}\delta_{j_{2\ell-1}',j_{2\ell}'} \\
    &\qquad \times U_{i_1,j_1}\dots U_{i_{2\ell},j_{2\ell}} \bar U_{i_1',j_1'}\dots \bar U_{i_{2\ell}',j_{2\ell}'}.
\end{aligned}
\end{equation}
We note that this is a simple result of matrix multiplication. The restriction on the $i$ and $i'$ indices to $\set{1,\dots,k}$ is a result of the fact that $\Pi_{a,a} = 0$ for all $a > k$.
Applying \cref{eq:weingarten-formula}, we immediately find that
\begin{equation}
\label{eq:weingarten-trw}
\begin{aligned}
    \Expval_{U \in \U(n)} \Tr W^\ell
    &= \sum_{i_1,\dots,i_{2\ell}=1}^k \sum_{i_1',\dots,i_{2\ell}'=1}^k \sum_{j_1,\dots,j_{2\ell}=1}^n \sum_{j_1',\dots,j_{2\ell}'=1}^n \sum_{\sigma,\tau \in S_{2\ell}} \text{Wg}(\sigma\tau^{-1},n)\\
    &\qquad \times \delta_{i_{2\ell}', i_1} \delta_{i_1',i_2}\delta_{i_2',i_3}\dots \delta_{i_{2\ell-1}', i_{2\ell}} \\
    &\qquad \times \delta_{j_1,j_2}\delta_{j_1',j_2'}\dots \delta_{j_{2\ell-1},j_{2\ell}}\delta_{j_{2\ell-1}',j_{2\ell}'}\\
    &\qquad \times \delta_{i_1,i_{\sigma(1)}'} \dots \delta_{i_{2\ell},i_{\sigma(2\ell)}'}\\
    &\qquad \times \delta_{j_1,j_{\tau(1)}'} \dots \delta_{j_{2\ell},j_{\tau(2\ell)}'}.
\end{aligned}
\end{equation}
Simplifying \cref{eq:weingarten-trw} at first seems impossible, but it will turn out that we do not need to. All we need to learn from it is the following lemma.

\medskip

\begin{lemma}
\label{lem:form-of-w}
Fix a positive integer $\ell$. There exist coefficients $\alpha_{d}^{(\ell)}$ for $d \in \set{\ell+1, \ell+2, \dots, 2\ell}$ such that
\begin{equation}
    f_\ell(r) \coloneqq \lim_{n\to\infty} \Expval_{U \in \U(n)} \frac{1}{n} \Tr W^\ell = \sum_{d=\ell+1}^{2\ell} \alpha_d^{(\ell)} r^d.
\end{equation}
\end{lemma}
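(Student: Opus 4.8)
The plan is to extract the $r$-dependence of $\Expval_U \Tr W^\ell$ from \cref{eq:weingarten-trw} purely by a counting argument, without ever simplifying the sum. The starting observation is that in \cref{eq:weingarten-trw} the only sums that carry the index range $\set{1,\dots,k}$ are the $i,i'$ sums; the $j,j'$ sums run over $\set{1,\dots,n}$. So I would first collapse the $j,j'$ sums: for a fixed pair of permutations $(\sigma,\tau)$, the product of Kronecker deltas on the $j,j'$ indices forces these indices to be constant on the blocks of a partition of $\set{1,\dots,2\ell}$ determined by $\tau$ together with the ``pairing'' deltas $\delta_{j_1,j_2}\delta_{j_3,j_4}\cdots$. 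The $j$-sum therefore contributes a factor $n^{c(\tau)}$, where $c(\tau)$ is the number of blocks of that partition, and this factor is independent of $r$. Likewise the $i,i'$ sums contribute a factor $k^{b(\sigma)} = (rn)^{b(\sigma)}$, where $b(\sigma)$ is the number of blocks of the analogous partition built from $\sigma$ and the ``cyclic'' deltas $\delta_{i'_{2\ell},i_1}\delta_{i'_1,i_2}\cdots$. Hence each $(\sigma,\tau)$ term contributes $\mathrm{Wg}(\sigma\tau^{-1},n)\,n^{c(\tau)}\,(rn)^{b(\sigma)}$, and the full power of $r$ appearing is exactly $b(\sigma)$.

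Next I would pin down the range of $b(\sigma)$. The deltas on the $i,i'$ indices already identify $i'_{2\ell}=i_1$, $i'_1=i_2$, $i'_2=i_3,\dots$, i.e.\ they merge the $4\ell$ indices into a single cyclic chain of $2\ell$ free indices before $\sigma$ is even applied; adding the constraints $i_m = i'_{\sigma(m)}$ can only merge further, so $b(\sigma)\le 2\ell$, with equality when $\sigma$ is the specific permutation making all its constraints redundant. For the lower bound, $b(\sigma)\ge \ell+1$: here I would use the standard genus/Euler-characteristic bound for the unitary Weingarten calculus — the combined data of the trace structure of $\Tr W^\ell$ and a permutation $\sigma$ corresponds to a gluing of a surface, and since $\Tr W^\ell$ already has $2\ell$ ``$UU^T$'' and ``$\bar U\bar U^T$'' pairings arranged in a cycle, the minimum number of $i$-blocks is $\ell+1$ (achieved, e.g., when $\sigma$ is a single $2\ell$-cycle of a suitable type). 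Concretely, $b(\sigma) + (\text{analogous }j\text{ count without the Wg suppression})$ is controlled by $2\ell + 1 - |\sigma\tau^{-1}| + \dots$, and matching powers of $n$ against the leading Weingarten order $n^{-2\ell - |\sigma\tau^{-1}|}$ shows that only configurations with $b(\sigma)\in\set{\ell+1,\dots,2\ell}$ survive in the $n\to\infty$ limit, while the rest are $\littleo{1}$ after dividing by $n$.

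Putting these together: $\Expval_U \Tr W^\ell = \sum_{\sigma,\tau} \mathrm{Wg}(\sigma\tau^{-1},n)\, n^{c(\tau)} (rn)^{b(\sigma)}$, and grouping the surviving terms by the value $d = b(\sigma)\in\set{\ell+1,\dots,2\ell}$ gives $\Expval_U \frac1n\Tr W^\ell = \sum_{d=\ell+1}^{2\ell}\alpha_d^{(\ell)}(n)\, r^d$, where $\alpha_d^{(\ell)}(n)$ collects all the $r$-independent factors (Weingarten values and powers of $n$) of the terms with $b(\sigma)=d$. The final step is to check $\alpha_d^{(\ell)}(n)$ converges as $n\to\infty$: substituting the asymptotic Weingarten expansion \cref{eq:asymptotic-weingarten}, each surviving term's $n$-power is exactly $0$ (that is what ``surviving'' means), the subleading Weingarten corrections contribute $\bigO{1/n^2}$, and there are only finitely many $(\sigma,\tau)$, so the limit $\alpha_d^{(\ell)} \coloneqq \lim_{n\to\infty}\alpha_d^{(\ell)}(n)$ exists. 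This yields the claimed form with no $r$-powers below $\ell+1$ or above $2\ell$.

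\textbf{Main obstacle.} The delicate point is the lower bound $b(\sigma)\ge\ell+1$ on the surviving terms — i.e.\ showing that configurations with fewer $i$-blocks are exactly cancelled by the $n$-suppression of the Weingarten function rather than contributing lower-order $r$-powers. This is essentially a genus-counting / Euler-characteristic argument for the pair of partitions $(\sigma,\tau)$ relative to the fixed trace structure of $\Tr W^\ell$, and getting the bookkeeping of $c(\tau)$, $b(\sigma)$, and $|\sigma\tau^{-1}|$ to combine into ``net power of $n$ is $\le 1$, with equality only when $b(\sigma)\ge\ell+1$'' is where the real work lies; everything else is routine index-chasing.
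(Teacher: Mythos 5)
Your outline follows the same strategy as the paper's proof — read off the powers of $r$ from the $i,i'$ sums in \cref{eq:weingarten-trw} and argue that only $d\in\{\ell+1,\dots,2\ell\}$ survives the limit — and your upper bound $b(\sigma)\le 2\ell$, with equality forced by the cyclic-shift permutation, is exactly the paper's argument. But the step you flag as the ``main obstacle'' is a genuine gap, and the way you state it in passing is actually false: $b(\sigma)\ge\ell+1$ does not hold for all $\sigma$. For $\sigma=\mathrm{id}$, the cyclic deltas $i'_{m-1}=i_m$ combined with $i_m=i'_m$ collapse all $4\ell$ indices into a single block, so $b(\mathrm{id})=1$. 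The correct claim is that terms with $b(\sigma)\le\ell$ are $\littleo{n}$, and you defer that to an unproven genus/Euler-characteristic bookkeeping, which is precisely where the lemma's content lies.

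The paper closes this gap with two elementary uniform estimates; no genus counting is needed. First, for \emph{every} $\tau$, the pairing deltas $\delta_{j_1,j_2}\cdots\delta_{j_{2\ell-1},j_{2\ell}}$ and their primed copies cut the $j,j'$ sums down to $\ell$ free $j$'s and $\ell$ free $j'$'s, and the $\tau$-deltas $\delta_{j_m,j'_{\tau(m)}}$ then identify every remaining $j'$ with some $j$; hence the $j,j'$ sum contributes at most $n^{\ell}$ uniformly in $\tau$, so the exact value of your $c(\tau)$ is never needed. Second, $\mathrm{Wg}(\pi,n)=\bigO{n^{-2\ell}}$ uniformly over $\pi\in S_{2\ell}$ by \cref{eq:asymptotic-weingarten}. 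Every term of \cref{eq:weingarten-trw} is therefore $\bigO{n^{-\ell}k^{b(\sigma)}}=\bigO{r^{b(\sigma)}n^{b(\sigma)-\ell}}$, which is $\bigO{1}$ whenever $b(\sigma)\le\ell$ and so vanishes after dividing by $n$. The same reasoning must also rule out contributions growing faster than $n$ — a point your proposal never addresses but which is needed for the limit defining $f_\ell(r)$ to exist — and the paper obtains this even more directly from the deterministic bound $\Tr W^\ell\le\Tr(U\bar U^\dag\bar U U^\dag)^\ell=\Tr\bbI_{n\times n}=n$, since inserting the projectors $\Pi$ can only decrease the trace. With those two estimates supplied, the remainder of your argument (finitely many $(\sigma,\tau)$, subleading Weingarten corrections of relative order $n^{-2}$, hence convergence of the coefficients $\alpha_d^{(\ell)}$) goes through as written.
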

\begin{proof}
The proof will proceed as follows. First, we will prove that $\Tr W^\ell$ contains a term proportional to $n$ and no terms proportional to $n^a$ for any $a > 1$. Therefore, $f_\ell(r)$ is indeed independent of $n$. Next, we will prove that $f_\ell(r)$ has no terms $r^a$ for $a > 2\ell$ and $a\leq \ell$. Throughout this proof, we interpret the delta functions in \cref{eq:weingarten-trw} as constraints on the summations. Different permutations on the indices result in a different number of constraints and hence terms with different powers of $n$ and $k$.
 
Recall that $W = F^\dag F$ where $F = \Pi \bar U U^\dag \Pi$. Therefore, $\Tr W$ is equal to the Frobenius norm $\norm{F}_F^2$, which is the sum of the square absolute values of the entries of $F$. Thus, by removing the projector $\Pi$ from $W$ (i.e.~setting it to $\bbI$), the trace cannot decrease. It follows that the presence of $\Pi$ cannot increase the trace $\Tr W^\ell$. Getting rid of the $\Pi$ from $W$ and using the cyclic nature of the trace, we see that $\Tr W^\ell \leq n$. Furthermore, consider \cref{eq:weingarten-trw} with $\sigma = \tau$ defined by $\sigma(1) = 2\ell$, $\sigma(2) = 1$, $\sigma(3) = 2$, $\dots$, $\sigma(2\ell) = 2\ell-1$. Then, $\sigma \tau^{-1}$ is the identity, $\text{Wg}(\sigma \tau^{-1}, n)$ contributes a factor of $n^{-2\ell}$. With this $\sigma$, the sum over the $i$ and $i'$ yields a factor of $k^{2\ell}$. Finally, with this chosen $\tau$, the sum over $j$ will yield $\delta_{j_1',j_2'}\dots \delta_{j_{2\ell-1}',j_{2\ell}'} \delta_{j_1',j_{2\ell}'} \delta_{j_2',j_1'}\dots \delta_{j_{2\ell}',j_{2\ell-1}'}$. Then summing over $j'$, we get a single factor of $n$. Hence, the term with the specific permutation described above yields a term of the form $n k^{2\ell}n^{-2\ell} = n r^{2\ell}$.

We have shown that there is a term proportional to $n$ and that there are no terms proportional to $n^2$, $n^3$, etc. Since we are working asymptotically in $n$, we can therefore ignore all terms proportional to $\frac{1}{n^a}$ for every nonnegative $a$. This proves that $\lim_{n\to\infty} \Expval_{U \in \U(n)} \frac{1}{n} \Tr W^\ell$ is independent of $n$ and only depends on $r$, which justifies the definition of the function $f_\ell(r)$. The only thing left to show is that $f_\ell(r)$ has only terms $r^{\ell+1}$ through $r^{2\ell}$. So we only need to show that there are no terms $r^a$ for $a > 2\ell$ and $a \leq \ell$. We begin with the former.

To look at powers of $r$, it is sufficient to look at the powers of $k$. We therefore restrict our attention to the sum over $i$ and $i'$ in \cref{eq:weingarten-trw}. The sum over $i_1$, $\sum_{i_1=1}^k$, will give either a factor of $1$ or a factor of $k$ depending on how the index $i_1$ is constrained by the Kronecker delta functions, and similarly for $i_2$ through $i_{2\ell}$. In order to get the highest power of $k$, we require the fewest constraints on $i$ and $i'$ (i.e.~the fewest \textit{distinct} Kronecker deltas). Hence, we require $\sigma$ to be the permutation satisfying
\begin{equation}
    \delta_{i_{2\ell}', i_1} \delta_{i_1',i_2}\delta_{i_2',i_3}\dots \delta_{i_{2\ell-1}', i_{2\ell}} = \delta_{i_1,i_{\sigma(1)}'} \dots \delta_{i_{2\ell},i_{\sigma(2\ell)}'}.
\end{equation}
This permutation is $\sigma(1) = 2\ell$, $\sigma(2) = 1$, $\sigma(3) = 2$, $\dots$, $\sigma(2\ell) = 2\ell-1$. With this $\sigma$, we see that a sum over $i$ and $i'$ will give a factor of $k^{2\ell}$. Hence, $2\ell$ is the highest power of $k$ that can be achieved.

Next we need to show that $\ell+1$ is the lowest power of $k$ that can be achieved. The sum over $j$ and $j'$ can give at most a factor of $n^{\ell}$. This is because the first line of delta functions, $\delta_{j_1,j_2}\delta_{j_1',j_2'}\dots \delta_{j_{2\ell-1},j_{2\ell}}\delta_{j_{2\ell-1}',j_{2\ell}'}$, reduces the sum over $2\ell$ indices $j$ and $2\ell$ indices $j'$ down to just a sum over $\ell$ indices $j$ and $\ell$ indices $j'$. The second line of delta functions, $\delta_{j_1,j_{\tau(1)}'} \dots \delta_{j_{2\ell},j_{\tau(2\ell)}'}$, cannot be made equivalent to the first line by any choice of $\tau$; in fact, the second line imposes all new constraints. Therefore, this line further reduces the sum over $\ell$ indices $j$ and $\ell$ indices $j'$ down to just a sum over $\ell$ indices $j$ (or $\ell$ indices $j'$, but not both). Hence, the highest power of $n$ that we get from the summations over $j$ and $j'$ is $n^\ell$. Putting this together with the fact that asymptotically $\text{Wg}(\pi, n)$ is at most $n^{-2\ell}$, we find that any term coming from \cref{eq:weingarten-trw} is at most $n^{-\ell} \times \parentheses{\text{dependence on } k}$. Therefore, any powers of $k$ that are less than $\ell+1$ can be ignored; if the sum over $i$ and $i'$ yields a term that is $k^a$ for some $a \leq \ell$, then that term will be constant or decreasing with $n$. But from above, we already have terms that are proportional to $n$ resulting from \cref{eq:weingarten-trw}, and so terms that are constant or decreasing can be ignored.
\end{proof}

\medskip

As alluded to in the proof of the lemma, the asymptotic form of $\Expval_U \Tr W^\ell$ will be a term linear in $n$ times a function of $r$, plus a term constant in $n$ times a function of $r$, plus terms that decay to zero asymptotically with $n$. Hence, $\Expval_U S_2(U) = n\alpha(s,r) -\lambda(s,r) + \littleo{1}$. In this section, we are interested only in the linear term $\alpha(s,r)$ because we are computing $\Expval_U S_2(U) / n$. However, in the next section, we prove \cref{prop:const} which provides the form of $\lambda(s,r)$.

Actually computing $\alpha_d^{(\ell)}$ from \cref{eq:weingarten-trw} seems challenging. However, there is a nice workaround that uses what we know about the R\'enyi-2 entropy being symmetric under $r \mapsto 1-r$, as the total state on the $n$ modes is pure.

\medskip

\begin{lemma}
\label{lem:proof-of-alpha}
Let $\alpha_d^{(\ell)}$ be as in \cref{lem:form-of-w}. Then
\begin{equation}
    \label{eq:alpha-analytic}
    \alpha_{d}^{(\ell)} = 2 (-1)^{d-\ell-1} \binom{2 \ell-1}{\ell-1} \binom{\ell}{d-\ell-1} \frac{2\ell-d+1}{(d-1) d}.
\end{equation}
Note that this corresponds to the sequence \href{https://oeis.org/A062991}{A062991} on OEIS \cite{oeis}.
\end{lemma}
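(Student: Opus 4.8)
The plan is to exploit the $r\mapsto 1-r$ symmetry of the R\'enyi-2 Page curve to determine the polynomial $f_\ell$ completely, and only at the end to match the result against the closed form in \cref{eq:alpha-analytic}. First I would translate the symmetry into a statement about each $f_\ell$ separately. Combining \cref{eq:renyi-2} with the expansion $\log\cosh(2s)=\sum_{\ell\ge1}\tanh^{2\ell}(2s)/(2\ell)$, the linear-in-$n$ part of $\Expval_U S_2(U)$ is $n\sum_{\ell\ge1}\frac{\tanh^{2\ell}(2s)}{2\ell}\,G_\ell(r)$ with $G_\ell(r):=r-f_\ell(r)$. Since $\Expval_U S_2(U)$ is exactly invariant under $r\mapsto1-r$ for every finite $n$ (the global $n$-mode state is pure), dividing by $n$ and letting $n\to\infty$ shows that $\sum_{\ell}\frac{t^\ell}{2\ell}G_\ell(r)$ is symmetric in $r\leftrightarrow1-r$ for every $t=\tanh^2(2s)\in[0,1)$; comparing coefficients of this convergent power series in $t$ gives $G_\ell(r)=G_\ell(1-r)$ for each $\ell$.

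Next I would establish uniqueness. By \cref{lem:form-of-w}, $G_\ell$ is a polynomial of degree at most $2\ell$ with $G_\ell(r)=r+\bigO{r^{\ell+1}}$ near $r=0$; by the symmetry just proved it also satisfies $G_\ell(r)=(1-r)+\bigO{(1-r)^{\ell+1}}$ near $r=1$. If $H$ is any polynomial of degree at most $2\ell$ that is symmetric under $r\mapsto1-r$ and equals $r+\bigO{r^{\ell+1}}$ near $r=0$ (so, by symmetry, the near-$r=1$ expansion holds automatically), then $G_\ell-H$ vanishes to order $\ell+1$ at both $r=0$ and $r=1$, hence is divisible by $r^{\ell+1}(1-r)^{\ell+1}$, a polynomial of degree $2\ell+2>2\ell$; therefore $G_\ell-H\equiv0$. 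So the list $\alpha^{(\ell)}_{\ell+1},\dots,\alpha^{(\ell)}_{2\ell}$ is uniquely determined, and it suffices to exhibit \emph{one} polynomial with the stated properties.

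The remaining task is to produce that polynomial and read off its coefficients. The cleanest route is to work with the derivative: the symmetry forces $G_\ell'(1-r)=-G_\ell'(r)$, while the local expansions force $G_\ell'(r)-1$ to be divisible by $r^\ell$ and $G_\ell'(r)+1$ to be divisible by $(1-r)^\ell$. Since $\gcd(r^\ell,(1-r)^\ell)=1$ and $\deg G_\ell'\le 2\ell-1$, the B\'ezout/partial-fractions identity $(1-r)^\ell B(r)-r^\ell A(r)=2$, with $A,B$ the degree-$(\ell-1)$ truncations of the binomial series of $2r^{-\ell}$ and $2(1-r)^{-\ell}$ respectively, pins down $G_\ell'(r)=1-2r^\ell\sum_{j=0}^{\ell-1}\binom{\ell+j-1}{j}(1-r)^j$. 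Integrating from $0$ (using $G_\ell(0)=0$) and expanding $(1-t)^j$ term by term yields $f_\ell(r)=r-G_\ell(r)$ as an explicit polynomial supported on degrees $\ell+1$ through $2\ell$; its coefficient of $r^{\ell+1}$ is $\tfrac{2}{\ell+1}\sum_{j}\binom{\ell+j-1}{j}=C_\ell$ by the hockey-stick identity, consistent with \cref{eq:alpha-analytic}. Extracting the coefficient of a general $r^d$ from this integral is the one step that requires real bookkeeping, and I expect it to be the main obstacle — it reduces to a single binomial-coefficient identity (a Chu--Vandermonde convolution, equivalently a terminating ${}_2F_1$ evaluation). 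Equivalently, one may simply substitute the claimed formula \cref{eq:alpha-analytic} and verify the identity
\begin{equation}
    \sum_{d=\ell+1}^{2\ell}\alpha_d^{(\ell)}\bigl[\,r^d-(1-r)^d\,\bigr]=2r-1,
\end{equation}
which, after expanding $(1-r)^d$ and collecting powers of $r$, is exactly this identity; once it is checked, uniqueness from the previous paragraph finishes the proof, and the identification with OEIS \href{https://oeis.org/A062991}{A062991} is a lookup.
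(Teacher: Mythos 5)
Your proposal is correct, and while it shares the paper's overall strategy --- use the $r\mapsto 1-r$ symmetry order by order in $\tanh^2(2s)$ to pin down $f_\ell$ uniquely given the degree constraints of \cref{lem:form-of-w}, then identify the unique solution --- it executes the two key technical steps differently. For uniqueness, the paper expands $(1-r)^d$ and shows that the resulting linear system has full rank by evaluating the binomial determinant $\det C=\binom{2\ell}{\ell}$ (citing an external reference for that determinant); your divisibility argument, that the difference of two admissible symmetric polynomials vanishes to order $\ell+1$ at both $r=0$ and $r=1$ and hence is divisible by $r^{\ell+1}(1-r)^{\ell+1}$, which exceeds the degree bound $2\ell$, is more elementary and self-contained. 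For identifying the solution, the paper substitutes the claimed $\alpha_d^{(\ell)}$ and verifies the four coefficient conditions via terminating ${}_2F_1$ evaluations at unit argument (essentially your ``route B''), whereas your primary route constructs $G_\ell'$ directly from the congruences $G_\ell'\equiv 1\pmod{r^\ell}$ and $G_\ell'\equiv -1\pmod{(1-r)^\ell}$ via a B\'ezout/CRT argument, integrates, and matches coefficients through a Chu--Vandermonde convolution; this is constructive rather than guess-and-verify, at the cost of one binomial identity of comparable difficulty to the paper's hypergeometric evaluations (note the sign in your displayed B\'ezout identity should be $(1-r)^\ell B+r^\ell A=2$ with your conventions, but your final formula for $G_\ell'$ is correct, as one checks for $\ell=1,2$). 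Your derivation of the order-by-order symmetry by comparing coefficients of a convergent power series in $t$ at fixed $r$ is also cleaner than the paper's inductive argument with $n$-dependent squeezing, and equally valid once the asymptotic series representation of the Page curve is in hand.
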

\begin{proof}
From \cref{lem:form-of-w}, asymptotically $\Expval_U \Tr W^\ell = nf_l(r)$ for a polynomial $f_\ell$ that is a sum over terms of degree $\ell+1$ through $2\ell$ in $r$. From \cref{eq:simplified-s2}, we know that
\begin{equation}
    \lim_{n\to\infty}\Expval_U \frac{1}{n} S_2(U) = \sum_{\ell=1}^\infty \frac{t^{2\ell}}{2\ell}  \parentheses{ r - f_\ell(r) },\\
\end{equation}
where $-1 < t = \tanh(2s) < 1$. This whole function must be symmetric under $r\to 1-r$ \textit{at every order in $t$}. The reason it must be symmetric at every order is as follows. Suppose we choose $s$ such that $t \in \bigO{n^{-1/2}}$. Then $\Expval_U S_2(U) = \frac{1}{2}nt^2 (r-f_1(r)) + \littleo{1}$. Hence, $r-f_1(r)$ must be symmetric since $S_2$ is symmetric. We then choose $s$ such that $t \in \bigO{n^{-1/4}}$ and consider $\Expval_U \bargs{S_2(U) - \frac{1}{2}nt^2 (r-f_1(r))}$, which will be equal to $\frac{1}{4}nt^4 (r-f_2(r)) + \littleo{1}$. Since $S_2$ is symmetric and $r-f_1(r)$ is symmetric, it follows that $r-f_2(r)$ is symmetric. We then continue inductively like this to prove that $r-f_\ell(r)$ must be symmetric for all $\ell$.

We therefore find that for all $\ell$, $f_\ell(r)$ must satisfy $r-f_\ell(r) = 1-r-f_\ell(1-r)$, or
\begin{equation}
    \label{eq:f-constraint}
    f_\ell(r) - f_\ell(1-r) = 2r-1.
\end{equation}
We then plug in the form of $f_\ell(r) = \sum_{d=\ell+1}^{2\ell} \alpha_d^{(\ell)} r^d$. \cref{eq:f-constraint} must hold for every $r$, and therefore we can equate the coefficients in front of each $r^d$ term to get a system of equations that can be solved for the values of $\alpha_d^{(\ell)}$. We then find that
\begin{align}
    2r-1
    &= \sum_{d=\ell+1}^{2\ell}\alpha_d^{(\ell)} \parentheses{r^d - (1-r)^{d}}\\
    &= \sum_{d=\ell+1}^{2\ell}\alpha_d^{(\ell)} r^d - \sum_{d=\ell+1}^{2\ell}\alpha_d^{(\ell)} \sum_{j=0}^d \binom{d}{j}(-r)^j\\
    &= \sum_{d=\ell+1}^{2\ell}\alpha_d^{(\ell)} r^d - \sum_{j=0}^{2\ell}(-r)^j \sum_{d=\max(j,\ell+1)}^{2\ell}\alpha_d^{(\ell)} \binom{d}{j}\\
    &= \sum_{d=\ell+1}^{2\ell}\alpha_d^{(\ell)} r^d - \sum_{d=0}^{2\ell} (-r)^d \sum_{j=\max(d,\ell+1)}^{2\ell}\alpha_j^{(\ell)} \binom{j}{d}\\
    &= -\sum_{d=0}^\ell (-r)^d \sum_{j=\ell+1}^{2\ell} \alpha_j^{(\ell)} \binom{j}{d} + \sum_{d=\ell+1}^{2\ell} r^d \parentheses{\alpha_d^{(\ell)} - (-1)^d \sum_{j=d}^{2\ell} \alpha_j^{(\ell)} \binom{j}{d} }.
\end{align}
Equating the degrees in $r$, we see the following conditions;
\begin{enumerate}
    \item $1 = \sum_{j=\ell+1}^{2\ell}\alpha_j^{(\ell)}$,
    \item $2 = \sum_{j=\ell+1}^{2\ell} j \alpha_j^{(\ell)}$,
    \item For $2 \leq d \leq \ell$, $0 = \sum_{j=\ell+1}^{2\ell} \binom{j}{d}\alpha_j^{(\ell)}$,
    \item For $\ell+1 \leq d \leq 2\ell$, $\alpha_d^{(\ell)} = (-1)^d \sum_{j=d}^{2\ell} \binom{j}{d} \alpha_j^{(\ell)} $.
\end{enumerate}
Note that one can derive equivalent conditions by requiring that the polynomial $(x+1/2)-f_\ell(x+1/2)$ is even in $x$. Condition 2 and condition 3 together constitute a linear system of $\ell$ linearly independent equations. To verify this, one must show that $\det C \neq 0$ where $C$ is the $\ell \times \ell$ matrix with entries $C_{ij} = \binom{\ell + i}{j}$. As noted by Benoit Cloitre in OEIS sequence \href{https://oeis.org/A000984}{A000984} \cite{oeis}, $\det C = \binom{2\ell}{\ell}$. For a proof, see Ref.~\cite{delanoy_determinant_2017}.

Since we have $\ell$ linearly independent equations for $\ell$ variables $\alpha$, if there is a solution to the four conditions then there is a \textit{unique} solution. One can then verify that a solution to the four conditions, and therefore \textit{the} solution, is given by \cref{eq:alpha-analytic}. 

When verifying the four conditions, one finds that the right hand side of all four of the conditions can be simplified in terms of the hypergeometric function $\, _2F_1$ defined as
\begin{equation}
    \label{eq:hypergeometric-function}
    \, _2F_1(-m,b;c;r) = \sum_{a=0}^m (-1)^a \binom{m}{a} \frac{(c-1)!(a+b-1)!}{(b-1)!(a+c-1)!} r^a
\end{equation}
when $m$ is nonnegative \cite{petkovsek1996,bailey_generalized_1964,zudilin_hypergeometric_2019,slater_generalized_1966,wolfram_hypergeometric2f1}. For example, condition 4, written as $(-1)^d = \sum_{j=d}^{2\ell} \binom{j}{d}(\alpha_j^{(\ell)}/\alpha_d^{(\ell)})$, reduces to $(-1)^d = \, _2F_1(d-2 \ell,d-1;d-\ell;1)$. Define the Pochhammer symbol as $(x)_a \coloneqq \frac{\Gamma(x+a)}{\Gamma(x)}$. When $a \in \bbN$, $\, _2F_1(-a,b;c;1) = \frac{(c-b)_a}{(c)_a}$ \cite{petkovsek1996,bailey_generalized_1964,zudilin_hypergeometric_2019,slater_generalized_1966,wolfram_hypergeometric2f1}. We therefore find that
\begin{align}
    \, _2F_1(d-2 \ell,d-1;d-\ell;1)
    &= \frac{(1-\ell)_{2\ell-d}}{(d-\ell)_{2\ell-d}}\\
    &= \frac{\Gamma(d-\ell)\Gamma(1-d+\ell)}{\Gamma(\ell)\Gamma(1-\ell)}\\
    &= \frac{\Gamma(d-\ell)\Gamma(1-\ell)(-d+\ell)(-d+\ell-1)\dots (1-\ell)}{\Gamma(\ell)\Gamma(1-\ell)}\\
    &= (-1)^{1+(1-\ell)-(-d+\ell)}\frac{\Gamma(d-\ell)(d-\ell)(d-\ell+1)\dots (\ell-1)}{\Gamma(\ell)}\\
    &= (-1)^d \frac{\Gamma(d-\ell) (\ell-1)!}{\Gamma(\ell) (d-\ell-1)!}\\
    &= (-1)^d,
\end{align}
hence proving condition 4. Conditions 1, 2, and 3 are similar.
\end{proof}

\medskip

Plugging in this result, we find that
\begin{align}
    \label{eq:analytic-W}
    f_\ell(r) \coloneqq \frac{1}{n}\Expval_U \Tr W^\ell &= 2\binom{2 \ell-1}{\ell-1}\sum_{i=0}^{\ell-1} (-1)^i r^{i+\ell+1} \binom{\ell}{i} \frac{\ell-i}{(\ell+i) (\ell+i+1)}\\
    &= r^{\ell+1}C_\ell \, _2F_1(1-\ell,\ell;\ell+2;r),
\end{align}
where $\, _2 F_1$ is the hypergeometric function and $C_\ell \coloneqq \frac{1}{\ell+1}\binom{2\ell}{\ell}$ is the $\ell^{\rm th}$ Catalan number. The simplification from the first to second line follows from the definition of the hypergeometric function given in \cref{eq:hypergeometric-function}. Plugging this exact formula for $f_\ell(r) = \lim_{n\to\infty}\Expval_U \frac{1}{n} \Tr W^\ell$ into \cref{eq:renyi-2}, and swapping sums, we arrive precisely at \cref{thm:page-curve}. We have therefore completed the proof.

To get a sense for $f_\ell$, we list a few here.
\begin{align*}
f_1(r) &= r^2\\
f_2(r) &= -r^4 + 2 r^3\\
f_3(r) &= 2 r^6-6 r^5+5 r^4\\
f_4(r) &= -5 r^8+20 r^7-28 r^6+14 r^5\\
f_5(r) &= 14 r^{10}-70 r^9+135 r^8-120 r^7+42 r^6\\
f_6(r) &= -42 r^{12}+252 r^{11}-616 r^{10}+770 r^9-495 r^8+132 r^7\\
f_7(r) &= 132 r^{14}-924 r^{13}+2730 r^{12}-4368 r^{11}+4004 r^{10}-2002 r^9+429 r^8\\
f_8(r) &= -429 r^{16}+3432 r^{15}-11880 r^{14}+23100 r^{13}-27300 r^{12}+19656 r^{11}-8008 r^{10}+1430 r^9.
\end{align*}
Recall that we required $G_\ell(r) \coloneqq r-f_\ell(r)$ to be symmetric under $r \mapsto 1-r$. We show some of the resulting plots in \cref{fig:Glr}. Intuitively, one can understand $G_\ell(r)$ as being an approximation to the function $m(r) \coloneqq \min(r,1-r)$; as $\ell$ increases, this approximation gets better and better.

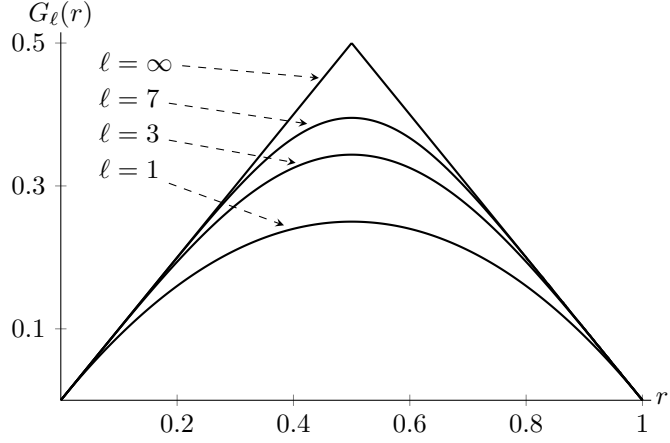
\begin{figure}
    \centering
    
    \begin{tikzpicture}
    \begin{axis}[
        width=.55\textwidth,
        height=.38\textwidth,
        xlabel={$r$},
        title={$G_\ell(r)$},
        every axis title/.style={
            at={(0,1.05)}
        },
        samples=500,
        domain=0:1,
        xtick={.2,.4,.6,.8,1},
        ytick={.1,.3,.5},
        ymax=.515, xmax=1.015,
        restrict y to domain =0:1,
        axis y line=middle,
        axis x line=middle,
        x axis line style=-,
        y axis line style=-,
        x label style={xshift=10,yshift=-5}
    ]
    
    \addplot[black, thick] plot (\x, {min(\x,1-\x)});
    
    \addplot[black, thick] plot (\x, {\x - 429 *pow(\x,8) + 2002 *pow(\x,9) - 4004 *pow(\x,10) + 4368 *pow(\x,11) - 2730 *pow(\x,12) + 924 *pow(\x,13) - 132 *pow(\x,14)});
    
    \addplot[black, thick] plot (\x, {\x - (2*pow(\x,6)-6*pow(\x,5)+5*pow(\x,4))});
    
    \addplot[black, thick] plot (\x, {\x - pow(\x,2)});

    \node (inf-label) at (axis cs:.05,.5) [anchor=north west] {$\ell=\infty$};
    \node (inf-curve) at (axis cs:.46,.45){};
    \draw[->,>=stealth,dashed](inf-label)--(inf-curve);
    
    \node (seven-label) at (axis cs:.05,.45) [anchor=north west] {$\ell=7$};
    \node (seven-curve) at (axis cs:.44,.385){};
    \draw[->,>=stealth,dashed](seven-label)--(seven-curve);
    
    \node (three-label) at (axis cs:.05,.4) [anchor=north west] {$\ell=3$};
    \node (three-curve) at (axis cs:.42,.33){};
    \draw[->,>=stealth,dashed](three-label)--(three-curve);
    
    \node (one-label) at (axis cs:.05,.35) [anchor=north west] {$\ell=1$};
    \node (one-curve) at (axis cs:.4,.24){};
    \draw[->,>=stealth,dashed](one-label)--(one-curve);
    
    \end{axis}
    \end{tikzpicture}

    \caption{The plots of $G_\ell(r) \coloneqq r - f_\ell(r)$ for various values of $\ell$. $f_\ell(r)$ is given in \cref{eq:analytic-W}. In our proof, we crucially used that $G_\ell(r)$ is symmetric under $r \mapsto 1-r$.}
    \label{fig:Glr}
\end{figure}

More specifically, we can interpret the polynomials $G_\ell(r)$ in multiple ways. Once the form of $f_\ell(r)$ is fixed by \cref{lem:form-of-w}, we showed in \cref{lem:proof-of-alpha} that $G_\ell$ is uniquely determined by the symmetry requirement that $G_\ell(r) = G_\ell(1-r)$. In the proof of \cref{lem:proof-of-alpha}, we showed that $G_\ell$ is uniquely determined by conditions 1, 2, and 3. Each of these conditions has a simple interpretation. Condition 1, $1 = \sum_{j=\ell+1}^{2\ell} \alpha_j^{(\ell)}$, is the condition that $G_\ell(1) = 0$. Condition 2, $2 = \sum_{j=\ell+1}^{2\ell} j \alpha_j^{(\ell)}$, is the condition that the derivative $G_\ell'(1) = -1$. Condition 3, $0 = \sum_{j=\ell+1}^{2\ell} \binom{j}{d}\alpha_j^{(\ell)}$ for each $2 \leq d \leq \ell$, is the condition that the $d^{\rm th}$ derivative $G_\ell^{(d)}(1) = 0$. Hence, condition 1, 2, and 3 are imposing that $G^{(d)}_\ell(0) = m^{(d)}(0)$ and $G^{(d)}_\ell(1) = m^{(d)}(1)$ for all $0 \leq d \leq \ell$. Indeed, from the derived form of $f_\ell$, we find that $G_\ell(r)$ is an approximation from below to $m(r)$, and it is an especially good approximation near the endpoints $r=0$ and $r=1$. We show some examples of this in \cref{fig:Glr}. 

We will call $G_\ell(r)$ the $\ell^{\rm th}$ order approximation to $m(r)$. From \cref{eq:simplified-s2}, $\lim_{n\to\infty}\Expval_U \frac{1}{n}S_2(U) = \sum_{\ell=1}^\infty \frac{t^{2\ell}}{2\ell} G_\ell(r)$. Thus, $t = \tanh(2s)$ is weighting how relevant each approximation is. For small squeezing, most of the weight is concentrated on low-order approximations. The lowest order approximation is $G_1(r) = r(1-r)$ resulting in a parabolic shaped Page curve. When the squeezing is very large, more and more weight is placed on high-order approximations so that the Page curve begins to resemble the triangle $G_\infty(r) = m(r)$. We see a manifestation of this interpretation as
\begin{align}
    \lim_{s \to 0} &\lim_{n \to \infty} \Expval_{U \in \U(n)} \frac{1}{s^2 n} S_2(U) = 2r(1-r), \\
    \lim_{s\to\infty} &\lim_{n\to\infty} \Expval_{U \in \U(n)} \frac{1}{s n} S_2(U) = 2 \min(r,1-r),
\end{align}
where the latter comes from the full expression in \cref{thm:page-curve}.
Meanwhile, the maximal R\'enyi-2 entropy is $\max_U \frac{1}{n}S_2(U) = m(r)\log\cosh(2s)$ from \cref{eq:max-s2}. As stated, near the endpoints $r=0$ and $r=1$, $G_\ell(r)$ is a very good approximation to $m(r)$. Thus, regardless of the squeezing strength, when the subsystem size $k = rn$ is small (or when its complement is small), the average entanglement is very close to maximal.

\subsection{Maximum value --- Proof of \texorpdfstring{\cref{cor:special-page-curve}}{\autoref*{cor:special-page-curve}}}
\label{ap:special-page}

In \cref{ap:page}, we derived the exact formula for the R\'enyi-2 Page curve as an infinite series. Here we will show that the series can be completely simplified when $r=1/2$. Bailey's theorem says that
\cite{petkovsek1996,bailey_generalized_1964,zudilin_hypergeometric_2019,slater_generalized_1966,wolfram_hypergeometric2f1}
\begin{equation}
    \, _2F_1(a,1-a;c;1/2) =  \frac{\Gamma(c/2)\Gamma((1+c)/2)}{\Gamma((a+c)/2)\Gamma((1+c-a)/2)}.
\end{equation}
Plugging this into the Page curve in \cref{thm:page-curve} at $r=1/2$ and simplifying with the duplication formula $\Gamma(\ell+1/2) = \frac{ \sqrt{\pi}(2\ell-1)!}{(\ell-1)! 2^{2\ell - 1}}$ \cite{abramowitz_handbook_2013},
\begin{equation}
    \lim_{n\to\infty}\Expval_{U\in\U(n)} \frac{1}{n}S_2(U)\\
    = \frac{1}{2}\log\cosh(2s)- \frac{1}{4}\sum_{\ell=1}^\infty \parentheses{\tanh^2(2s) / 4}^\ell \frac{1}{\ell} \binom{2\ell}{\ell}.
\end{equation}
We find that the second term is $\frac{1}{4}\int_0^{t/4} \frac{f(x)}{x}\dd{x}$, where $t=\tanh^2(2s)$ and $f(x)$ is the generating function of the central binomial coefficients $f(x)= \sum_{\ell=1}^\infty \binom{2\ell}{\ell}x^\ell$. Via the generalized binomial theorem, one finds the generating function evaluates to $f(x) = (1-4x)^{-1/2}-1$ \cite[A000984]{oeis}. Performing the integral, we find the second term to be $\frac{1}{4} \log \pargs{\frac{4-4 \sqrt{1-t}}{t+t\sqrt{1-t}} }$, which simplifies to $\frac{1}{2}\log \cosh(2s)-\log \cosh s$. Subtracting the second term from the first yields $\log \cosh s$ as desired.

Finally, using \cref{eq:max-s2} at $r=1/2$, we find the Page correction to be $\frac{1}{2}\log\cosh(2s) - \log \cosh s$, which simplifies to $\frac{1}{2}\log(1+\tanh^2 s)$.

\subsection{Constant term --- Proof of \texorpdfstring{\cref{prop:const}}{\autoref*{prop:const}}}
\label{ap:const}

In \cref{ap:page}, we found that when all the initial squeezers are equal to $s$, $\Expval_U S_2(U) = n\alpha(s,r) -\lambda(s,r) + \littleo{1}$, and we explicitly computed $\alpha(s,r)$. In this section, we determine $\lambda(s,r)$ up to a set of constants and conjecture an explicit value of the constants.

In \cref{ap:page}, we found that, asymptotically in $n$, $\Expval_U \Tr W^\ell = n f_\ell(r) + g_\ell(r) + \littleo{1}$, where $f_\ell$ and $g_\ell$ are functions of $r$. From \cref{eq:renyi-2}, it follows that $\lambda(s,r) = \sum_{\ell=1}^\infty \frac{1}{2\ell} \tanh^{2\ell}(2s) g_\ell(r)$.
Furthermore, in \cref{lem:form-of-w}, we found that $f_\ell$ takes the form $f_\ell(r) = \sum_{d=\ell+1}^{2\ell} \alpha_d^{(\ell)} r^d$. Indeed, the proof of \cref{lem:form-of-w} applies almost identically to $g_\ell(r)$, except that because we are interested now in the constant term in $\Expval_U \Tr W^\ell$ instead of the term linear in $n$, $g_\ell$ takes the form $g_\ell(r) = \sum_{d=\ell}^{2\ell} \beta_d^{(\ell)} r^d$. The extra term $\propto r^\ell$ in the polynomial $g_\ell$ does not occur in $f_\ell$.

The R\'enyi-2 entropy must be symmetric under $r \mapsto 1-r$ at every order in $s$ and $n$, since the full state on the $n$ modes is pure. Therefore, $\lambda(s,r)$ must be symmetric under $r \mapsto 1-r$ at every order in $\tanh(2s)$, meaning that $g_\ell(r) = g_\ell(1-r)$. The following lemma will therefore be useful here, and we will also find use of it in \cref{ap:variance}.

\medskip

\begin{lemma}
\label{lem:symmetric-polynomial}
Fix a polynomial $g_\ell(r) = \sum_{d=\ell}^{2\ell}\beta_d^{(\ell)} r^d$. If $g_\ell(r) = g_\ell(1-r)$, then $g_\ell(r) = \beta_{2\ell}^{(\ell)} (-1)^\ell \parentheses{r(1-r)}^\ell$.
\end{lemma}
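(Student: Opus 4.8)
The plan is to show that the hypotheses force $g_\ell$ to be divisible by both $r^\ell$ and $(1-r)^\ell$, which pins it down up to a single scalar since $\deg g_\ell \le 2\ell$.

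First, since every monomial in $g_\ell(r) = \sum_{d=\ell}^{2\ell}\beta_d^{(\ell)} r^d$ has degree at least $\ell$, write $g_\ell(r) = r^\ell h(r)$, where $h(r) \coloneqq \sum_{d=\ell}^{2\ell}\beta_d^{(\ell)} r^{d-\ell}$ is a polynomial of degree at most $\ell$. If $h \equiv 0$ then $g_\ell \equiv 0$ and $\beta_{2\ell}^{(\ell)} = 0$, so the claimed identity holds trivially; hence assume $h \not\equiv 0$.

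Second, the symmetry $g_\ell(r) = g_\ell(1-r)$ reads $r^\ell h(r) = (1-r)^\ell h(1-r)$ as an identity in $\bbR[r]$. Since $r$ and $1-r$ are coprime in $\bbR[r]$, so are $r^\ell$ and $(1-r)^\ell$; therefore $(1-r)^\ell$ divides $r^\ell h(r)$ and hence divides $h(r)$. But $\deg h \le \ell = \deg\,(1-r)^\ell$, so $h(r) = c\,(1-r)^\ell$ for some nonzero constant $c \in \bbR$, and thus $g_\ell(r) = c\, r^\ell (1-r)^\ell = c\,(r(1-r))^\ell$.

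Finally, to identify $c$, compare the coefficient of $r^{2\ell}$ on both sides: on the left it is $\beta_{2\ell}^{(\ell)}$, and on the right it is $c$ times the coefficient of $r^{2\ell}$ in $r^\ell(1-r)^\ell$, which is $c\,(-1)^\ell$. Hence $c = (-1)^\ell \beta_{2\ell}^{(\ell)}$ (using $(-1)^{2\ell}=1$), giving $g_\ell(r) = (-1)^\ell \beta_{2\ell}^{(\ell)} (r(1-r))^\ell$, as claimed. There is no genuine obstacle beyond bookkeeping; the only steps needing (minor) care are the coprimality/divisibility argument and the degenerate case $\beta_{2\ell}^{(\ell)} = 0$. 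Alternatively, one may substitute $u = r - \tfrac12$: the symmetry makes $g_\ell$ an even polynomial in $u$ of degree $\le 2\ell$, hence an element of the $(\ell+1)$-dimensional span of $1, u^2, \dots, u^{2\ell}$, while divisibility by $r^\ell$ imposes $\ell$ independent linear conditions; the solution space is one-dimensional, spanned by $(r(1-r))^\ell = (\tfrac14 - u^2)^\ell$, and matching leading coefficients again fixes the scalar.
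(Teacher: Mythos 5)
Your proof is correct, but it takes a genuinely different and more elementary route than the paper's. The paper expands $g_\ell(r)-g_\ell(1-r)=0$ with the binomial theorem to obtain a linear system on the coefficients $\beta_d^{(\ell)}$, establishes that the system has rank $\ell$ in $\ell+1$ unknowns by showing a certain matrix of binomial coefficients has nonzero determinant, and then verifies that the guessed solution $\beta_{\ell+i}^{(\ell)}=(-1)^{\ell+i}\binom{\ell}{i}\beta_{2\ell}^{(\ell)}$ satisfies all the conditions via hypergeometric-function identities. You instead observe that $g_\ell(r)=r^\ell h(r)$ with $\deg h\le\ell$, that the symmetry forces $(1-r)^\ell$ to divide $r^\ell h(r)$, and that coprimality of $r$ and $1-r$ in $\bbR[r]$ then forces $h(r)=c(1-r)^\ell$; matching the top coefficient fixes $c=(-1)^\ell\beta_{2\ell}^{(\ell)}$. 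Your divisibility argument is shorter, avoids both the determinant evaluation and the hypergeometric identities, and handles the degenerate case $\beta_{2\ell}^{(\ell)}=0$ cleanly. What the paper's heavier machinery buys is uniformity: the same linear-system-plus-determinant technique is what it must use for the companion result (\cref{lem:proof-of-alpha}), where the symmetric polynomial is $r-f_\ell(r)$ and hence is not divisible by a high power of $r$, so your factorization trick does not transfer there. For the present lemma, though, your argument is a clean and complete replacement.
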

\begin{proof}
Simplifying $g_l(r) - g_l(1-r) = 0$ with the binomial theorem, we find
\begin{align}
    0
    &= \sum_{d=\ell}^{2\ell} \beta_d^{(\ell)} \parentheses{r^d - (1-r)^d}\\
    &= -\sum_{d=0}^{\ell-1}(-r)^d \sum_{j=\ell}^{2\ell} \beta_j^{(\ell)} \binom{j}{d} + \sum_{d=\ell}^{2\ell} r^d \parentheses{\beta_d^{(\ell)} - (-1)^d \sum_{j=d}^{2\ell} \beta_j^{(\ell)} \binom{j}{d}}.
\end{align}
Equating all degrees of $r$, we find
\begin{enumerate}
    \item For $0 \leq d \leq \ell-1$, $0 = \sum_{j=\ell}^{2\ell} \beta_j^{(\ell)} \binom{j}{d}$;
    \item For $\ell \leq d \leq 2\ell$, $\beta_d^{(\ell)} = (-1)^d \sum_{j=d}^{2\ell} \beta_j^{(\ell)} \binom{j}{d}$.
\end{enumerate}
Condition 1 is a system of $\ell$ linearly independent equations. To verify this, one must show that $\det D \neq 0$, where $D$ is the $(\ell+1) \times (\ell+1)$ matrix with entries $D_{ij} = \binom{i+\ell-1}{j-1}$ for $1\leq i \leq \ell+1$ and $1 \leq j \leq \ell$, and entries $D_{i,\ell+1} = \delta_{i,\ell+1}$. In other words, the rightmost column is all zeros except for the entry on the diagonal. Inserting this rightmost column is equivalent to adding one new equation to the linear system, where this new equation simply fixes the value of one of the variables. Since the rightmost column is all zeros except for the $(\ell+1, \ell+1)$ entry, we use Laplace's expansion to find that $\det D = \det D'$, where $D'$ is the $\ell \times \ell$ upper left submatrix of $D$. One can prove that $\det D' \neq 0$ in a similar way as shown in Ref.~\cite{delanoy_determinant_2017}. Alternatively, one can use Corollary~11 from Ref.~\cite{grinberg_hyperfactorial}, which shows that the matrix with entries $\binom{a_i}{j-1}$ has zero determinant if and only if there are indices $i \neq j$ such that $a_i = a_j$. For the matrix $D'$, $a_i = \ell+i-1$, and therefore the determinant is nonzero.

In summary, we have a system of $\ell$ linearly independent equations for $\ell+1$ variables. We can therefore uniquely express the solution by fixing one of the variables. Suppose we know the value of $\beta_{2\ell}^{(\ell)}$. Then one can verify that conditions 1 and 2 are satisfied by $\beta_{\ell+i}^{(\ell)} = (-1)^{\ell+i} \binom{\ell}{i} \beta_{2\ell}^{(\ell)}$,
for $0 \leq i \leq \ell$. With the binomial theorem, this simply becomes $g_\ell(r) = \beta_{2\ell}^{(\ell)}(-1)^\ell\parentheses{r(1-r)}^\ell$.

When verifying that the two conditions are satisfied by $\beta_{\ell+i}^{(\ell)} = (-1)^{\ell+i} \binom{\ell}{i} \beta_{2\ell}^{(\ell)}$, one finds that the right hand side of both conditions can be simplified in terms of the hypergeometric function via \cref{eq:hypergeometric-function}. For example, condition 2, written as $(-1)^d = \sum_{j=d}^{2\ell} \binom{j}{d} (\beta_j^{(\ell)}/\beta_d^{(\ell)})$, reduces to $(-1)^d = \, _2F_1(d-2 \ell ,d+1;d-\ell +1;1)$. As in the proof of \cref{lem:proof-of-alpha}, we find that
\begin{align}
    \, _2F_1(d-2 \ell ,d+1;d-\ell +1;1)
    &= \frac{(-\ell)_{2\ell-d}}{(d-\ell+1)_{2\ell-d}}\\
    &= \frac{\Gamma(d-\ell+1)\Gamma(\ell-d)}{\Gamma(-\ell) \Gamma(\ell+1)}\\
    &= \frac{\Gamma(d-\ell+1)\Gamma(-\ell)(\ell-d-1)(\ell-d-2)\dots (-\ell)}{\Gamma(-\ell) \Gamma(\ell+1)}\\
    &= (-1)^{1+(\ell-d+1)-(-\ell)} \frac{\Gamma(d-\ell+1) (d+1-\ell)(d+2-\ell)\dots (\ell)}{\Gamma(\ell+1)}\\
    &= (-1)^d \frac{\Gamma(d-\ell+1) \ell!}{\Gamma(\ell+1) (d-\ell)!}\\
    &= (-1)^d,
\end{align}
which proves condition 2. The proof of condition 1 is similar.
\end{proof}

\medskip

For convenience, we define $a^{(\ell)} \coloneqq \beta_{2\ell}^{(\ell)}$. It follows from \cref{lem:symmetric-polynomial} that 
\begin{equation}
    \label{eq:const-from-a}
    \lambda(s,r) = \sum_{\ell=1}^\infty \frac{(-1)^\ell}{2\ell} a^{(\ell)}\tanh^{2\ell}(2s) \parentheses{r(1-r)}^\ell.
\end{equation}
Note that if $a^{(\ell)} = (-1)^{\ell}4^{\ell -1}$, then the Taylor series of $\log$ implies $\lambda(s,r) = -\frac{1}{8}\log\pargs{1-4r(1-r)\tanh^2(2s)}$, which is precisely \cref{prop:const}. Hence, the only thing left to prove is the following lemma.

\medskip

\begin{lemma}
    \label{lem:const}
    For all $\ell \in \bbN$, $a^{(\ell)} = (-1)^{\ell}4^{\ell -1}$.
\end{lemma}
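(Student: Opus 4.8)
My plan is to read $a^{(\ell)}$ off directly from the Weingarten formula \cref{eq:weingarten-trw} by isolating a single term, and then to reduce the result to a finite combinatorial sum that I would evaluate with the breakpoint-graph machinery flagged in the discussion around \cref{prop:const}. By definition $a^{(\ell)}=\beta^{(\ell)}_{2\ell}$ is the coefficient of $r^{2\ell}$ in $g_\ell(r)$, i.e.\ the coefficient of $k^{2\ell}$ in the $n$-independent part of $\Expval_{U\in\U(n)}\Tr W^\ell$. In \cref{eq:weingarten-trw} the $(i,i')$-sums and the $(j,j')$-sums decouple: the former produce $k$ raised to the number of cycles of $\gamma\sigma$, where $\gamma=(1\,2\,\cdots\,2\ell)$, and the latter produce $n^{c(\tau)}$, where $c(\tau)$ counts the free $j$-indices. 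As already used in the proof of \cref{lem:form-of-w}, the number of cycles of $\gamma\sigma$ equals $2\ell$ only when $\gamma\sigma$ is the identity, i.e.\ $\sigma=\gamma^{-1}$. Hence the full $k^{2\ell}$-coefficient of $\Expval_U\Tr W^\ell$ is the function of $n$
\begin{equation}
  \Phi_\ell(n)\;\coloneqq\;\sum_{\tau\in S_{2\ell}} n^{c(\tau)}\,\mathrm{Wg}(\gamma^{-1}\tau^{-1},n),
\end{equation}
whose Laurent expansion in $1/n$ begins $\Phi_\ell(n)=\alpha^{(\ell)}_{2\ell}\,n^{1-2\ell}+a^{(\ell)}\,n^{-2\ell}+\cdots$ with $\alpha^{(\ell)}_{2\ell}=(-1)^{\ell-1}C_{\ell-1}$ from \cref{eq:alpha-analytic}. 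So the goal becomes $a^{(\ell)}=[n^{-2\ell}]\,\Phi_\ell(n)$.

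Next I would identify $c(\tau)$. The surviving $j$-constraints are $j_{2a-1}=j_{2a}$ and $j'_{2a-1}=j'_{2a}$ for $1\le a\le\ell$, together with $j_b=j'_{\tau(b)}$; drawing these as edges on the $4\ell$ symbols $j_\bullet,j'_\bullet$ gives a $2$-regular graph, and $c(\tau)$ is its number of cycles. This is exactly the cycle count of the \emph{breakpoint graph} obtained by superposing the fixed perfect matching $\pi_0=(1\,2)(3\,4)\cdots(2\ell-1\,2\ell)$ with the matching encoded by $\tau$ --- the standard object from the genome-rearrangement literature \cite{Alexeev2017hultman}.

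With these two ingredients in hand I would substitute the asymptotic Weingarten expansion \cref{eq:asymptotic-weingarten}: the $\tau$-term of $\Phi_\ell$ is a $1/n$-series starting at order $n^{c(\tau)-2\ell-\abs{\gamma^{-1}\tau^{-1}}}$ and descending in steps of $n^{-2}$, so collecting all pieces of total order $n^{-2\ell}$ turns $a^{(\ell)}$ into a finite sum over those $\tau\in S_{2\ell}$ with $c(\tau)\equiv\abs{\gamma^{-1}\tau^{-1}}\pmod 2$, each term being a product of signed Catalan numbers (one per cycle of $\gamma^{-1}\tau^{-1}$, with a subleading Weingarten coefficient whenever $c(\tau)>\abs{\gamma^{-1}\tau^{-1}}$). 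I would then rewrite this sum through the cycle statistics of the breakpoint graphs $\mathrm{BG}(\pi_0,\tau)$ and evaluate it, following Ref.~\cite{alekseyev}, obtaining $(-1)^\ell 4^{\ell-1}$. As independent checks I would compute $\Phi_\ell(n)$ explicitly for $\ell\in\{1,2,3,4,5\}$, and verify that \cref{eq:const-from-a} then collapses at $r=\tfrac12$ to $\lambda(s,\tfrac12)=\tfrac18\sum_{\ell\ge1}\tanh^{2\ell}(2s)/\ell=\tfrac14\log\cosh(2s)$, consistent with \cref{prop:const}.

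The first two steps are bookkeeping built on \cref{lem:form-of-w}; all the difficulty is in the last step. The sum for $a^{(\ell)}$ entangles the combinatorics of the long cycle $\gamma$ (through $\mathrm{Wg}(\gamma^{-1}\tau^{-1},n)$) with that of the involution $\pi_0$ (through $c(\tau)$), admits no evident telescoping, and its dependence on the subleading Weingarten coefficients makes a brute-force evaluation hopeless for general $\ell$. It is precisely this sum that is resolved by recasting it in terms of breakpoint graphs, which is the content of Ref.~\cite{alekseyev} and the step I expect to be the main obstacle.
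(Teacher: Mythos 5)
Your proposal follows essentially the same route as the paper's proof in \cref{ap:const}: fix $\sigma$ to the unique permutation producing $k^{2\ell}$, reduce $a^{(\ell)}$ to the signed-Catalan sum over $\tau\in S_{2\ell}$ in \cref{eq:a-ell} governed by the cycle statistic $\xi(\tau)$, interpret $\xi$ via breakpoint graphs, and defer the evaluation of that sum to Ref.~\cite{alekseyev}, checking small $\ell$ explicitly. The only cosmetic difference is that you allow for subleading Weingarten contributions at order $n^{-2\ell}$, which in fact cannot occur since $\xi(\tau)-\abs{\tau}\leq 1$, so your sum collapses to exactly \cref{eq:a-ell}.
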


\medskip

To make progress on \cref{lem:const}, we write the formula for $a^{(\ell)}$. Recall that $\beta_{2\ell}^{(\ell)}$ is the coefficient in front of the $r^{2\ell}$ term in $g_\ell(r)$. Looking at \cref{eq:weingarten-trw}, we see that the only way to get an $r^{2\ell}$ term (i.e.~a $k^{2\ell}$ term) is if $\sigma$ is the permutation $\sigma(1)=2\ell$, $\sigma(2) = 1$, $\sigma(3)=2$, $\dots$, $\sigma(2\ell) = 2\ell-1$. Therefore, we are interested in the constant term (the term asymptotically independent of $n$) in the following term of \cref{eq:weingarten-trw}:
\begin{equation}
\begin{aligned}
    &n^{2\ell}\sum_{j_1,\dots,j_{2\ell}=1}^n \sum_{j_1',\dots,j_{2\ell}'=1}^n \sum_{\tau \in S_{2\ell}} \text{Wg}(\sigma\tau^{-1},n)\\
    &\qquad \times \delta_{j_1,j_2}\delta_{j_1',j_2'}\dots \delta_{j_{2\ell-1},j_{2\ell}}\delta_{j_{2\ell-1}',j_{2\ell}'}\\
    &\qquad \times \delta_{j_1,j_{\tau(1)}'} \dots \delta_{j_{2\ell},j_{\tau(2\ell)}'},
\end{aligned}
\end{equation}
which simplifies to
\begin{align}
    \begin{split}
        &=n^{2\ell}\sum_{j_1,\dots,j_{2\ell}=1}^n \sum_{j_1',\dots,j_{2\ell}'=1}^n \sum_{\tau \in S_{2\ell}} \text{Wg}(\tau^{-1},n)\\
        &\qquad \times \delta_{j_1,j_2}\delta_{j_1',j_2'}\dots \delta_{j_{2\ell-1},j_{2\ell}}\delta_{j_{2\ell-1}',j_{2\ell}'}\\
        &\qquad \times \delta_{j_1,j_{\tau(2\ell)}'}\delta_{j_2,j_{\tau(1)}'}\delta_{j_3,j_{\tau(2)}'} \dots \delta_{j_{2\ell},j_{\tau(2\ell-1)}'}
    \end{split}\\
    \begin{split}
        &=n^{2\ell}\sum_{j_2,j_4,\dots,j_{2\ell}=1}^n \sum_{j_1',\dots,j_{2\ell}'=1}^n \sum_{\tau \in S_{2\ell}} \text{Wg}(\tau,n)\\
        &\qquad \times \delta_{j_1',j_2'}\dots \delta_{j_{2\ell-1}',j_{2\ell}'}\\
        &\qquad \times \delta_{j_2,j_{\tau(2\ell)}'}\delta_{j_2,j_{\tau(1)}'}\delta_{j_4,j_{\tau(2)}'}\delta_{j_4,j_{\tau(3)}'} \dots \delta_{j_{2\ell},j_{\tau(2\ell-2)}'}\delta_{j_{2\ell},j_{\tau(2\ell-1)}'}
    \end{split}\\
    \begin{split}
        &=n^{2\ell}\sum_{j_1',\dots,j_{2\ell}'=1}^n \sum_{\tau \in S_{2\ell}} \text{Wg}(\tau,n)\\
        &\qquad \times \delta_{j_1',j_2'}\dots \delta_{j_{2\ell-1}',j_{2\ell}'}\\
        &\qquad \times \delta_{j_{\tau(2\ell)}',j_{\tau(1)}'} \delta_{j_{\tau(2)}',j_{\tau(3)}'} \dots \delta_{j_{\tau(2\ell-2)}',j_{\tau(2\ell-1)}'}.
    \end{split}
\end{align}
Note we used that $\mathrm{Wg}(\tau,n) = \mathrm{Wg}(\tau^{-1}, n)$ from \cref{eq:asymptotic-weingarten}. Let $\#(\tau)$ be the number of cycles in the disjoint cycle decomposition of $\tau$ and let $\{c_i^{(\tau)} \mid i\in \set{1,\dots, \#(\tau)}\}$ be the cycle decomposition. Then, from \cref{eq:asymptotic-weingarten}, asymptotically,
\begin{equation}
    \text{Wg}(\tau,n) = \frac{1}{n^{2\ell+\abs{\tau}}} \prod_{i=1}^{\#(\tau)} (-1)^{\lvert c_i^{(\tau)}\rvert -1} C_{\lvert c_i^{(\tau)}\rvert - 1},
\end{equation}
where recall $C_m$ is the $m^{\rm th}$ Catalan number. Therefore, $a^{(\ell)}$ is the constant term independent of $n$ in
\begin{equation}
\begin{aligned}
    &\sum_{j_1,\dots,j_{2\ell}=1}^n \sum_{\tau \in S_{2\ell}} \frac{1}{n^{\abs{\tau}}} \prod_{i=1}^{\#(\tau)} (-1)^{\lvert c_i^{(\tau)}\rvert-1} C_{\lvert c_i^{(\tau)}\rvert - 1} \\
    &\qquad \times \delta_{j_1,j_2}\dots \delta_{j_{2\ell-1},j_{2\ell}}\\
    &\qquad \times \delta_{j_{\tau(2\ell)},j_{\tau(1)}} \delta_{j_{\tau(2)},j_{\tau(3)}} \dots \delta_{j_{\tau(2\ell-2)},j_{\tau(2\ell-1)}}.
\end{aligned}
\end{equation}

\setlength{\tabcolsep}{12pt}

\begin{table}
    \centering
    \begin{tabular}{lll}
        \toprule
         $\ell$ & $a^{(\ell)}$ & \\
         \midrule
         $1$ & $-C_1$ & $= -1$ \\
         $2$ & $-4 C_0^2 C_1+4 C_0 C_2$ & $=4$ \\
         $3$ & $-9 C_0^4 C_1+15 C_0^2 C_1^2-C_1^3+18 C_0^3 C_2-6 C_0 C_1 C_2-9 C_0^2 C_3$ & $=-16$\\
         $4$ & $-16 C_0^6 C_1+80 C_0^4 C_1^2-40 C_0^2 C_1^3+48 C_0^5 C_2-112 C_0^3 C_1 C_2$ & \\
         & $~~+16 C_0 C_1^2 C_2+8 C_0^2 C_2^2-48 C_0^4 C_3+24 C_0^2 C_1 C_3+16 C_0^3 C_4$ & $=64$ \\
         $5$ & $-25 C_0^8 C_1+250 C_0^6 C_1^2-380 C_0^4 C_1^3+80 C_0^2 C_1^4-C_1^5+100 C_0^7 C_2$ & \\
         & $~~ -600 C_0^5 C_1 C_2+440 C_0^3 C_1^2 C_2-20 C_0 C_1^3 C_2 + 130 C_0^4 C_2^2$ & \\
         & $~~ -50 C_0^2 C_1 C_2^2 -150 C_0^6 C_3+320 C_0^4 C_1 C_3-60 C_0^2 C_1^2 C_3-40 C_0^3 C_2 C_3$ & \\
         & $~~ +100 C_0^5 C_4-60 C_0^3 C_1 C_4-25 C_0^4 C_5$ & $=-256$\\
         \bottomrule
    \end{tabular}
    \caption{A table showing the first five values of $a^{(\ell)}$ from \cref{eq:a-ell}, which matches \cref{lem:const}.}
    \label{tab:const}
\end{table}

Define $\xi\colon S_{2\ell} \to \bbN$ such that
\begin{align}
    n^{\xi(\tau)}
    &= \sum_{j_1,\dots,j_{2\ell}=1}^n \delta_{j_1,j_2}\delta_{j_3,j_4}\dots \delta_{j_{2\ell-1},j_{2\ell}} \times \delta_{j_{\tau(2\ell)},j_{\tau(1)}} \delta_{j_{\tau(2)},j_{\tau(3)}} \dots \delta_{j_{\tau(2\ell-2)},j_{\tau(2\ell-1)}}\\
    &= \sum_{j_1,\dots,j_\ell=1}^n \delta_{j_{\ceil{\tau(2\ell)/2}},j_{\ceil{\tau(1)/2}}} \delta_{j_{\ceil{\tau(2)/2}},j_{\ceil{\tau(3)/2}}} \dots \delta_{j_{\ceil{\tau(2\ell-2)/2}},j_{\ceil{\tau(2\ell-1)/2}}}.
\end{align}
Note that the definition of $\xi$ is independent of the value of $n$, and so we can equivalently define $\xi$ as
\begin{equation}
    \xi(\tau) = \log_2 \sum_{j_1,\dots,j_\ell=1}^2 \delta_{j_{\ceil{\tau(2\ell)/2}},j_{\ceil{\tau(1)/2}}} \delta_{j_{\ceil{\tau(2)/2}},j_{\ceil{\tau(3)/2}}} \dots \delta_{j_{\ceil{\tau(2\ell-2)/2}},j_{\ceil{\tau(2\ell-1)/2}}}.
\end{equation}
To get the constant term, we need $\frac{n^{\xi(\tau)}}{n^{\abs{\tau}}} = 1$. Therefore,
\begin{equation}
    a^{(\ell)}
    = \sum_{\substack{\tau\in S_{2\ell} \text{ s.t.}\\\xi(\tau) = \abs{\tau}}} \prod_{i=1}^{\#(\tau)} (-1)^{\lvert c_i^{(\tau)}\rvert-1} C_{\lvert c_i^{(\tau)}\rvert - 1}.
\end{equation}
Finally, since the sum of the lengths of the cycles of a permutation $\tau \in S_{2\ell}$ is always $2\ell$, we find
\begin{equation}
    \label{eq:a-ell}
    a^{(\ell)}
    = \sum_{\substack{\tau\in S_{2\ell} \text{ s.t.}\\\xi(\tau) = \abs{\tau}}} (-1)^{\#(\tau)} \prod_{i=1}^{\#(\tau)} C_{\lvert c_i^{(\tau)}\rvert - 1}.
\end{equation}
From this equation, we can exactly compute $a^{(\ell)}$ on a computer for small values of $\ell$. \cref{tab:const} shows the first five of these, which all match \cref{lem:const} saying that $a^{(\ell)} = (-1)^\ell 4^{\ell-1}$. Note that if we change the condition of $\xi(\tau) = \abs{\tau}$ in \cref{eq:a-ell} to $\xi(\tau) = \abs{\tau} + 1$, then this gives us the term that is linear in $n$ and hence is the equation for $\alpha_{2\ell}^{(\ell)}$ from \cref{lem:proof-of-alpha}, which is $\frac{(-1)^{\ell +1}}{2 \ell -1} \binom{2 \ell -1}{\ell -1}$.

To complete the proof of \cref{lem:const}, we need to evaluate \cref{eq:a-ell} for all $\ell \in \bbN$. This is done in Ref.~\cite{alekseyev} using objects called \textit{breakpoint graphs} that arise in the study of gene orders in bioinformatics \cite{Alexeev2017hultman}. Roughly, $\xi$ has an interpretation in terms of cycles of breakpoint graphs.

\section{Variance of the R\'enyi-2 entropy --- Proof of \texorpdfstring{\cref{thm:renyi2-variance}}{\autoref*{thm:renyi2-variance}}}
\label{ap:variance}

In this section, we shift our attention away from $\Expval_U S_2(U)$ and instead to $\variance_U S_2(U) = \Expval_U S_2(U)^2 - \parentheses{\Expval_U S_2(U)}^2$, and we will prove \cref{thm:renyi2-variance}. We are again interested in the case where all the initial squeezers are equal to $s$. Using \cref{eq:renyi-2}, this becomes
\begin{align}
    \variance_{U\in\U(n)}S_2(U)
    &= \sum_{\ell,\ell'=1}^\infty \frac{1}{(2\ell)(2\ell')}\tanh^{2\ell+2\ell'}(2s) \parentheses{\Expval_U (\Tr W^\ell)(\Tr W^{\ell'}) - \parentheses{\Expval_U \Tr W^\ell}\parentheses{\Expval_U \Tr W^{\ell'}} }\\
    &= \sum_{d=2}^\infty \tanh^{2d}(2s) \sum_{\ell=1}^{d-1} \frac{1}{4\ell(d-\ell)} \parentheses{\Expval_U (\Tr W^\ell)(\Tr W^{d-\ell}) - \parentheses{\Expval_U \Tr W^\ell}\parentheses{\Expval_U \Tr W^{d-\ell}} }.\label{eq:exact-variance}
\end{align}
As a direct consequence of \cref{lem:form-of-w}, we find that asymptotically
\begin{equation}
    \parentheses{\Expval_U \Tr W^\ell}\parentheses{\Expval_U \Tr W^{\ell'}} = n^2 p_{\ell,\ell'}(r) + n q_{\ell,\ell'}(r)+t_{\ell,\ell'}(r) + \littleo{1},
\end{equation}
where $p_{\ell,\ell'}$ is a polynomial of degrees $\ell+\ell'+2$ through $2\ell+2\ell'$ in $r$, $q_{\ell,\ell'}$ is a polynomial of degrees $\ell+\ell'+1$ through $2\ell+2\ell'$ in $r$, and $t_{\ell,\ell'}$ is a polynomial of degrees $\ell+\ell'$ through $2\ell+2\ell'$ in $r$. 

Furthermore, we find an analogous result for $\Expval_U (\Tr W^\ell)(\Tr W^{\ell'})$. Let $L \coloneqq 2\ell+2\ell'$. Using \cref{eq:weingarten-formula}, we find that
\begin{equation}
\label{eq:weingarten-trw-trw}
\begin{aligned}
    \Expval_{U \in \U(n)} (\Tr W^\ell)(\Tr W^{\ell'})
    &= \sum_{i_1,\dots i_{L}=1}^k \sum_{i_1',\dots i_{L}'=1}^k \sum_{j_1,\dots j_{L}=1}^n \sum_{j_1',\dots j_{L}'=1}^n \sum_{\sigma,\tau \in S_L} \mathrm{Wg}(\sigma\tau^{-1}, n)\\
    &\qquad \times \delta_{i_{2\ell}',i_1}\delta_{i_1',i_2}\delta_{i_2',i_3}\dots \delta_{i_{2\ell-1}',i_{2\ell}} \\
    &\qquad\times \delta_{i_{L}',i_{2\ell+1}}\delta_{i_{2\ell+1}',i_{2\ell+2}}\delta_{i_{2\ell+2}',i_{2\ell+3}}\dots \delta_{i_{L-1}',i_L}\\
    &\qquad \times \delta_{j_1,j_2}\delta_{j_1',j_2'}\dots \delta_{j_{2\ell-1},j_{2\ell}}\delta_{j_{2\ell-1}',j_{2\ell}'} \\
    &\qquad\times \delta_{j_{2\ell+1},j_{2\ell+2}}\delta_{j_{2\ell+1}',j_{2\ell+2}'}\dots \delta_{j_{L-1},j_{L}}\delta_{j_{L-1}',j_{L}'}\\
    &\qquad \times \delta_{i_1,i_{\sigma(1)}'} \dots \delta_{i_{L},i_{\sigma(L)}'}\\
    &\qquad \times \delta_{j_1,j_{\tau(1)}'} \dots \delta_{j_{L},j_{\tau(L)}'},
\end{aligned}
\end{equation}
and the asymptotic form of $\mathrm{Wg}$ function is given in \cref{eq:asymptotic-weingarten}. In a similar proof to \cref{lem:form-of-w} but with \cref{eq:weingarten-trw-trw} instead of \cref{eq:weingarten-trw}, we analogously find that asymptotically
\begin{equation}
    \Expval_{U \in \U(n)} (\Tr W^\ell)(\Tr W^{\ell'}) = n^2 P_{\ell,\ell'}(r) + n Q_{\ell,\ell'}(r) + T_{\ell,\ell'}(r) + \littleo{1},
\end{equation}
where $P_{\ell,\ell'}$ is a polynomial of degrees $\ell+\ell'+2$ through $2\ell+2\ell'$ in $r$, $Q_{\ell,\ell'}$ is a polynomial of degrees $\ell+\ell'+1$ through $2\ell+2\ell'$ in $r$, and $T_{\ell,\ell'}$ is a polynomial of degrees $\ell+\ell'$ through $2\ell+2\ell'$ in $r$. For completeness, we prove this result for the $P_{\ell,\ell'}$ term in the following lemma. The proofs for the $Q_{\ell,\ell'}$ and $T_{\ell,\ell'}$ terms follow from trivially tweaking the final part of the proof.

\medskip

\begin{lemma}
\label{lem:form-of-ww}
Fix positive integers $\ell$ and $\ell'$. There exist coefficients $\beta_d^{(\ell,\ell')}$ for $d \in \set{\ell+\ell'+2,\dots, 2\ell+2\ell'}$ such that
\begin{equation}
    P_{\ell,\ell'}(r) \coloneqq \lim_{n\to\infty}\frac{1}{n^2}\Expval_U (\Tr W^\ell)(\Tr W^{\ell'}) = \sum_{d=\ell+\ell'+2}^{2\ell+2\ell'} \beta_d^{(\ell,\ell')}r^d.
\end{equation}
\end{lemma}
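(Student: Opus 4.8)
The plan is to follow the proof of \cref{lem:form-of-w} almost verbatim, now starting from \cref{eq:weingarten-trw-trw} in place of \cref{eq:weingarten-trw} and tracking powers of $n$ up to $n^2$ rather than up to $n^1$. Set $L\coloneqq 2\ell+2\ell'$. As in that proof, I interpret each Kronecker delta in \cref{eq:weingarten-trw-trw} as a constraint on the summation indices, so that a fixed pair $(\sigma,\tau)\in S_L\times S_L$ contributes the monomial $c_{\sigma,\tau}\, r^{\#i}\, n^{-L-\abs{\sigma\tau^{-1}}+\#j+\#i}$, where I have written $k=rn$, used the asymptotic Weingarten formula \cref{eq:asymptotic-weingarten}, and let $\#i$ (resp.\ $\#j$) denote the number of independent indices surviving among the $i,i'$ (resp.\ the $j,j'$) after their deltas are imposed; the coefficients $c_{\sigma,\tau}$ are products of signed Catalan numbers coming from $\mathrm{Wg}$. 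Thus $\Expval_U(\Tr W^\ell)(\Tr W^{\ell'})$ is, up to corrections from the $\bigO{n^{-q-\abs{\cdot}-2}}$ terms of $\mathrm{Wg}$ that only lower powers of $n$, a finite sum of such monomials.

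First I would bound the top power of $n$. Since $W=F^\dag F$ with $F=\Pi\bar U U^\dag\Pi$ is positive semidefinite, $\Tr W^\ell\ge 0$, and by the argument in the proof of \cref{lem:form-of-w} (deleting the projectors cannot decrease the trace) one has $0\le\Tr W^\ell\le n$ and likewise for $\ell'$; hence $0\le(\Tr W^\ell)(\Tr W^{\ell'})\le n^2$ pointwise, which rules out any $n^a$ with $a\ge 3$. Consequently we may collect the terms of orders $n^2$, $n^1$, and $n^0$, with everything else $\littleo{1}$; the $n^2$ coefficient is a polynomial in $r$ with rational coefficients, which by definition is $P_{\ell,\ell'}(r)$, and this already shows the limit exists and is independent of $n$. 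A concrete term of order $n^2$ arises from the ``block-cyclic'' choice $\sigma=\tau$ acting as the single cycle $(1\,2\,\cdots\,2\ell)$ on the first block and as $(2\ell{+}1\,\cdots\,L)$ on the second: then $\sigma\tau^{-1}=\mathrm{id}$ gives $n^{-L}$, the unconstrained $i,i'$ sum gives $k^{L}$, and the $j,j'$ sum collapses to one free index per block giving $n^2$, for a net $n^2 r^{2\ell+2\ell'}$.

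It then remains to pin down the degree range of $P_{\ell,\ell'}$, which amounts to bounding $\#i$ and $\#j$ for arbitrary $(\sigma,\tau)$. The upper bound $\#i\le L=2\ell+2\ell'$ is immediate since the $i,i'$ sum involves $L$ pairs of indices, giving $\deg P_{\ell,\ell'}\le 2\ell+2\ell'$. For the lower bound I would argue, exactly as in \cref{lem:form-of-w}, that $\#j\le\ell+\ell'$ for every $\tau$: the first line of $j$-deltas in \cref{eq:weingarten-trw-trw} identifies the $L$ indices $\{j_a\}$ in pairs and the $L$ indices $\{j_a'\}$ in pairs, leaving $\ell+\ell'$ independent $j$'s and $\ell+\ell'$ independent $j'$'s; the second line, $\delta_{j_1,j_{\tau(1)}'}\cdots\delta_{j_L,j_{\tau(L)}'}$, is then a perfect matching that, after this reduction, makes every surviving $j$- and $j'$-index appear with multiplicity exactly two, so the associated graph is a disjoint union of even cycles on $2(\ell+\ell')$ vertices and hence has at most $\ell+\ell'$ components. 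Combining $\abs{\sigma\tau^{-1}}\ge 0$ with $\#j\le\ell+\ell'$, a monomial contributes at order $n^2$ only if $\#i=2+L+\abs{\sigma\tau^{-1}}-\#j\ge 2+L-(\ell+\ell')=\ell+\ell'+2$. Hence every monomial of $P_{\ell,\ell'}$ has degree between $\ell+\ell'+2$ and $2\ell+2\ell'$, which is the assertion; the $Q_{\ell,\ell'}$ and $T_{\ell,\ell'}$ statements follow from the same computation at the coefficients of $n^1$ and $n^0$, where the constraint becomes $\#i=1+L+\abs{\sigma\tau^{-1}}-\#j$ and $\#i=L+\abs{\sigma\tau^{-1}}-\#j$, lowering the minimum degree to $\ell+\ell'+1$ and $\ell+\ell'$ respectively while keeping the maximum at $2\ell+2\ell'$.

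The main obstacle is the lower-degree bound, i.e.\ the uniform estimate $\#j\le\ell+\ell'$: one must verify that the second line of $j$-deltas genuinely imposes ``all new'' constraints for every $\tau$, including permutations that mix the two trace blocks, so that the $2$-regularity of the collapsed graph (and hence the bound on its number of connected components) holds without exception. This is the same delicate combinatorial point that already appears in \cref{lem:form-of-w}; everything else is the bookkeeping of powers of $n$ and $k$.
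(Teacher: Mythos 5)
Your proposal is correct and follows essentially the same route as the paper's proof: the same Weingarten bookkeeping of powers of $n$ and $k$, the same pointwise bound $(\Tr W^\ell)(\Tr W^{\ell'})\le n^2$ to cap the $n$-power, the same block-cyclic $\sigma=\tau$ witnessing the $n^2 r^{2\ell+2\ell'}$ term, and the same bound $\#j\le \ell+\ell'$ to force the minimum degree $\ell+\ell'+2$. Your cycle-counting argument for the $\#j$ bound (the collapsed bipartite graph is $2$-regular, hence a union of cycles with at most $\ell+\ell'$ components for every $\tau$, block-mixing or not) is in fact a cleaner justification of the step the paper phrases as the second line of deltas imposing ``all new constraints.''
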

\begin{proof}
Much of the details of this proof are the same as in the proof of \cref{lem:form-of-w}. We will use the asymptotic form of the Wg function, which is written in \cref{eq:asymptotic-weingarten}. 
The proof will proceed as follows. First, we will prove that $(\Tr W^\ell)(\Tr W^{\ell'})$ contains a term proportional to $n^2$ and no terms proportional to $n^a$ for any $a > 2$. Therefore, $P_{\ell,\ell'}(r)$ is indeed independent of $n$. Next, we will prove that $P_{\ell,\ell'}(r)$ has no terms $r^a$ for $a > L$ and $a\leq \ell+\ell'+1$. Throughout this proof, we interpret the delta functions in \cref{eq:weingarten-trw-trw} as constraints on the summations. Different permutations on the indices result in a different number of constraints and hence terms with different powers of $n$ and $k$.

Recall that $\Pi$ can only decrease the trace. Getting rid of the $\Pi$ in $W$ and using the cyclic nature of the trace, we see that $(\Tr W^\ell)(\Tr W^{\ell'}) \leq n^2$. Furthermore, consider \cref{eq:weingarten-trw-trw} with $\sigma=\tau$ defined by $\sigma(1)=2\ell$, $\sigma(2) = 1$, $\dots$, $\sigma(2\ell) = 2\ell-1$ and $\sigma(2\ell+1) = L$, $\sigma(2\ell+2) = 2\ell+1$, $\dots$, $\sigma(L) = L-1$. Then, $\sigma \tau^{-1}$ is the identity, and so $\mathrm{Wg}(\sigma\tau^{-1},n)$ contributes a factor of $n^{-L}$. With this $\sigma$, the sum over the $i$ and $i'$ yields a factor of $k^L$. Finally, the sum over $j$ yields
\begin{equation}
\delta_{j_1',j_2'}\delta_{j_3',j_4'}\dots \delta_{j_{L-1}',j_{L}'} \delta_{j_1',j_{2\ell}'} \delta_{j_2',j_1'}\dots \delta_{j_{2\ell}',j_{2\ell-1}'}\delta_{j_{2\ell+1}',j_{L}'} \delta_{j_{2\ell+2}',j_{2\ell+1}'}\dots \delta_{j_{L}',j_{L-1}'}. 
\end{equation}
Then summing over $j'$, we get two factors of $n$. Hence, the term with the specific permutation described above yields a term of the form $n^2 k^{L} n^{-L} = n^2 r^{L}$.

We have shown that there is a term proportional to $n^2$ and that there are no terms proportional to $n^a$ for $a > 2$. Since we are working asymptotically in $n$, we can therefore ignore all terms proportional to $n^a$ for every $a < 2$. This proves that $\lim_{n\to\infty} \frac{1}{n^2}(\Tr W^\ell)(\Tr W^{\ell'})$ is independent of $n$ and only depends on $r$, which justifies the definition of the function $P_{\ell,\ell'}(r)$. The only thing left to show is that $P_{\ell,\ell'}(r)$ has only terms $r^{\ell+\ell'+2}$ through $r^{2\ell+2\ell'} = r^L$. So we only need to show that there are no terms $r^a$ for $a > L$ and $a \leq \ell+\ell'+1$. We begin with the former.

To look at powers of $r$, it is sufficient to look at powers of $k$. We therefore restrict our attention to the sum over $i$ and $i'$ in \cref{eq:weingarten-trw-trw}. In order to get the highest power of $k$, we require the least constraints on $i$ and $i'$ (i.e.~the least distinct Kronecker deltas). We therefore require $\sigma$ to be the permutation so that
\begin{equation}
     \delta_{i_{2\ell}',i_1}\delta_{i_1',i_2}\delta_{i_2',i_3}\dots \delta_{i_{2\ell-1}',i_{2\ell}}  \delta_{i_{L}',i_{2\ell+1}}\delta_{i_{2\ell+1}',i_{2\ell+2}}\delta_{i_{2\ell+2}',i_{2\ell+3}}\dots \delta_{i_{L-1}',i_L} = \delta_{i_1,i_{\sigma(1)}'} \dots \delta_{i_{L},i_{\sigma(L)}'}.
\end{equation}
This permutation $\sigma$ is exactly the $\sigma$ described above that gave the term proportional to $n^2 r^L$. Hence, $L$ is the highest power of $r$ that can be achieved.

Next we need to show that $\ell+\ell'+2$ is the lowest power of $k$ that can be achieved. The sum over $j$ and $j'$ can give at most a factor of $n^{\ell+\ell'}$. This is because the first line of delta functions, $\delta_{j_1,j_2}\delta_{j_1',j_2'}\dots \delta_{j_{L-1},j_L}\delta_{j_{L-1}',j_L'}$, reduces the sum over the $L$ indices $j$ and the $L$ indices $j'$ down to just a sum over $\ell+\ell'$ indices $j$ and $\ell+\ell'$ indices $j'$. The second line of delta functions, $\delta_{j_1,j_{\tau(1)}'}\dots \delta_{j_L,j_{\tau(L)}'}$, cannot be made equivalent to the first line by any choice of $\tau$; in fact, the second line imposes all new constraints. Therefore, this line further reduces the sum over the $\ell+\ell'$ indices $j$ and the $\ell+\ell'$ indices $j'$ to just a sum over the $\ell+\ell'$ indices $j$ (or the $\ell+\ell'$ indices $j'$, but not both). Hence the highest power of $n$ that we can get from the summations over the $j$ and $j'$ is $n^{\ell+\ell'}$. Putting this together with the fact that asymptotically $\mathrm{Wg}(\pi, n)$ is at most $n^{-L} = n^{-2\ell-2\ell'}$, we find that any term coming from \cref{eq:weingarten-trw-trw} is at most $n^{-\ell-\ell'} \times (\text{dependence on }k)$. Therefore, any powers of $k$ that are less than $\ell+\ell'+2$ can be ignored; if the sum over $i$ and $i'$ yields a term that is $k^a$ for some $a \leq \ell+\ell'+1$, then that term will scale linearly or less with $n$. But from above, we already have terms that are proportional to $n^2$, and so terms that are linear or less in $n$ can be ignored.
\end{proof}

\medskip

Therefore, from \cref{eq:exact-variance}, we find that asymptotically
\begin{equation}
    \variance_{U\in\U(n)} S_2(U) = \sum_{d=2}^\infty \tanh^{2d}(2s) \parentheses{n^2 p_d(r) + n q_d(r) + t_d(r)} + \littleo{1},
\end{equation}
where $p_d$ is a polynomial of degrees $d+2$ through $2d$ in $r$, $q_d$ is a polynomial of degrees $d+1$ through $2d$ in $r$, and $t_d$ is a polynomial of degrees $d$ through $2d$ in $r$. The variance must be symmetric under $r\mapsto 1-r$ at every order in $s$ and $n$. Therefore each $p_d$, $q_d$, and $t_d$ must themselves be symmetric under $r \mapsto 1-r$. It then immediately follows from \cref{lem:symmetric-polynomial} that $t_d(r) = \omega^{(d)} \parentheses{r(1-r)}^d$ for some constants $\omega^{(d)}$. In the following lemma, we show that $p_d$ and $q_d$ must be the zero polynomial.

\medskip

\begin{lemma}
Let $d$ be a positive integer, and $f(r) = \sum_{j=d+1}^{2d} \gamma_j r^j$. If $f(r) = f(1-r)$, then $f(r) = 0$.
\end{lemma}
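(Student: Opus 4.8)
The plan is to reduce this statement immediately to \cref{lem:symmetric-polynomial}. First I would view $f$ as a polynomial whose degrees range from $d$ through $2d$, with the coefficient of $r^d$ equal to zero and the coefficient of $r^j$ equal to $\gamma_j$ for $d+1\le j\le 2d$. Applying \cref{lem:symmetric-polynomial} with $\ell=d$ (legitimate since $f(r)=f(1-r)$ by hypothesis), that lemma forces
\[
f(r)=\gamma_{2d}\,(-1)^d\bigl(r(1-r)\bigr)^d .
\]

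Next I would compare the coefficient of $r^d$ on both sides. On the left it is $0$, since $f$ has no $r^d$ term by assumption, while $\bigl(r(1-r)\bigr)^d=r^d(1-r)^d$ has lowest-degree term $r^d$ with coefficient $1$, so the right-hand side has $r^d$-coefficient $\gamma_{2d}(-1)^d$. Hence $\gamma_{2d}=0$, and substituting this back into the expression from \cref{lem:symmetric-polynomial} yields $f\equiv 0$, as claimed. (The small cases, e.g. $d=1$ where $f(r)=\gamma_2 r^2$, are subsumed by this argument.)

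There is essentially no obstacle here: the combinatorial content has already been done inside \cref{lem:symmetric-polynomial}, and all that remains is the one-line coefficient comparison above. If a self-contained argument were preferred, one could instead expand $f(r)-f(1-r)=0$ with the binomial theorem and equate coefficients of $r^0,\dots,r^d$, obtaining the homogeneous linear system $\sum_{j=d+1}^{2d}\binom{j}{i}\gamma_j=0$ for $0\le i\le d$. This system is overdetermined ($d+1$ equations in the $d$ unknowns $\gamma_{d+1},\dots,\gamma_{2d}$), and already any $d$ of its equations have full rank by the nonvanishing of the relevant binomial determinant --- as in \cref{lem:proof-of-alpha}, or via Corollary~11 of Ref.~\cite{grinberg_hyperfactorial} applied to the distinct integers $d+1,\dots,2d$ --- which forces every $\gamma_j=0$.
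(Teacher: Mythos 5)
Your proof is correct, but it takes a genuinely different route from the paper. You reduce the statement to \cref{lem:symmetric-polynomial} by padding $f$ with a vanishing $r^d$ coefficient, so that the lemma forces $f(r)=\gamma_{2d}(-1)^d\bigl(r(1-r)\bigr)^d$, and then a single coefficient comparison at $r^d$ (where $\bigl(r(1-r)\bigr)^d$ has coefficient $1$ but $f$ has coefficient $0$) kills $\gamma_{2d}$ and hence all of $f$. This is legitimate: \cref{lem:symmetric-polynomial} is proven earlier in the appendices, places no nonvanishing requirement on its leading or trailing coefficients, and is in fact already invoked in the sentence immediately preceding this lemma to handle the $t_d$ term. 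The paper instead proves the statement from scratch: it expands $f(r)-f(1-r)=0$ via the binomial theorem, extracts the homogeneous system $\sum_{i=d+1}^{2d}\binom{i}{j}\gamma_i=0$ for $0\le j\le d$, and argues that $d$ of these equations are linearly independent via the nonvanishing binomial determinant from \cref{lem:proof-of-alpha}, forcing the unique solution $\gamma_j=0$ --- which is essentially the ``self-contained'' alternative you sketch at the end. Your primary argument buys brevity and reuses machinery already in place (it makes transparent that $p_d$, $q_d$, and $t_d$ are all governed by the same symmetry lemma, with the extra degree constraint on $p_d$ and $q_d$ doing the annihilating); the paper's version is self-contained within the variance appendix at the cost of repeating the linear-algebra argument.
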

\begin{proof}
Expanding with the binomial theorem, we find
\begin{align}
    0
    &= f(r) - f(1-r)\\
    &= \sum_{j=d+1}^{2d} \gamma_j \parentheses{r^j - (1-r)^j}\\
    &= \sum_{j=d+1}^{2d} \gamma_j r^j - \sum_{j=d+1}^{2d} \gamma_j \sum_{i=0}^j (-r)^i \binom{j}{i}\\
    &= \sum_{j=d+1}^{2d} \gamma_j r^j -  \sum_{i=0}^{2d} (-r)^i \sum_{j=\max(i,d+1)}^{2d} \gamma_j  \binom{j}{i}\\
    &= -\sum_{j=0}^{d} (-r)^j \sum_{i=d+1}^{2d} \gamma_i \binom{i}{j} + \sum_{j=d+1}^{2d} r^j \parentheses{\gamma_j - (-1)^j \sum_{i=j}^{2d} \gamma_i \binom{i}{j}}.
\end{align}
This must be true to all orders in $r$, and hence we equate each degree of $r$ to zero. This gives us the conditions:
\begin{enumerate}
    \item For $0 \leq j \leq d$, $0 =\sum_{i=d+1}^{2d} \gamma_i \binom{i}{j}$;
    \item For $d+1 \leq j \leq 2d$, $\gamma_j = (-1)^j \sum_{i=j}^{2d} \gamma_i \binom{i}{j}$.
\end{enumerate}
Choosing any $d$ of the equations from the first condition gives a linear system of $d$ linearly independent equations. To verify this, one must show that $\det C \neq 0$, where $C$ is the $d \times d$ matrix with entries $C_{ij} = \binom{d+i}{j}$. This was shown in the proof of \cref{lem:proof-of-alpha}.

Therefore, we have $d$ linearly independent equations for $d$ variables. Hence, if there is a solution, then there is one \textit{unique} solution. We easily see that $\gamma_j =0$ is a solution, and therefore it is the solution. This gives $f(r) = 0$, completing the proof.
\end{proof}

\medskip

Since $p_d$ and $q_d$ are zero, we have therefore found that
\begin{equation}
    \lim_{n\to\infty}\variance_{U\in\U(n)} S_2(U) = \sum_{d=2}^\infty \omega^{(d)}\tanh^{2d}(2s) \parentheses{r(1-r)}^d,
\end{equation}
proving the first part of \cref{thm:renyi2-variance}. Equating this equation to \cref{eq:exact-variance}, we find that
\begin{align}
    \omega^{(d)}
    &= \lim_{n\to\infty} \parentheses{r(1-r)}^{-d}\sum_{\ell=1}^{d-1} \frac{1}{4\ell(d-\ell)} \parentheses{\Expval_U (\Tr W^\ell)(\Tr W^{d-\ell}) - \parentheses{\Expval_U \Tr W^\ell}\parentheses{\Expval_U \Tr W^{d-\ell}} }\\
    &= \lim_{n\to\infty}\parentheses{r(1-r)}^{-d}\sum_{\ell=1}^{d-1}\frac{1}{4\ell(d-\ell)}\covariance_{U\in\U(n)}\pargs{\Tr W^\ell, \Tr W^{d-\ell}},
\end{align}
where $\covariance$ is the covariance, and we know that $\omega^{(d)}$ is independent of $r$ and $n$. From our expressions for $\Tr W^\ell$ in terms of the Weingarten calculus, it follows that $\omega^{(d)} \in \bbQ$. Recall that $W$ has two factors of $U$ and two factors of $\bar U$. Hence, we can exactly compute $\omega^{(2)}$ by integrating over fourth moments of the Haar measure on the unitary group. To do this, we use the Mathematica package \texttt{RTNI} that symbolically computes expressions over the Haar measure \cite{fukuda_rtnisymbolic_2019}. This Mathematica package precomputes the symbolic expressions for $\mathrm{Wg}(\sigma, n)$ for $\sigma \in S_4$. From this, $\omega^{(2)}$ is a sum over powers of $\Tr \Pi = k$ with coefficients depending on $n$. One can then simplify this expression and take the limit $n\to\infty$ to find that $\omega^{(2)} = 1/2$. Our Mathematica code for this calculation is provided on GitHub \cite{joseph_t_iosue_glo_2022}.

\bibliographystyle{quantum}
\bibliography{references}

\end{document}